\newtheorem*{proposition*}{Proposition}
\newtheorem*{theorem*}{Theorem}
\newtheorem{assumption}{Assumption}
\newtheorem{lemma}{Lemma}
\newtheorem{theorem}{Theorem}
\newtheorem{definition}{Definition}
\newtheorem{problem}{Problem}
\newtheorem{remark}{Remark}
\newtheorem{proposition}{Proposition}
\definecolor{babypink}{rgb}{0.96, 0.76, 0.76}
\definecolor{bananayellow}{rgb}{1.0, 0.88, 0.21}
\definecolor{amethyst}{rgb}{0.6, 0.4, 0.8}
\definecolor{blizzardblue}{rgb}{0.67, 0.9, 0.93}
\definecolor{aquamarine}{rgb}{0.5, 1.0, 0.83}
\definecolor{aureolin}{rgb}{0.99, 0.93, 0.0}
\definecolor{aqua}{rgb}{0.0, 1.0, 1.0}
\definecolor{caribbeangreen}{rgb}{0.0, 0.8, 0.6}
\definecolor{chartreuse(web)}{rgb}{0.5, 1.0, 0.0}
\definecolor{amber(sae/ece)}{rgb}{1.0, 0.49, 0.0}
\definecolor{apricot}{rgb}{0.98, 0.81, 0.69}
\definecolor{lightmauve}{rgb}{0.86, 0.82, 1.0}
\definecolor{lightsalmon}{rgb}{1.0, 0.63, 0.48}
\definecolor{electricblue}{rgb}{0.49, 0.98, 1.0}
\definecolor{gray(x11gray)}{rgb}{0.75, 0.75, 0.75}
\newcommand{\EE}{\mathbb{E}}
\newcommand{\Ns}{\mathcal{N}}
\newcommand{\Os}{\mathcal{O}}
\DeclarePairedDelimiter\ceil{\lceil}{\rceil}
\DeclareMathOperator*{\argmin}{argmin}
\definecolor{green1}{rgb}{0.2,0.7,0.2}
\title{\LARGE \bf Weighted Age of Information based Scheduling for Large Population Games on Networks}
\author{Shubham~Aggarwal, Muhammad~Aneeq~uz~Zaman, Melih Bastopcu, \textit{Member, IEEE}, and Tamer~Ba{\c s}ar, \textit{Life Fellow, IEEE}
\thanks{Research of the authors is supported in part by ARO MURI Grant AG285 and in part by AFOSR Grant FA9550-19-1-0353.}
\thanks{Shubham Aggarwal and Muhammad Aneeq uz Zaman are with the Coordinated Science Laboratory and the Department of Mechanical Science and Engineering at the University of Illinois at Urbana-Champaign (UIUC); Melih Bastopcu is with the Coordinated Science Laboratory at UIUC;  Tamer Ba{\c s}ar is with the Coordinated Science Laboratory and the Department of Electrical and Computer Engineering at UIUC. (Emails:
        {sa57@illinois.edu, mazaman2@illinois.edu, bastopcu@illinois.edu, basar1@illinois.edu)} 
}
}
\tikzset{ remember picture,
   switch/.style = {rectangle,
                    draw,align=center,
                    label={below:#1},
   },
}
\newsavebox\mybox
\savebox\mybox{%
\tikz\draw[line width=0.7pt] (-0.4,0)--(0,0)
								(0,0)--(0.4,0.4);%
}
\begin{document}
\tikzstyle{rect} = [draw,rectangle,fill = white!20,minimum width = 3pt, inner sep  = 5pt]
\tikzstyle{line} = [draw, -latex]
\tikzstyle{dline} = [draw, dash dot, -latex]

\maketitle
\thispagestyle{empty}
\begin{abstract}
In this paper, we consider a discrete-time multi-agent system involving $N$ cost-coupled networked rational agents solving a consensus problem and a central Base Station (BS), scheduling agent communications over a network. Due to a hard bandwidth constraint on the number of transmissions through the network, at most $R_d<N$ agents can concurrently access their state information through the network. Under standard assumptions on the information structure of the agents and the BS, we first show that the control actions of the agents are free of any \emph{dual effect}, allowing for separation between estimation and control problems at each agent. Next, we propose a \textit{weighted age of information (WAoI)} metric for the scheduling problem of the BS, where the weights depend on the estimation error of the agents. The BS aims to find the optimum scheduling policy that minimizes the WAoI, subject to the hard bandwidth constraint. Since this problem is NP-hard, we first relax the hard constraint to a soft \emph{update-rate} constraint, and then compute an optimal policy for the relaxed problem by reformulating it into a Markov Decision Process (MDP). This then inspires a sub-optimal policy for the bandwidth constrained problem, which is shown to approach the optimal policy as $N \rightarrow \infty$.
Next, we solve the consensus problem using the mean-field game framework wherein we first design decentralized control policies for a limiting case of the $N$--agent system (as $N \rightarrow \infty$). By explicitly constructing the mean-field system, we prove the existence and uniqueness of the mean-field equilibrium. Consequently, we show that the obtained equilibrium policies constitute an $\epsilon$--Nash equilibrium for the finite agent system. Finally, we validate the performance of both the scheduling and the control policies through numerical simulations.
\end{abstract}
\section{Introduction}
With the emergence of time-critical applications such as real-time monitoring in surveillance  systems, autonomous vehicular systems, internet-of-things and cyberphysical security,  \cite{shamma2008cooperative,zhang2021age,sandberg2015cyberphysical}, \textit{networked control systems} have garnered a lot of attention over the past decades, and promise interesting research directions within both the communication and the controls community. Such systems (often referred to as networked multi-agent systems) involve large populations of spatially distributed agents, in which both feedforward and feedback signals are exchanged between system components such as sensors, controllers and actuators through an (often) band-limited communication network \cite{zhang2015survey}. Such a scenario allows for remote information sharing and decentralized execution of the designated tasks. However, while decentralization reduces the storage complexity of servers at each system component, limited information availability directly affects the control performance of each agent. Thus, appropriate information structures \cite{malikopoulos2022team} need to be assigned to each system component alongside \emph{timely} and \emph{accurate} transmission of time-sensitive sensor measurements to the corresponding control units. All these factors are critical in determining the design of the optimal scheduling and control policies to maximize the system performance while meeting the system requirements, which forms the major objective of this paper. 

In this paper, we consider a discrete-time problem among $N+1$ players ($N$ agents and a Base Station) where the $N$ cost-coupled agents are solving a consensus problem, as in previous works \cite{uz2020approximate,aggarwal2022linear}. Each agent constitutes two active decision makers--a controller and an estimator--jointly minimizing a quadratic cost function in a team setting. The agents have access to their state information through a wireless communication network, controlled by the Base Station (BS), which acts as a central planner (or a global resource allocator \cite{molin2014optimal}). A schematic diagram depicting the hierarchical structure between the $N+1$ agents is shown in Fig.~\ref{Fig:Inf_flow}. As in the case of the real world wireless communication systems, the medium connecting the BS with the agents has a limited bandwidth of $R_d ~(< N)$ units. As a consequence of this communication bottleneck, the agents may have intermittent access to their state information. Hence, the BS must efficiently schedule transmissions so as to minimize the agents' estimation errors caused by the intermittent communication and also to enable timely communication at the control units. For this purpose, we propose a novel Age of Information (AoI) based metric, called the \emph{Weighted}-AoI (WAoI), to optimally schedule the wireless communication between the agents. This poses a novel $N+1$ player game where the $N$ agents are individually trying to solve a consensus problem among themselves while the BS is trying to minimize an aggregate estimation error based metric (to be precisely defined later) using a scheduling policy.
Due to the presence of a large population of agents communicating over the network, we solve the consensus problem using a mean-field game (MFG) setting. Further, the scheduling problem, where the goal is to minimize the WAoI subject to the limited bandwidth constraint at the scheduler, is an NP-hard combinatorics problem, and thus we first solve a relaxed problem using a Markov Decision Process (MDP) formulation. This then inspires an asymptotically optimal solution to the original scheduling problem as the agents grow in number.

\textbf{Related Literature: } Ever since the seminal works \cite{feldbaum1961dual,witsenhausen1968counterexample,witsenhausen1971separation}\footnote{See \cite{basar2008variations} for variations of the problem considered in \cite{witsenhausen1968counterexample}.} established no-dual effect and separation properties in systems involving partial observations/limited information, a lot of the literature has been concerned with the developments on information structures in stochastic decision making problems \cite{ho1978teams,mahajan2012information,malikopoulos2022team}. Since networked systems involve remote decision making, (un)availability of information at each decision maker has a huge impact on the control performance of the system. A number of works have considered networked control problems involving both estimation and control with uninterrupted \cite{bansal1989simultaneous,tatikonda2004stochastic}, and intermittent communications \cite{ramesh2013design,molin2012optimality,antunes2019consistent,imer2010optimal}. In \cite{ramesh2013design}, the authors consider both the state and innovations-based event-triggered schedulers to transmit information over a contention-based multiple access network, and derive conditions ensuring separation between the scheduler, observer and the controller. While \cite{molin2012optimality} studies the optimality of an event-based scheduler under constraints on the number of resource acquisitions, and on the average number of resource acquisitions, in \cite{antunes2019consistent}, the authors show the better performance of a consistent event-based scheduling policy over a periodic scheduling policy for partially observed discrete-time systems in a single-agent setting. Finally, in \cite{imer2010optimal}, the authors derive information disclosure policies between an observer and an estimator under a bandwidth constraint over the number of transmissions, in a team setting. Additionally, most earlier works \cite[Chapters 6-8]{molin2014optimal} exclude the set of communication instants from the information structure of the decision maker, the inclusion of which while it conveys side information on the current plant state  (even in the absence of communication between sensor and controller), makes the multi-agent problem, harder to solve. The inclusion of the same is considered in \cite[Chapter 2]{molin2014optimal}, in \cite{maity2020minimal} for a single agent setting, and in \cite{ramesh2013design} for a multi-agent setting, but the considered overall design is only suboptimal. What also differentiates our setting from the ones in \cite{maity2020minimal,soleymani2017value} is that they consider local schedulers (or event-triggers) within each feedback loop while we consider continuous sensing (at each discrete instant) and a hierarchical structure between agents and the BS as in Fig. \ref{Fig:Inf_flow}. Further, the above-mentioned works contain no strategic interaction between the agents as in a game setting, and an additional challenge with the above designs is that of scalability to large population systems.

To appropriately handle the concerns posed by increasing number of agents in the population, there has been a rapid growth in interest in the MFG setting, since the early works \cite{huang2006large,lasry2007mean,huang2007large}, to solve multi-agent problems with strategic interactions. This is due to its efficiency in handling the scalability issue that emerges in finite population games as the number of agents increases. The key idea in MFGs is that, as the number of agents approaches infinity, the agents become indistinguishable and the effect of individual deviations on equilibrium disappears. This leads to an aggregation effect and the game problem consequently reduces to a stochastic optimal control problem of a generic agent alongside a consistency condition. While the above works consider continuous-time agent dynamics, the discrete-time counterpart is considered in \cite{uz2020approximate,uz2020reinforcement} and \cite{moon2014discrete}, among others. The LQ-MFG setting \cite{bensoussan2016linear,huang2018linear} with linear agent dynamics and quadratic cost functions (with consensus terms) serves as a standard, but significant benchmark for general MFGs. Most works in LQ-MFGs consider continuous and reliable communication, with the possible exception of \cite{moon2014discrete}, which considers passive (probabilistic) scheduling over unreliable channels. Under suitable assumptions on the probabilities of erasure of actuation and measurement signals, the paper derives approximate-Nash equilibrium strategies for the finite agent game. A recent work \cite{aggarwal2022linear} deals with a game where the agents are cost-coupled and have access to their state through a noisy intermittent (individual) channel. The problem in this paper, however, differs from the ones in \cite{moon2014discrete,aggarwal2022linear} since we \textit{actively} schedule transmissions over the transmission medium, and extend the unconstrained setting to the case where the agents are connected via a wireless-network structure with a limited bandwidth. We utilize the emerging notion of AoI to devise optimal scheduling policies over the network.

Age of information has been introduced in \cite{Kaul12a} to measure the timeliness of information in communication networks. After the initial works on AoI which mostly focus on the queueing aspect, scheduling problems in AoI have received significant attention recently. The maximum age first (MAF) policy has been shown to minimize the average AoI for a multi-source system where only one sample can be transmitted at a time with random delay \cite{Bedewy19} and over a channel with transmission errors \cite{Kadota18a}. In \cite{Beytur18a}, the authors consider Maximum-age-difference (MAD) policy for preemptive and non-preemptive scheduling policies and numerically observed its performance. In \cite{Zhong19}, it is shown that the Maximum Weighted Age Reduction (MWAR) policy minimizes a special class of age-penalty functions. The optimality of Whittle’s index policy to minimize AoI has been studied in \cite{maatouk2020optimality}. Age-optimal scheduling policies are obtained by using MDP formulation in \cite{tang2020scheduling, chen2021minimizing, Yun18, Hsu2020,hatami2022demand}.

Recently, AoI has also been studied in the context of networked control problems. In \cite{Nadeem2022}, the mean-square estimation (MSE) performance of a system is optimized through HARQ schemes by using MDP formulation. In \cite{ayan2019age}, the authors compare the performances of AoI based and Value of Information (VoI) based scheduling algorithms where VoI is measured through uncertainty reduction upon information delivery. While the paper \cite{Ayan2020} proposes a discounted error scheduler by using MDP approach for a truncated state-space of the AoI,
\cite{Ayan2021} studies a resource allocation problem to minimize MSE through AoI. Further, we note also that in contrast to \cite{Ayan2020}, we do not truncate the state space for deriving scheduling policies and the results in our work apply to an unbounded state space, whereby we show by analysis that the optimal policy is characterized as a threshold policy. Finally, \cite{wang2021value} has proposed a VoI based event-triggering policy which achieves a lower MSE compared to the AoI-based and periodic updating policies. Surveys \cite{SunSurvey, Yates20a} provide a more detailed literature review on AoI. Finally, we note that \cite{zhang2021age} considers an AoI motivated internet-of-things problem within the realm of MFGs and is formulated in the spirit of \cite{parise2014mean,grammatico2015decentralized}; however, it is fundamentally different from our problem since our area of focus is scheduling policy design from the scheduler to the individual systems and not on the broadcast of information over the uplink.


\textbf{Contribution:} The main contributions of this work and its comparison with available literature are as follows.
\begin{itemize}
    \item We solve a large population game problem involving $N$ linear networked dynamical agents interacting strategically with each other to minimize individual but coupled quadratic cost functions (as in \cite{moon2014discrete,aggarwal2022linear}) and a base station (BS) aiming to minimize a performance measure by scheduling communication over a hard-bandwidth constrained wireless medium.
    \item Inspired by the age of incorrect information (AoII) metric \cite{maatouk2020age1}, we introduce a weighted AoI metric as the performance measure for the BS, which is a function of the estimation error between the plant state and the controller state (which results due to intermittent information transmission by the BS to the controller) and the AoI at the controller. Such a metric is in contrast to those available in the literature, which either include only the current AoI as the cost \cite{kadota2018scheduling,tang2020scheduling,Bedewy19} to transmit information in a timely fashion or only the VoI to minimize the system uncertainty \cite{ayan2019age,wang2021value}. Additionally, metrics based on AoII are control unaware\footnote{This term is discussed in detail in the next section.} and do not involve the presence of a feedback signal. Thus, our metric is novel and appropriately motivated.
    \item We allow the agents to keep track of the set of scheduling instants as prescribed by the BS. With its incorporation into the information structure, we establish separation between the controller and the decoder by treating it as a special case of \cite{ramesh2013design}. Such an information structure is inspired from \cite{maity2020minimal}, where the authors, however, consider a single-agent problem. Further, it is in contrast to similar multi-agent problems \cite[Chapters 6-8]{molin2014optimal}, where such information is discarded.
    \item Since the optimal scheduling problem of the BS belongs to the class of multi-armed restless bandits problems \cite{maatouk2020optimality}, finding an optimal policy is not possible. Hence, we propose a suboptimal solution to the scheduling problem. We show that our proposed solution is asymptotically optimal in the limit as $N \rightarrow \infty$. This is in contrast to results in the literature which either consider concave or affine running costs for the scheduler \cite{chen2021optimizing,tang2020scheduling}. Further, the asymptotic results proposed in \cite{hatami2022demand} are oblivious to the rate of convergence in the stand-alone scheduling problem. On the contrary, in this work, the optimization problem posed at the scheduler is (naturally) non-convex and the proposed solution admits a  rate of convergence of the order of $\mathcal{O}\left( \frac{1}{\sqrt{N}}\right)$ for the AoI evolving in an unbounded state space.
    \item Finally, due to the difficulty of dealing with the consensus problem in a \emph{large} (but not infinite) population setting, we use the mean-field game framework to construct $\epsilon$-Nash strategies for the $N$--agent game problem. 
    Using the scheduling policy constructed above, we prove the existence of a unique mean-field equilibrium (MFE) and also establish its linearity. Furthermore, we prove that $\epsilon = \mathcal{O}\left(\frac{1}{\sqrt{\min_{\theta \in \Theta}|N_{\theta}|}}\right)$, where $|N_{\theta}|$ denotes the number of agents of type $\theta \in \Theta$, which is stronger than the asymptotic result in \cite{moon2014discrete}.
\end{itemize}

\textbf{Organization:} In Section \ref{sec:Prob_Form}, we formulate the $N+1$ player game, where the agents aim to achieve consensus and the BS aims to minimize the average WAoI. In Section \ref{sec:CSP}, we analyze the centralized scheduling problem of the BS and transform it into $N$ decentralized scheduling problems by relaxing the hard bandwidth limit. The hard bandwidth limit is first relaxed to a soft rate limit, and by using an MDP formulation, we are able to show that a randomized threshold policy is optimal for the relaxed problem. This policy then inspires an approximately optimal scheduling policy which is shown to be asymptotically optimal for the original bandwidth-limited scheduling problem (Section \ref{sec:DSP}). In Section \ref{sec:cons_prob}, we solve the consensus problem between agents by first considering an MFG (constituting infinitely many agents) and proving the existence and uniqueness of the mean-field equilibrium. Then, we show that the decentralized control policies obtained by using the MF analysis constitute an $\epsilon$--Nash equilibrium for the finite agent game problem. In Section \ref{sec:sims}, by using numerical analysis we demonstrate the empirical performance of both the control as well as the scheduling policies. The paper is concluded in Section \ref{sec:conclusion} with some major highlights.



\begin{figure}[t]
	\centerline{\includegraphics[width=1\columnwidth]{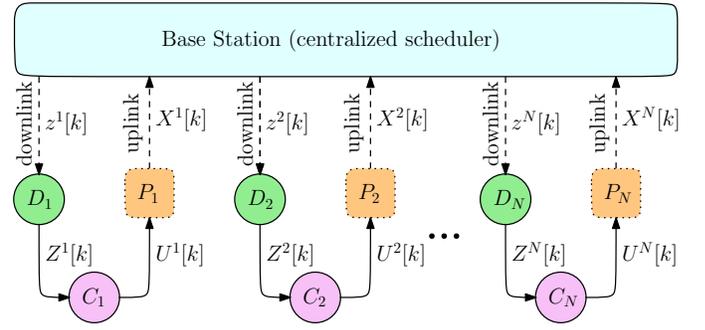}}
	\caption{\small{Schematic diagram of the networked system showing hierarchical levels of decision making at the Base Station and the agents. Solid blocks (Base Station, Decoders, Controllers) denote active decision makers while dotted blocks (plants) denote passive components. Dashed lines denote a wireless transfer medium while bold lines denote wired transfer. The decoder and the controller of each agent are within a team setting minimizing a quadratic objective while there is a game being played between $N$ agents. The BS on the top acts as a global decision maker minimizing its own cost function.}}
	\label{Fig:Inf_flow}
    \vspace{-0.4cm} 
\end{figure}

\textbf{Notations:} $\mathbb{Z}^+$ denotes the set of non-negative integers and $\mathbb{N} = \mathbb{Z}^+\setminus \{0\}$. For a matrix $S$ and vector $x$, $\|x\|^2_S:= x^\top S x$. Further, $[N]:= \{1,2, \cdots, N\}$ and $tr(\cdot)$ denotes the trace of its argument. The symbol $\Gamma^d := \Gamma^{d,1} \times \cdots \times \Gamma^{d,N}$ denotes the set of scheduling policies. While Euclidean norm for vectors or the induced 2-norm for matrices is denoted by $\|\cdot\|$, $\|\cdot\|_F$ denotes the Frobenius norm of its argument. For real functions, $\operatorname{A}(\alpha)=\mathcal{O}(\operatorname{B}(\alpha))$ is equivalent to the statement that there exists $\operatorname{K} >0$ such that $\lim_{\alpha \rightarrow \infty}\frac{|\operatorname{A}(\alpha)|}{|\operatorname{B}(\alpha)|} = K$. 
The expression $Y_{a:a+k}:=\{Y[a], \cdots, Y[a+k]\}, ~k \geq 0$. All empty summations are set to zero, for example, $\sum_{l=1}^0(\cdot) = 0$. For any two matrices $X,Y \geq 0$, $X \geq Y$ means $X-Y \geq 0$.

\section{System Model and Problem Formulation}
\label{sec:Prob_Form}
We start by formulating the $N+1$ player game, which can be expressed as 1) a consensus problem between $N$ agents and 2) a centralized scheduling problem of the BS.
\subsection{Consensus Problem}\label{subsec:cons_prob}
We consider a discrete-time $N$-agent game on an infinite horizon, communicating over a wireless network. 
Each agent $i$ consists of the sub-component tuple $(P_i,D_i,C_i)$, i.e., a plant, a decoder and a controller, as shown in Fig. \ref{Fig:Inf_flow}. The plant dynamics evolve according to a linear difference equation
\begin{align}\label{system}
\!\!\!X^i[k\!+\!1] \!=\! A(\theta_i)X^i[k] \!+\! B(\theta_i)U^i[k] \!+\! W^i[k],
\end{align}
for timestep $k \in \mathbb{Z}^+$ and agent $i \!\in\! [N]$. Here $X^i[k] \in \mathbb{R}^n$ and $U^i[k] \in \mathbb{R}^m$ are the state and control actions, respectively of agent $i$. The process noise for agent $i$, $ W^i[k] \in \mathbb{R}^n$ has zero mean and  bounded positive-definite covariance $K_W(\theta_i)$. Further, we assume that $\sup_{1 \leq i \leq p}K_W(\theta_i) \leq K_0$. Agent $i$'s initial state $X^i[0]$ is independent of the process noise and is assumed to have a symmetric density with mean $\nu_{\theta_i,0}$ and bounded positive-definite covariance matrix $\Sigma_x$. The time-invariant system matrices $A(\theta_i) \in \mathbb{R}^{n \times n}$ and $B(\theta_i) \in \mathbb{R}^{n \times m}$ depend on the agent type $\theta_i \in \Theta := \{\theta_1, \cdots, \theta_p\}$ which is chosen according to the empirical probability mass function $\mathbb{P}_N(\theta =\theta_i), ~~\theta \in \Theta$.
It is further assumed that $ \lvert \mathbb{P}_N(\theta) - \mathbb{P}(\theta) \rvert = \Os(1/N)$ for all $\theta \in \Theta$, where $\mathbb{P}(\theta)$ is the limiting distribution.

Next, for each $P_i$, its full-state information is relayed to the decoder $D_i$ through a two-hop network (called uplink and downlink) via a centralized BS, which is then communicated to the controller $C_i$ for generating an actuation signal. The uplink capacity of the wireless network is $R_{\text{u}} = N$ and the downlink capacity is $R_{d} < N$. This implies that only the downlink acts as a bottleneck for transmission of information from the plants to their respective controllers.
We now state the following assumption on information transmission over the network.
\begin{assumption}\label{As1}
We assume that,
\begin{enumerate}[(i)]
\item the wireless links connecting system components are perfect (free of any noise and transmission errors), and
\item the BS can send measurements to the corresponding controllers instantaneously. 
\end{enumerate}
\end{assumption}
As a result of Assumption \ref{As1}, the information can be transmitted from the plant to the controller for its next action without any delay, if the BS decides to send an update for that particular agent. Further, Assumption \ref{As1}(i) can be relaxed to include for instance, additive noise channels or erasure channels. This more general model would require additional back-channels from the decoders/estimators to the scheduler to communicate the best estimate of the state in order to entail no-dual effect as in  \cite{aggarwal2022linear}. This no-dual effect property will be instrumental in the design of both scheduling and control policies as shown later. Next, we note that in many information updating systems, the required transmission time for the update is usually less than the time unit used to measure the dynamics of a process \cite{chen2021minimizing}. Hence, Assumption~\ref{As1}(ii) is justified.\footnote{If we relax this assumption and allow delays in the update transmissions, one needs to keep track of the additional delay in the AoI of each agent, and we leave such a direction as future work. A preliminary result considering additional delays can be found in \cite{wang2021value}.}

 Then, under Assumption \ref{As1}, the state of the $i^{th}$ plant as observed by the $i^{th}$ decoder is given as:
\begin{align}\label{decoder_signal}
z^i[k] = \begin{cases}
X^i[k], & \text{if} ~~\zeta^{d,i}[k] = 1, \\
\varphi, & \text{if} ~~\zeta^{d,i}[k] = 0,
\end{cases}
\end{align}
where $\zeta^{d,i}[k] = 1$ denotes the scheduling decision such that \emph{current state} information is transmitted to the $i^{th}$ decoder (over the downlink) while $\zeta^{d,i}[k] = 0 $ stands for no transmission (or $\varphi$). 
Additionally, between transmission times, the decoder calculates the best minimum mean square estimate (MMSE), $Z^i[k]$, based on its information history $I^{d,i}[k]:=\left\lbrace z^i_{0:k},\zeta^{d,i}_{0:k},U^i_{0:k-1},Z^i_{0:k-1}\right\rbrace$, which equals the conditional expectation of the state given the information structure ($\mathbb{E}\left[X^i[k] \mid I^{d,i}[k] \right]$). The same is then sent to the controller for which an explicit expression is derived later. We adopt the convention that $z^i[-1],Z^i[-1],U^i[-1]$ are all 0 and $W^i[-1]  = X^i[0]- Z^i[0]$ for all $i$. Typically, in the game problems as formulated above, the control action of the agent $i$ can depend on other agents' state and control actions, and hence, the information history of the $i^{th}$ controller would be given by $I^{c,con,i}[k]:= \{U^i_{0:k-1},Z^i_{0:k}\}_{i \in [N]}$. Here, $I^{c,con,i}[k]$ denotes a centralized information structure whereby the controller has the knowledge of not only its own but also of other agents' controller states and actions. This entails that $\mathcal{M}^{c,con}_i = \{\pi_c^i \mid \pi_c^i ~\text{is adapted to } \sigma(I^{c,con,i}[s], ~s = 0,1, \cdots, k)\}$, where $\sigma(\cdot)$ is the $\sigma$--algebra adapted to its argument and $\mathcal{M}^{c,con}_i$ is the space of admissible centralized control policies for agent $i$ \cite{aggarwal2022linear}. We also note that we do not require a back-channel from the controller to the decoder since the control policy (designed later in Section \ref{sec:cons_prob}) can be input \textit{a priori} into the decoder; and consequently, it can compute the control actions at its own end.

Now, each agent $i$ aims to minimize its infinite-horizon average cost function
\begin{align}\label{LQT}
&J_i^N(\pi_c^i,\pi_c^{-i}) :=  \limsup_{T\rightarrow \infty} \\&  \frac{1}{T}\mathbb{E}\left[ \sum_{k=0}^{T-1}\|X^i[k]-\frac{1}{N}\sum_{j=1}^{N}{X^j[k]}\|^2_{Q(\theta_i)} + \|U^i[k]\|^2_{R(\theta_i)}\right],  \nonumber
\end{align}
where $Q(\theta_i) \geq 0$ and $R(\theta_i)>0$. The coupling between agents enters through the consensus-like term $\frac{1}{N}\sum_{j=1}^{N}{X^j[k]}$. The cost function penalizes deviations from the consensus term and large control effort. 
We define $\pi_c^{-i}:= (\pi_c^1,\cdots, \pi_c^{i-1}, \pi_c^{i+1}, \cdots, \pi_c^N)$, where $\pi_c^i := (\pi_c^i[1],~\pi_c^i[2], \cdots, ) \in \mathcal{M}^{c,con}_i$ denotes a control policy for the $i^{th}$ agent. 
Finally, the expectation in \eqref{LQT} is taken with respect to the noise statistics and the initial state distribution.
We note now that having access to (and keeping track of) the information of other agents in a large population setting is quite difficult, and hence, we will resort to the MFG framework to characterize decentralized control policies whereby decisions are made based on an agent's local information. This is defined and treated more formally in Section \ref{sec:cons_prob}. We also remark here that the estimator and the controller \textit{work together} in a team setting to minimize \eqref{LQT}, and as we will see later, can be designed independently of each other to lead to a globally optimal design. Finally, the BS acts as a centralized scheduler and schedules transmissions over the downlink using an optimal scheduling policy in order to minimize the WAoI across all agents, as we define below.
\subsection{Centralized Scheduling Problem}
Consider the most recently received observations by the controller $i$ as $z^i[s^i[k]]$, where $s^i[k] = \sup \{s \in \mathbb{Z}^+: s \le k, z^i[s] \neq \varphi\}$ denotes the latest transmission time. By definition, the AoI is the time elapsed since the generation time-stamp of the most recent packet at the plant. Thus, the AoI $\Delta^i[k]$ at the controller $C_i$ is given as
$\Delta^i[k] = k - s^i[k]$. 
More precisely, we have
\begin{align}
\Delta^i[k+1] = \begin{cases}
0, & \text{if} ~~\zeta^{d,i}[k] = 1, \\
\Delta^i[k] + 1, & \text{if} ~~\zeta^{d,i}[k] = 0.
\end{cases}
\end{align}
Thus, the scheduling problem with hard-bandwidth constraint to be solved by the BS is:
\begin{problem}[Control-Aware Constrained AoI minimization]\label{Problem1}
\begin{align*}
\inf_{\gamma^{d} \in \Gamma^{d}} ~ &  J^S (\gamma^{d}):= \limsup_{T \rightarrow \infty}\frac{1}{T} \mathbb{E}\Bigg[\frac{1}{N}\sum\limits_{k=0}^{T-1}\sum\limits_{i=1}^{N}\eta^{i}[k]\Delta^i[k]\Bigg]  \\
\mbox{s.t.} \quad & \sum_{i=1}^N \zeta^{d,i}[k]  \le R_{d}, ~\forall k,
\end{align*}
where 
${\gamma^{d}} = [\gamma^{d,1}, \cdots, \gamma^{d,N}]^\top$, $\Gamma^{d}$ is the space of scheduling policies across all agents, and $\zeta^{d,i}[k]$ is the scheduling decision chosen from an admissible policy $\gamma^{d,i}$, for all $i$. Further, we define $\eta^i[k]$ as the importance weight associated with agent $i$ given by $\eta^i[k]:= \mathbb{E}[\|e^i[k]\|^2]$, where $e^i[k] := X^i[k]-Z^i[k]$ is the estimation error between the plant state and the controller state at instant $k$. Finally, the expectation is taken over the stochasticity induced by random policies (which, as we will see later provides an optimal solution to the above problem). 
\end{problem}
\section{Centralized Scheduling Problem Analysis}
\label{sec:CSP}
We start this section by computing the best estimate at the decoder. Then, we show that since the system matrices are time-invariant, we can transform the importance weights (which are functions of the errors) into equivalent functions of the AoI for the respective plants. Also, to avoid cluttering notations, we use the shorthands $A_i:= A(\theta_i)$, $B_i:= B(\theta_i)$, and $K_{W^i}:= K_W(\theta_i)$, unless specified otherwise.

Based on the signal input to the decoder \eqref{decoder_signal}, the decoder computes the best estimate of the state as
\begin{align*}
    Z^i[k] = \begin{cases}
X^i[k], & \text{if} ~~\zeta^{d,i}[k] = 1, \\
\mathbb{E}\left[X^i[k] \mid I^{d,i}[k], \zeta^{d,i}[k] = 0 \right], & \text{if} ~~\zeta^{d,i}[k] = 0,
\end{cases}
\end{align*}
where
\begin{align}
    & \mathbb{E}\left[X^i[k] \mid I^{d,i}[k], \zeta^{d,i}[k] = 0 \right] =\nonumber \\
    &  A_i\mathbb{E}\left[X^i[k-1]\!\mid \!I^{d,i}[k-1]\right] \!+ \!B_iU^i[k-1] \!+\! \hat{W}^i[k-1],
\end{align}
where $\hat{W}^i[k-1]:=\mathbb{E}\left[ W^i[k-1] \mid  \zeta^{d,i}[k] = 0 \right]$. This yields
\begin{align*}
    & \!Z^i[k] \!= \!\!\begin{cases}
\!X^i[k], & \!\!\!\!\!\text{if} ~\zeta^{d,i}[k] \!= \!1, \\
\!A_iZ^i[k-1]\! + \!B_iU^i[k-1]\! +\!\hat{W}^i[k-1], & \!\!\!\!\!\text{if} ~\zeta^{d,i}[k] \!= \!0.
\end{cases}
\end{align*}
Note that the presence of transmission instants in the conditioning leads to the extra term $\hat{W}^i[k-1]$. While this conveys additional information on the state of the plant in the absence of any communication between the BS and the decoder \cite{maity2020minimal}, it is typically hard to compute optimally. However, we show now that this term equals 0 due to the setup in Fig. \ref{Fig:Inf_flow}. 

\begin{figure}[t]
	\centerline{\includegraphics[width=0.80\columnwidth]{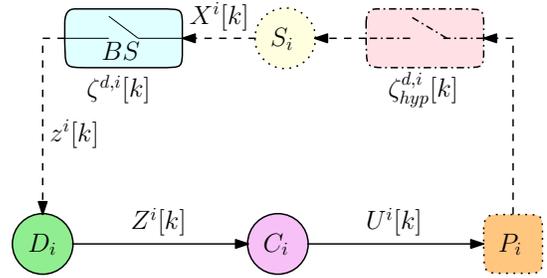}}
	\caption{\small{Focused view of a single agent in Fig. \ref{Fig:Inf_flow}. Solid blocks (Base Station $BS$, Decoder $D_i$, Controller $C_i$) denote active decision makers while dotted block (plant $P_i$) denotes passive component. Dashed lines denote a wireless transfer medium while bold lines denote wired transfer.  Additionally, a \textit{hypothetical local scheduler} (dash-dotted and pink colored box) is inserted between the plant and the BS which connects both if $\zeta_{hyp}^{d,i}[k] = 1$.
	}}
	\label{Fig:magnified}
    \vspace{-0.4cm} 
\end{figure}

Consider the setup shown in Fig.~\ref{Fig:magnified}, where we focus on a single agent and additionally add a hypothetical local scheduler (dash-dotted) within the feedback loop, generating an output of $\zeta_{hyp}^{d,i}[k] \equiv 1, \forall k \geq 0$. Thus, we can define $\zeta_{hyp}^{d,i}[k]:= f(e^i[k])$, which is clearly an even mapping of the argument (i.e., $f(r) = f(-r)$). It then follows from \cite{ramesh2013design} and assumptions of symmetric densities of initial state and the noise process that $\hat{W}^i[k-1] = 0$. Hence, the optimal decoder is given by
\begin{align}\label{Recursive}
Z^i[k] = \begin{cases}
X^i[k], & \text{if} ~~\zeta^{d,i}[k] = 1, \\
A_iZ^i[k-1] \!+\! B_iU^i[k-1], & \text{if} ~~\zeta^{d,i}[k] = 0,
\end{cases}
\end{align}
 where $A_iZ^i[k-1] + B_iU^i[k-1] $ can be thought of as the recursive estimate calculated between transmission instants based on the information history. The signal $Z^i[k]$ can now be computed easily using \eqref{Recursive}. We summarize the above results in the following lemma, which entails a closed form of expression of the WAoI as a function of the current AoI of the system.

\begin{lemma}\label{L1}
The estimation error $e^i[k]$ for all agents is independent of the control inputs, and hence there is no dual effect of control \cite{aggarwal2022linear}. Moreover,
\begin{align}\label{error_definition}
e^i[k] = \begin{cases}
0, & \text{if} ~~\zeta^{d,i}[k] = 1, \\
\sum_{l=1}^{\Delta^i[k]}{A_i^{l-1}W^i[k-l]}, & \text{if} ~~\zeta^{d,i}[k] = 0,
\end{cases}
\end{align}
and the covariance of the estimation error can be formulated in terms of the AoI, i.e., $\mathbb{E}[\|e^i[k]\|^2] = h(\Delta^i[k],A_i,K_{W^i})$, which is given in \eqref{temp1}.
\end{lemma}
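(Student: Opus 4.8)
The plan is to prove Lemma~\ref{L1} in three stages: first establish the recursive error formula, then derive the closed-form sum, and finally compute the variance as a function of the AoI.

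\textbf{Step 1 (No dual effect and error recursion).} First I would subtract the optimal decoder recursion \eqref{Recursive} from the plant dynamics \eqref{system} to obtain a recursion for $e^i[k] = X^i[k] - Z^i[k]$. When $\zeta^{d,i}[k]=1$, the decoder receives the true state, so $e^i[k] = X^i[k] - X^i[k] = 0$ immediately. When $\zeta^{d,i}[k]=0$, we have $X^i[k] = A_i X^i[k-1] + B_i U^i[k-1] + W^i[k-1]$ and $Z^i[k] = A_i Z^i[k-1] + B_i U^i[k-1]$, so the control term $B_i U^i[k-1]$ cancels and we get the clean recursion $e^i[k] = A_i e^i[k-1] + W^i[k-1]$. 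The cancellation of $U^i$ is exactly the statement that the estimation error does not depend on the control inputs, establishing the no-dual-effect property; I would emphasize that this relied crucially on $\hat{W}^i[k-1]=0$, which was already justified via the hypothetical-scheduler argument and the symmetry assumptions citing \cite{ramesh2013design}.

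\textbf{Step 2 (Unrolling to closed form).} Next I would unroll the recursion $e^i[k] = A_i e^i[k-1] + W^i[k-1]$ backward in time. Since $\Delta^i[k]$ counts the number of steps since the last successful transmission, the error was reset to zero at time $k - \Delta^i[k]$ (i.e., $e^i[k-\Delta^i[k]] = 0$). Iterating the recursion from that reset point forward for $\Delta^i[k]$ steps and discarding the vanishing initial term yields $e^i[k] = \sum_{l=1}^{\Delta^i[k]} A_i^{l-1} W^i[k-l]$, matching \eqref{error_definition}. A quick induction on $\Delta^i[k]$ confirms the formula, and the empty-sum convention stated in the Notations handles the $\Delta^i[k]=0$ (just-transmitted) case consistently.

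\textbf{Step 3 (Variance as a function of AoI).} Finally I would compute $\mathbb{E}[\|e^i[k]\|^2] = \mathbb{E}[\tr(e^i[k] e^i[k]^\top)]$. Expanding the square of the sum and using that the noise terms $W^i[k-l]$ are zero-mean and mutually uncorrelated across distinct time indices, all cross terms vanish, leaving $\mathbb{E}[\|e^i[k]\|^2] = \sum_{l=1}^{\Delta^i[k]} \tr\!\left(A_i^{l-1} K_{W^i} (A_i^{l-1})^\top\right)$. This is manifestly a function $h(\Delta^i[k], A_i, K_{W^i})$ of only the AoI and the (type-dependent) system data, which is the assertion of the lemma and presumably the content of the referenced equation \eqref{temp1}.

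\textbf{Main obstacle.} The computational steps are routine; the genuinely delicate point is the no-dual-effect claim, specifically the justification that conditioning on the scheduling history introduces no control dependence into the estimate — i.e., that $\hat{W}^i[k-1]=0$. This is where one must carefully invoke the symmetry of the noise and initial-state densities together with the even-mapping structure of the hypothetical scheduler so that the conditional mean of the innovation vanishes; since the excerpt has already dispatched this via \cite{ramesh2013design}, I would treat it as given and simply cite it, but it is the conceptual crux on which the clean separation between estimation and control rests.
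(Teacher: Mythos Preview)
Your proposal is correct and follows essentially the same approach as the paper: the paper unrolls the decoder recursion \eqref{Recursive} to write $Z^i[k] = A_i^{\Delta^i[k]}z^i[s^i[k]] + \sum_{l=1}^{\Delta^i[k]}A_i^{l-1}B_iU^i[k-l]$ and then subtracts from $X^i[k]$, whereas you first form the error recursion $e^i[k] = A_i e^i[k-1] + W^i[k-1]$ and then unroll it --- a cosmetic difference in ordering. Your variance expression agrees with the paper's \eqref{temp1} by the cyclic property of the trace, and your discussion of the no-dual-effect point is, if anything, more explicit than the paper's own proof, which simply takes $\hat{W}^i[k-1]=0$ as already established in the preceding text.
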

\begin{proof}
We note that the decoder output at a non-transmission instant $k$, can be rewritten by using \eqref{Recursive} as:
\begin{align*}
Z^i[k] = A_i^{\Delta^i[k]}z^i[s^i[k]] + \sum_{l=1}^{\Delta^i[k]}A_i^{l-1}B_iU^i[k-l].
\end{align*}

The result then follows immediately from the definition of $e^i[k]$. In addition, by using \eqref{error_definition}, the covariance of the estimation error is given by
\begin{align}\label{temp1}
h(\Delta^i[k],A_i,K_{W^i}) = \sum_{l=1}^{\Delta^i[k]}{tr\left({{A_i^{l-1}}^\top{A_i^{l-1}}K_{W^i}}\right)},
\end{align}
which completes the proof.
\end{proof}

Thus, Problem \ref{Problem1} can be equivalently written as:

\begin{problem}[Control-Aware Constrained AoI minimization]\label{Problem2}
\begin{align}
 \inf_{\gamma^{d} \in \Gamma^{d}} ~ & {J}^S(\gamma^{d}) = \nonumber \\ 
&\hspace{0.1cm} \limsup_{T \rightarrow \infty} \frac{1}{T} \mathbb{E}\Bigg[\frac{1}{N}\sum_{k=0}^{T-1}\sum_{i=1}^{N}\underbrace{h(\Delta^i[k],A_i,K_{W^i})\Delta^i[k]}_{=: g(\Delta^i[k],A_i,K_{W^i})}\Bigg] \nonumber\\
\mbox{s.t.} \quad & \sum_{i=1}^N \zeta^{d,i}[k]  \le R_{d}, ~\forall k.\label{AoI-constraint1}
\end{align}
\end{problem}
We note that Problem \ref{Problem2} does not only depend on the AoI but also on the system parameters of each agent such as $A_i$ and $K_{W^i}$. This entails that $g(\Delta^i[k],A_i,K_{W^i})$, which is nonlinear in the AoI of an agent, takes into account the system dynamics as well. Hence, we refer to it as \emph{Control-Aware AoI} \cite{ayan2019age}. Additionally, as a consequence of this reformulation, the scheduler does not have to store the plant states and the controller actions for any agent. Thus, no additional storage space is required for the scheduler.

\begin{remark}
Note that at this point, it may be tempting to remove the AoI term, $\Delta^i[k]$, from multiplication in the cost function since the error metric $h(\Delta^i[k],A_i,K_{W^i})$ is already a function of it. However, we note that if this term is removed, then, in general, the error over an infinite horizon may approach a finite limit (for instance, consider stable agents). This in turn may cause the AoI of those agents to approach infinity since a trigger for information transmission may never be generated for these agents (as will also be clear from Lemma \ref{L3} and the proof of Theorem \ref{Theorem_det} later). This phenomenon is additionally seen from the empirical analysis in \cite{ayan2019age}. Thus, the use of a weighted metric as in Problem \ref{Problem2}  penalizes both the error as well as the AoI and is appropriately justified.
\end{remark}
In the sequel, we first solve the scheduling problem in Section~\ref{sec:DSP} and then solve the consensus problem in Section \ref{sec:cons_prob}. We start by observing that Problem \ref{Problem2} is an intractable combinatorics problem due to the presence of the bandwidth constraint \eqref{AoI-constraint1}, and is difficult to handle. Thus, we first present a relaxed optimization problem involving a time-averaged soft rate constraint on the frequency of transmissions. This entails that more than $R_{d}$ users are able to transmit over the channel at some times as long as the average transmissions satisfy the average constraint over the infinite horizon. Consequently, we bring the $N$--agent scheduling problem into a single agent discrete-time MDP to find the optimal scheduling strategies.
Next, we define a relaxed version of Problem \ref{Problem2} as follows:
\begin{problem}[Relaxed Control-Aware Constrained AoI minimization]\label{Problem3}
\begin{align}
\inf_{\gamma^{d} \in \Gamma^{d}}  \quad & \limsup_{T \rightarrow \infty}\frac{1}{T} \mathbb{E}\Bigg[\frac{1}{N}\sum_{k=0}^{T-1}\sum_{i=1}^{N}g(\Delta^i[k],A_i,K_{W^i})\Bigg] \nonumber \\
\mbox{s.t.} \quad & \limsup_{T \rightarrow \infty} \frac{1}{T}\sum_{k=0}^T\sum_{i=1}^N \zeta^{d,i}[k]  \le R_{d}.\label{AoI-constraint2}
\end{align}
\end{problem}
Note that the above problem consists of a time-averaged soft rate constraint \eqref{AoI-constraint2} over the allowed transmissions rather than a bandwidth constraint as in \eqref{AoI-constraint1}. For this problem, we introduce the Lagrangian function as:
\begin{align*}
& \mathcal{L}(\gamma^{d},\lambda)  :=  \limsup_{T \rightarrow \infty}\frac{1}{T} \mathbb{E}\Bigg[\frac{1}{N}\sum_{k=0}^{T-1}\sum_{i=1}^{N}\bigg[g(\Delta^i[k],A_i,K_{W^i})   \\ &  \hspace{1.5cm}  +\lambda \left(\zeta^{d,i}[k] - \frac{R_{d}}{N}\right)\bigg]\Bigg],
\end{align*}
where $\lambda \ge 0$ is the Lagrange multiplier \cite{Boyd04}. Such a multiplier can be thought of as the cost of scheduling for each agent over the channel. Thus, for a fixed $\lambda$, the decoupled single user optimization problem is defined as follows:
\begin{problem}[Decoupled Control-Aware Unconstrained AoI minimization]\label{Problem4}
\begin{align}
\!\inf_{\gamma^{d,i} \!\in \Gamma^{d,i}} \!\limsup_{T \rightarrow \infty} \!\frac{1}{T}\mathbb{E}\Bigg[\!\sum_{k=0}^{T\!-\!1} \!g(\Delta^i[k],A_i,K_{W^i}\!)\!+\! \lambda\zeta^{d,i}[k]\Bigg]\nonumber .
\end{align}
\end{problem}
Since Problem \ref{Problem4} is solved for a single user, we henceforth drop the superscript $i$ until mentioned otherwise.

\section{Decentralized Scheduling Problem}
\label{sec:DSP}
Next, we formulate Problem \ref{Problem4} into a 
discrete-time controlled MDP $\operatorname{M}$, defined by the quadruplet $\operatorname{M}:= (\mathcal{S}, \mathcal{A}, \mathcal{P}, C)$. The \textit{state space} $\mathcal{S} = \mathbb{Z}^+$ is the set of all possible AoI of the agent 
and is countably infinite. The \textit{action space} is $\mathcal{A} = \{0,1\}$, 
where $a=1$ denotes that the agent is connected over the channel while $a=0$ stands for no transmission. Note here that $a$ is different from $\zeta^{d}[k]$, which is constrained by the hard-bandwidth limit. The \textit{probability transition function} $\mathcal{P}$ denotes the evolution of the states of the controlled system. When $a =0$, we have $\mathcal{P}(\Delta \rightarrow \Delta + 1) = 1$. When $ a=1$, we have $\mathcal{P}(\Delta \rightarrow 0) = 1$. We further note that although the states evolve deterministically, 
writing them in the form of an MDP will 
simplify the notation. The \textit{one-stage cost} $C(\Delta, a) = g(\Delta,A,K_{W}) + \lambda a$ denotes the cost incurred when an action $a$ is taken at the state $\Delta$.
Next, we formally define the decentralized scheduling problem.
\begin{problem}[Decentralized Scheduling Problem]\label{Problem5}
\begin{align}\label{cost-to-go}
\hspace{-0.1cm} \inf_{\gamma^{d} \in \Gamma^{d}} \!V(\Delta,\gamma^d) \!:=\!\left\lbrace\!\limsup_{T \rightarrow \infty} \frac{1}{T}\mathbb{E}\!\Bigg[\sum_{k=0}^{T-1} C(\Delta[k], a[k])\Bigg]\right\rbrace.
\end{align}
%
\end{problem}
%
In the following, we provide the solution to Problem~\ref{Problem5}.
\subsection{Single-agent Deterministic Scheduling Policy}
We first prove the following lemma, which is used in the main theorem characterizing optimality of a threshold policy.
\begin{lemma}\label{L3}
$g(\Delta[k],A,K_{W})$ has the following properties:
\begin{enumerate}
\item It is an increasing function of $\Delta[k]$ between any two successive transmission instants $k_1$ and $k_2$ such that $k_1 \leq k \leq k_2$.
\item $\lim\limits_{\Delta[k] \rightarrow \infty} g(\Delta[k],A,K_{W}) =\infty$.
\end{enumerate}
\end{lemma}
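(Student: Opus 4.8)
The plan is to argue directly from the closed forms established in Lemma~\ref{L1}, namely $h(\Delta,A,K_{W}) = \sum_{l=1}^{\Delta}\tr\!\left(A^{l-1\top}A^{l-1}K_{W}\right)$ and $g(\Delta,A,K_{W}) = \Delta\cdot h(\Delta,A,K_{W})$, and to treat $g$ simply as a function of the integer argument $\Delta$, since between two successive transmission instants $k_1 \le k \le k_2$ the AoI increments by one at each step, so monotonicity in $k$ reduces to monotonicity in $\Delta$.

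The first step is to record that every summand of $h$ is nonnegative and the leading one is strictly positive. Using the cyclic property of the trace together with a symmetric positive-definite square root $K_{W}^{1/2}$, I rewrite each term as $\tr\!\left(A^{l-1\top}A^{l-1}K_{W}\right) = \tr\!\left(K_{W}^{1/2}A^{l-1\top}A^{l-1}K_{W}^{1/2}\right) = \|A^{l-1}K_{W}^{1/2}\|_F^2 \ge 0$. For $l=1$ this equals $\|K_{W}^{1/2}\|_F^2 = \tr(K_{W}) > 0$ because $K_{W}$ is positive definite. Hence $h(\cdot)$ is nondecreasing in $\Delta$ and obeys the uniform lower bound $h(\Delta,A,K_{W}) \ge \tr(K_{W}) > 0$ for all $\Delta \ge 1$.

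For Property~1, I would compute the forward difference. Since $h(\Delta+1,A,K_{W}) = h(\Delta,A,K_{W}) + \|A^{\Delta}K_{W}^{1/2}\|_F^2$, a short calculation gives $g(\Delta+1,A,K_{W}) - g(\Delta,A,K_{W}) = h(\Delta,A,K_{W}) + (\Delta+1)\|A^{\Delta}K_{W}^{1/2}\|_F^2$. Both terms on the right are nonnegative and the first is strictly positive for $\Delta \ge 1$ by the bound above, so $g$ is strictly increasing; consequently $g(\Delta[k],A,K_{W})$ increases with $k$ on any inter-transmission interval. For Property~2, the same lower bound yields $g(\Delta,A,K_{W}) = \Delta\cdot h(\Delta,A,K_{W}) \ge \Delta\,\tr(K_{W})$, and since $\tr(K_{W}) > 0$ the right-hand side diverges, giving $\lim_{\Delta \to \infty} g(\Delta,A,K_{W}) = \infty$.

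There is no serious obstacle here; the only point that deserves care is that $h$ itself need \emph{not} diverge --- for a Schur-stable or nilpotent $A$ the error covariance $h$ converges to a finite limit --- so the divergence of $g$ cannot be inherited from $h$ and must instead be extracted from the multiplicative factor $\Delta$. This is exactly the phenomenon flagged in the Remark preceding Problem~\ref{Problem3}, and it is precisely why the uniform positive lower bound $h \ge \tr(K_{W})$, rather than any growth of $h$, is the operative estimate for Property~2.
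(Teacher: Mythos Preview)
Your proof is correct and follows essentially the same approach as the paper: both rely on the factorization $g=\Delta\cdot h$ and on the nonnegativity of each summand $\tr(A^{l-1\top}A^{l-1}K_W)$, from which monotonicity and divergence are immediate. Your version is more explicit---you isolate the uniform lower bound $h\ge\tr(K_W)>0$ and compute the forward difference of $g$---whereas the paper simply notes the nonnegativity of the summands (via an eigenvalue argument rather than your Frobenius rewriting, though it uses exactly your rewriting later in \eqref{Frobenius_exp}) and concludes both properties at once.
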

\begin{proof}
Note that $g(\Delta[k],A,K_{W}) = h(\Delta[k],A,K_{W}) \Delta[k]$. Since $tr(\cdot)$ is the sum of eigenvalues of the argument matrix, we see that the quantity $tr\left({{A^{l-1}}^\top{A^{l-1}}K_{W}}\right)$ is always non-negative since the argument matrix product has non-negative eigenvalues. Thus, for any $\Delta[k_1] \le \Delta[k_2]$, we have $h(\Delta[k_1],A,K_{W}) \le h(\Delta[k_2],A,K_{W})$. Since $g(\cdot,\cdot,\cdot) =h(\cdot,\cdot,\cdot)\Delta[k],$ we get both 1) and 2).
\end{proof}

\begin{definition}
A scheduling policy $\gamma^d$ for the MDP $\operatorname{M}$ is $g(\Delta)$--optimal if it infimizes the time-average cost $V(\Delta,\gamma^d).$ Moreover, we define the optimal cost as $\sigma^* = \inf_{\gamma^d \in \Gamma^d} V(\Delta,\gamma^d).$
\end{definition}

We first recall that the expectation in \eqref{cost-to-go} is only due to the stochastic nature of the scheduling policy. However, in the next two subsections, we characterize a deterministic optimal solution to Problem \ref{Problem5}. Thus, we remove the expectation operator in \eqref{cost-to-go} and proceed with further analysis. First, we have the following theorem.

{\begin{theorem}\label{Theorem_det}
There exists a $g(\Delta)$--optimal stationary deterministic policy solving the Decentralized Scheduling Problem \ref{Problem5} for a fixed $\lambda$. Further, it has a threshold structure, i.e., there exists $\tau := \tau(\lambda,A,K_{W})$ such that
\begin{align}\label{Det_Th_policy}
a = \left\{ \begin{array}{ll}
1, &\Delta \ge \tau, \\ 
0, & \Delta < \tau. \\
\end{array}
\right.
\end{align}
\end{theorem}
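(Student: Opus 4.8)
The plan is to route through the associated discounted-cost problem and then take the vanishing-discount limit, using the monotonicity of $g$ established in Lemma~\ref{L3} to propagate a monotone (hence threshold) structure to the average-cost problem. First I would fix $\beta \in (0,1)$ and introduce the $\beta$-discounted value function $V_\beta$, the minimal nonnegative solution of the discounted Bellman equation
\begin{equation*}
V_\beta(\Delta) = g(\Delta,A,K_W) + \min\bigl\{\beta V_\beta(\Delta+1),\; \lambda + \beta V_\beta(0)\bigr\}.
\end{equation*}
Since $C \ge 0$ and the ``always transmit'' policy has finite discounted cost, $V_\beta$ is well defined and obtainable by value iteration. I would then show by induction on the value-iteration iterates that $V_\beta$ is nondecreasing in $\Delta$: the value-iteration operator maps nondecreasing functions to nondecreasing functions because $g(\Delta)$ is nondecreasing (Lemma~\ref{L3}, part~1), $\beta V_\beta(\Delta+1)$ inherits monotonicity, and the $\min$ with the constant $\lambda + \beta V_\beta(0)$ preserves it. Monotonicity immediately gives a threshold-optimal discounted policy: $a=1$ is optimal exactly when $V_\beta(\Delta+1) \ge V_\beta(0) + \lambda/\beta$, whose left side is nondecreasing, so the transmit region is an up-set $\{\Delta \ge \tau_\beta\}$.

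Next I would run the vanishing-discount argument. Setting $h_\beta(\Delta):= V_\beta(\Delta)-V_\beta(0)$, the single-reset deterministic structure yields clean uniform bounds: from the suboptimal ``transmit now'' action, $V_\beta(\Delta) \le g(\Delta,A,K_W) + \lambda + \beta V_\beta(0)$, so that $0 \le h_\beta(\Delta) \le g(\Delta,A,K_W) + \lambda$ for all $\beta$ (the lower bound from monotonicity). These bounds verify the standard regularity conditions (Sennott's conditions) guaranteeing a subsequence $\beta_n \uparrow 1$ along which $(1-\beta_n)V_{\beta_n}(0) \to \sigma^*$ and $h_{\beta_n} \to V$ pointwise, with $(\sigma^*,V)$ solving the average-cost optimality equation
\begin{equation*}
\sigma^* + V(\Delta) = g(\Delta,A,K_W) + \min\bigl\{V(\Delta+1),\; \lambda + V(0)\bigr\}.
\end{equation*}
As a pointwise limit of nondecreasing functions, $V$ is nondecreasing, so the optimality equation selects $a=1$ iff $V(\Delta+1) \ge \lambda + V(0)$, again an up-set; this defines $\tau = \tau(\lambda,A,K_W)$ and yields the claimed threshold form, whose average-cost optimality among all (history-dependent, randomized) policies follows from the accompanying verification theorem.

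Finiteness of $\tau$ is where I would invoke part~2 of Lemma~\ref{L3}: since $g \to \infty$, the never-transmit policy has running average diverging to infinity, whereas any finite-threshold policy has finite average cost, because a renewal cycle of length $\tau+1$ costs $\sum_{l=0}^{\tau} g(l) + \lambda$, giving $\sigma(\tau) = (\sum_{l=0}^{\tau}g(l)+\lambda)/(\tau+1) < \infty$; hence $\sigma^* < \infty$ and the optimal policy must transmit, forcing $\tau < \infty$. The main obstacle is the countably infinite state space with unbounded cost: the delicate part is not the monotonicity (which Lemma~\ref{L3} hands us) but verifying the conditions that license the vanishing-discount limit and the verification theorem, since neither existence of a solution to the optimality equation nor optimality over all policies is automatic as it would be in the finite-state case. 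The deterministic, single-reset geometry is exactly what makes this tractable, reducing the relative value function to the explicit envelope $h_\beta(\Delta)\le g(\Delta,A,K_W)+\lambda$ and, in fact, collapsing the whole problem to the minimum-mean-cycle computation $\sigma^* = \min_{m\ge 1}(\sum_{l=0}^{m-1}g(l)+\lambda)/m$, which I would use as an independent cross-check on both $\sigma^*$ and the finiteness of $\tau$.
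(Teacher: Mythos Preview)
Your proposal is correct and follows essentially the same route as the paper: both pass through the $\alpha$-discounted problem, establish monotonicity of $V_\alpha$ in $\Delta$ via value iteration, deduce the threshold structure for the discounted optimal policy, and then invoke Sennott's vanishing-discount conditions (using the same bound $V_\alpha(\Delta)-V_\alpha(0)\le g(\Delta,A,K_W)+\lambda$ obtained from the ``transmit now'' action) to carry the threshold structure over to the average-cost problem. Your explicit treatment of the finiteness of $\tau$ via Lemma~\ref{L3}(2) and the renewal/minimum-mean-cycle cross-check are nice additions that the paper leaves implicit, but the core argument is the same.
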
}
\begin{proof}
The proof of Theorem \ref{Theorem_det} requires some preliminary build-up.  
Similar to the time-average cost function as in \eqref{cost-to-go}, we introduce an $\alpha$--discounted cost function on an infinite horizon, with $0 <  \alpha < 1$, as follows:
\begin{align}\label{Disc_Cost}
V_{\alpha}(\Delta,\gamma^d) = \sum_{k=0}^{\infty}\alpha^k C(\Delta[k],a[k]),~\Delta[0] = \Delta.
\end{align}

\begin{definition}
A scheduling policy $\gamma^d$ of the MDP $\operatorname{M}$ is $\alpha$--optimal if it infimizes the total $\alpha$--discounted cost $V_{\alpha}(\Delta,\gamma^d)$ in \eqref{Disc_Cost}. Further, we define $V_{\alpha}(\Delta) = \inf_{\gamma^d \in \Gamma^d} V_{\alpha}(\Delta,\gamma^d)$ as  the optimal discounted cost.
\end{definition}
With the above definition of $V_{\alpha}(\Delta)$, we can write down the Bellman equation that it satisfies, as in the following proposition.

\begin{proposition}\label{Prop_VI_Convergence}
The optimal $\alpha$--discounted cost $V_{\alpha}(\Delta)$ satisfies the following discounted-cost Bellman equation
\begin{align}\label{discounted_Bellman}
V_{\alpha}(\Delta) = \min_{a \in \{0,1\}} \left[ C(\Delta,a) + \alpha \mathbb{E}[V_{\alpha}(\Delta')]\right],
\end{align}
where $\mathbb{E}[V_{\alpha}(\Delta)] = \sum_{\Delta' \in \{{0},\Delta + 1\}}\mathcal{P}_{\Delta, \Delta'}(a)V_{\alpha}(\Delta')$. A deterministic policy that realizes the minimum in \eqref{discounted_Bellman} is an $\alpha$--optimal policy. Further, we can define the value iteration in terms of $V_{\alpha}^{(j)}(\Delta)$ with $V_{\alpha}^{(0)}(\Delta)=0$ and $V_{\alpha}^{(j+1)}(\Delta) = T(V_{\alpha}^{(j)})(\Delta)$ where $T$ is the Bellman operator given by,
\begin{align}\label{Value_iteration}
T(f)(\Delta) = \min_{a \in \{0,1\}} \left[ C(\Delta,a) + \alpha \mathbb{E}[f(\Delta')]\right].
\end{align}
Then, $\lim\limits_{n \rightarrow \infty}V_{\alpha}^{(n)}(\Delta) \rightarrow V_{\alpha}(\Delta)$. 
\end{proposition}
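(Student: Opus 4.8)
The plan is to treat this as a non-negative (``positive'') cost discounted MDP and to exploit monotone value iteration rather than a supremum-norm contraction, since the one-stage cost $g(\Delta,A,K_{W})$ is unbounded in $\Delta$ (Lemma \ref{L3}) and the state space $\mathcal{S}=\mathbb{Z}^+$ is infinite, so the Bellman operator $T$ is not a contraction in the ordinary supremum norm. First I would record two structural facts. (i) The one-stage cost is non-negative: $g(\Delta,A,K_{W})=h(\Delta,A,K_{W})\,\Delta\ge 0$ by Lemma \ref{L3} together with $\lambda\ge 0$, hence $C(\Delta,a)\ge 0$ for all $(\Delta,a)$. (ii) The optimal discounted cost is finite at every state: evaluating the stationary policy that always transmits ($a\equiv 1$) drives the AoI to $0$ after one step and yields $V_{\alpha}(\Delta)\le g(\Delta,A,K_{W})+\lambda/(1-\alpha)<\infty$ for each fixed finite $\Delta$, since $g(0,A,K_{W})=0$.

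Next I would establish monotone convergence of value iteration. Because the transition is deterministic and order-preserving, $T$ is monotone: $f\le f'$ pointwise implies $T(f)\le T(f')$. Starting from $V_{\alpha}^{(0)}\equiv 0$ and using non-negativity of $C$ gives $V_{\alpha}^{(1)}=T(0)\ge 0=V_{\alpha}^{(0)}$, and induction with monotonicity yields a pointwise non-decreasing sequence $V_{\alpha}^{(0)}\le V_{\alpha}^{(1)}\le\cdots$. Since $V_{\alpha}^{(n)}(\Delta)$ is the optimal $n$-stage discounted cost with zero terminal cost, truncating the non-negative infinite-horizon sum gives the uniform bound $V_{\alpha}^{(n)}(\Delta)\le V_{\alpha}(\Delta)<\infty$. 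Hence the limit $V_{\infty}(\Delta):=\lim_{n\to\infty}V_{\alpha}^{(n)}(\Delta)$ exists and is finite for every $\Delta$, which already settles the convergence claim $\lim_{n\to\infty}V_{\alpha}^{(n)}(\Delta)\to V_{\alpha}(\Delta)$ once the identification $V_{\infty}=V_{\alpha}$ below is made.

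To pass the recursion $V_{\alpha}^{(n+1)}=T(V_{\alpha}^{(n)})$ to the limit, I would use that the expectation in \eqref{discounted_Bellman} is a single deterministic evaluation (at $\Delta+1$ when $a=0$, at $0$ when $a=1$), so the minimum over the two-point action set commutes with the monotone limit; this gives $V_{\infty}=T(V_{\infty})$, i.e. $V_{\infty}$ solves the discounted Bellman equation \eqref{discounted_Bellman}. Finally I would identify $V_{\infty}=V_{\alpha}$ via a two-sided inequality: $V_{\infty}\le V_{\alpha}$ from the uniform bound above, while the reverse follows from a standard verification argument, namely that the function realizing the greedy stationary policy $\gamma^{\star}$ attaining the minimum in \eqref{discounted_Bellman} coincides with the discounted cost of $\gamma^{\star}$, so $V_{\alpha}\le V(\cdot,\gamma^{\star})=V_{\infty}$. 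Equality then simultaneously shows that $V_{\alpha}$ satisfies \eqref{discounted_Bellman} and that the stationary deterministic policy attaining the minimum is $\alpha$-optimal.

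The main obstacle is precisely the unboundedness of $g$ together with the infinite state space, which rules out the textbook Banach fixed-point proof; the monotone-convergence route circumvents this, but its delicate point is justifying the interchange of $\min$ and $\lim$ and confirming finiteness of $V_{\alpha}$ strongly enough to pin down $V_{\infty}=V_{\alpha}$ rather than merely some fixed point dominating $V_{\alpha}$. An alternative I would keep in reserve is to introduce a weight function $w(\Delta)$ dominating $g(\Delta,A,K_{W})$ and to show that $T$ is an $\alpha$-contraction in the weighted supremum norm $\|\cdot\|_{w}$; this would yield uniqueness of the fixed point and geometric convergence at once, but it requires verifying the weighted growth condition $\mathbb{E}[w(\Delta')]\le \kappa\, w(\Delta)$ for the transitions, which is the price for the cleaner statement.
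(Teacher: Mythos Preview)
Your proposal is correct and follows essentially the same route as the paper: both establish monotonicity of the Bellman operator, use non-negativity of $C$ to get a pointwise non-decreasing sequence $V_{\alpha}^{(n)}$, bound it above by the finite $V_{\alpha}$ (via the always-transmit policy), pass the recursion to the limit using the finite action space, and then identify $V_{\infty}=V_{\alpha}$ by a verification argument with the greedy stationary policy. The only cosmetic difference is that the paper bounds $V_{\alpha}^{(\infty)}\le V_{\alpha}$ after taking the limit, whereas you bound each $V_{\alpha}^{(n)}\le V_{\alpha}$ along the way; both lead to the same conclusion.
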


\begin{proof}
We use the monotonicity property of the Bellman operator to show that the value function is its fixed point and it can be obtained by iterative application of the operator. First we prove monotonicity of the Bellman operator. Consider two functions $f,g:\mathbb{Z}^+ \rightarrow \mathbb{R}$ such that $f(\Delta) \leq g(\Delta)$. Then
\begin{align*}
T(f)(\Delta) 
& \leq  C(\Delta,\hat a) + \alpha \mathbb{E}_{\Delta' \sim \mathcal{P}(\cdot \mid \Delta,\hat a)}[f(\Delta')], \\
& \leq  C(\Delta,\hat a) + \alpha \mathbb{E}_{\Delta' \sim \mathcal{P}(\cdot \mid \Delta,\hat a)}[g(\Delta')] = T(g)(\Delta)
\end{align*}
where $\hat a$ is the minimizer in $T(g)(\Delta)$. Using this monotonicity property, we deduce $V^{(0)}_\alpha \leq V^{(1)}_\alpha \leq \ldots$, and using the monotone convergence theorem, we can conclude $V^{(k)}_\alpha \xrightarrow{k \rightarrow \infty} V^{(\infty)}_\alpha$. Moreover, for any given policy $\gamma$,
$
    V^{(\infty)}_\alpha = \lim_{k \rightarrow \infty} V^{(k)}_\alpha \leq \lim_{k \rightarrow \infty} V^{(k)}_{\alpha,\gamma} = V_{\alpha,\gamma},
$
where $V^{(k)}_{\alpha,\gamma}$ denotes the cost under $\gamma$ for $k$ timesteps. Thus, $V^{(\infty)}_\alpha \leq V_\alpha$. In particular, consider $\gamma$ to be the policy such that $a=1$, for all $k$. Then, clearly, this policy leads to a finite average cost. Thus, the optimal cost $V_{\alpha}$ is finite. Furthermore, using Lebesgue's monotone convergence theorem \cite{folland1999real} and finite action space, we can deduce $V^{(\infty)}_\alpha = T(V^{(\infty)}_\alpha)$. By letting $\gamma^*$ denote the optimal policy, we get
\begin{align*}
    & V^{(\infty)}_\alpha(\Delta[0]) = \mathbb{E} \Big[ \sum_{k=0}^T \alpha^k C(\Delta[k],\gamma^*[k]) + \alpha^T V^{(\infty)}_\alpha(\Delta[T]) \Big] \\
    & \geq \mathbb{E} \Big[ \sum_{k=0}^T \alpha^k C(\Delta[k],\gamma^*[k]) \Big] \xrightarrow{T \rightarrow \infty} V_{\alpha,\gamma^*}(\Delta[0]) \geq V_\alpha(\Delta[0]).
\end{align*}
Since we have shown $V^{(\infty)}_\alpha \leq V_\alpha$ we obtain $V^{(\infty)}_\alpha = V_\alpha$.
\end{proof}
\begin{proposition}\label{Prop_Monotonicity}
The optimal $\alpha$--discounted cost $V_{\alpha}(\Delta)$ is monotonically non-decreasing in $\Delta$.
\end{proposition}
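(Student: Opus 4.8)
The plan is to establish monotonicity of the limiting value function $V_\alpha$ by proving it for every iterate of the value iteration from Proposition \ref{Prop_VI_Convergence} and then passing to the limit. Concretely, I would show by induction on $j$ that each $V_\alpha^{(j)}(\Delta)$ is non-decreasing in $\Delta$. Since Proposition \ref{Prop_VI_Convergence} gives $V_\alpha^{(j)}(\Delta) \to V_\alpha(\Delta)$ pointwise, and a pointwise limit of non-decreasing functions is non-decreasing, the claim for $V_\alpha$ follows at once.

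The base case $V_\alpha^{(0)}(\Delta) = 0$ is immediate, being constant. For the inductive step I would exploit the fact that the transitions of $\operatorname{M}$ are deterministic, so the Bellman update collapses to an explicit two-term minimum:
\begin{align*}
V_\alpha^{(j+1)}(\Delta) = g(\Delta,A,K_W) + \min\big\{\, \alpha V_\alpha^{(j)}(\Delta+1),\; \lambda + \alpha V_\alpha^{(j)}(0) \,\big\},
\end{align*}
where the first argument of the minimum corresponds to $a=0$ (the age advances to $\Delta+1$) and the second to $a=1$ (the age resets to $0$, at scheduling cost $\lambda$).

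Given two ages $\Delta_1 \le \Delta_2$, I would bound the two summands separately. The term $g(\Delta,A,K_W)$ is non-decreasing in $\Delta$ by part~1) of Lemma \ref{L3}. For the minimum term, the induction hypothesis gives $V_\alpha^{(j)}(\Delta_1+1) \le V_\alpha^{(j)}(\Delta_2+1)$, hence $\alpha V_\alpha^{(j)}(\Delta_1+1) \le \alpha V_\alpha^{(j)}(\Delta_2+1)$ since $\alpha > 0$, while the reset term $\lambda + \alpha V_\alpha^{(j)}(0)$ does not depend on $\Delta$. Because the minimum of two quantities is monotone in each of its arguments, raising the first argument while leaving the second fixed cannot decrease the minimum; combined with the monotonicity of $g$ this yields $V_\alpha^{(j+1)}(\Delta_1) \le V_\alpha^{(j+1)}(\Delta_2)$ and closes the induction.

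I do not anticipate a genuine obstacle. The deterministic dynamics make the Bellman update transparent, and the argument reduces to two elementary facts: a sum of non-decreasing functions is non-decreasing, and the pointwise minimum of a non-decreasing function with a constant is non-decreasing. The only point requiring minor care is that monotonicity survives the limit $V_\alpha^{(j)} \to V_\alpha$, which is immediate for a pointwise limit. I would also note that this monotonicity is precisely the structural property that later forces the cross-over between the two branches of the minimum to occur at a single age, providing the threshold $\tau$ in the proof of Theorem \ref{Theorem_det}.
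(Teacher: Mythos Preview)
Your proof is correct and follows essentially the same route the paper gestures at: the paper's one-line proof invokes ``the monotonicity property of the Bellman operator as shown in the proof of Proposition~\ref{Prop_VI_Convergence},'' and your induction on the value-iteration index makes explicit how that property (together with the monotonicity of $g$ from Lemma~\ref{L3}) forces each iterate, and hence the limit, to be non-decreasing in~$\Delta$. If anything, your write-up is more complete than the paper's, since the operator monotonicity actually proved in Proposition~\ref{Prop_VI_Convergence} is with respect to the function argument ($f\le g\Rightarrow T(f)\le T(g)$), and the additional observation that $T$ preserves monotonicity in~$\Delta$---which you verify directly via the explicit two-branch minimum---is needed to close the argument.
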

The proof follows from the monotonicity property of the Bellman operator as shown in the proof of Proposition \ref{Prop_VI_Convergence}.
\begin{proposition}\label{Proposition_Det_policy}
There exists an $\alpha$--optimal stationary deterministic policy for a fixed $\lambda$. Further, it has a threshold structure, i.e., there exists $\tau := \tau(\lambda,A,K_{W})$ such that
\begin{align}\label{Det_Threshold_policy}
a_{\alpha}(\Delta) = \left\{ \begin{array}{ll}
1, &\Delta \ge \tau, \\ 
0, & \Delta < \tau,
\end{array}
\right.
\end{align}
where $a_{\alpha}$ denotes the action under an $\alpha$--optimal policy.
\end{proposition}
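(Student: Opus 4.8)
The plan is to work directly with the discounted-cost Bellman equation \eqref{discounted_Bellman} and to exploit the monotonicity of $V_{\alpha}$ established in Proposition \ref{Prop_Monotonicity}. Because the dynamics are deterministic, the minimization in \eqref{discounted_Bellman} reduces to comparing just two state-action values at each $\Delta$: the cost of transmitting, $Q(\Delta,1) = g(\Delta,A,K_W) + \lambda + \alpha V_{\alpha}(0)$, and the cost of idling, $Q(\Delta,0) = g(\Delta,A,K_W) + \alpha V_{\alpha}(\Delta+1)$. Existence of a stationary deterministic optimal policy is already delivered by Proposition \ref{Prop_VI_Convergence}, so the only remaining task is to show that the greedy action induced by these two quantities has a threshold form.

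First I would subtract the common running cost $g(\Delta,A,K_W)$ and conclude that transmitting is (weakly) preferred precisely when
\[
\lambda \le \alpha\big(V_{\alpha}(\Delta+1) - V_{\alpha}(0)\big).
\]
Define $\Phi(\Delta) := \alpha\big(V_{\alpha}(\Delta+1) - V_{\alpha}(0)\big) - \lambda$, so that $a=1$ is optimal iff $\Phi(\Delta)\ge 0$. The key structural observation is that $\Phi$ is non-decreasing in $\Delta$: by Proposition \ref{Prop_Monotonicity}, $V_{\alpha}$ is monotonically non-decreasing, hence so is $\Delta \mapsto V_{\alpha}(\Delta+1)$, while $V_{\alpha}(0)$ and $\lambda$ are constants. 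Monotonicity of $\Phi$ immediately makes the transmit region $\{\Delta : \Phi(\Delta)\ge 0\}$ upward closed, which is exactly the threshold structure \eqref{Det_Threshold_policy} with $\tau := \min\{\Delta \in \mathbb{Z}^+ : \Phi(\Delta)\ge 0\}$, breaking ties in favor of transmission.

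It then remains to argue that $\tau$ is finite, i.e.\ that the transmit region is non-empty. For this I would invoke the growth of the running cost from Lemma \ref{L3}: since every action incurs at least $g(\Delta,A,K_W)$ at state $\Delta$ and all costs are non-negative, the Bellman equation gives $V_{\alpha}(\Delta) \ge g(\Delta,A,K_W)$, and hence $V_{\alpha}(\Delta)\to\infty$ as $\Delta\to\infty$ by part 2 of Lemma \ref{L3}. Consequently $\Phi(\Delta)\to\infty$, so $\Phi(\Delta)\ge 0$ for all sufficiently large $\Delta$ and the threshold $\tau$ is well defined and finite; its dependence $\tau = \tau(\lambda,A,K_W)$ is inherited from that of $g$ and $V_{\alpha}$ on these parameters.

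The main obstacle I anticipate is not the threshold argument itself---which collapses to the monotonicity of $\Phi$---but ensuring the two ingredients it rests on are airtight: that $V_{\alpha}$ is genuinely monotone (delegated to Proposition \ref{Prop_Monotonicity}, whose proof leans on monotonicity of the Bellman operator) and that $V_{\alpha}$ is simultaneously finite and divergent, which needs both the lower bound $V_{\alpha} \ge g$ and the finiteness obtained from the always-transmit policy in Proposition \ref{Prop_VI_Convergence}. Some care is also needed at the tie-breaking boundary $\Phi(\tau)=0$, where either action is optimal; selecting $a=1$ there keeps the stated convention consistent and preserves stationarity.
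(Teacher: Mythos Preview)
Your proposal is correct and follows essentially the same route as the paper: both reduce the Bellman comparison to $\alpha V_{\alpha}(\Delta+1) \ge \lambda + \alpha V_{\alpha}(0)$ and then invoke the monotonicity of $V_{\alpha}$ (Proposition \ref{Prop_Monotonicity}) to conclude that the transmit region is upward closed. Your finiteness argument via the lower bound $V_{\alpha}(\Delta)\ge g(\Delta,A,K_W)\to\infty$ from Lemma \ref{L3} is in fact more explicit than the paper's, which appeals only to ``monotonicity of $V_{\alpha}$'' for the existence of a finite switching point.
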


\begin{proof}[Proof of Proposition \ref{Proposition_Det_policy}]
Consider first the case where the optimal action is $a_{\alpha}(\Delta_{1}) = 1$ at state $\Delta_{1}$. Then, from \eqref{discounted_Bellman}, we get $C(\Delta_{1},0) + \alpha V_{\alpha}(\Delta_{1} + 1) \geq C(\Delta_{1}, 1) + \alpha V_{\alpha} ({0}).$
By substituting the value of $C(\cdot,\cdot)$, we have $\alpha V_{\alpha}(\Delta_{1} + 1) \geq \lambda + \alpha V_{\alpha} ({0}).$
Consider now $\Delta_2 \geq \Delta_1$. Then, by Proposition \ref{Prop_Monotonicity}, we have that 
\begin{align*}
\alpha V_{\alpha}(\Delta_2 + 1) &\geq \alpha V_{\alpha}(\Delta_1 + 1) \geq \lambda + \alpha V_{\alpha} ({0}),
\end{align*}
which implies that the optimal policy for state $\Delta_2$ is to schedule the agent. Similar to the first case, for the one where the optimal action is $a_{\alpha}(\Delta) = 0$, we have that for any $\Delta_2 < \Delta_1$, the optimal action satisfies $a_{\alpha}(\Delta) = 0$. Finally, if we start with $a_{\alpha}(\Delta) = 0$ and continue idling, there must exist a $\Delta$ such that we land in the first case. This is due to monotonicity of $V_{\alpha}(\Delta)$. Thus, there exists a threshold $\tau$ as in \eqref{Det_Threshold_policy}, such that for any state $\Delta \geq \tau$, $a_{\alpha}(\Delta) = 1$ is optimal, while for $\Delta < \tau$, $a_{\alpha}(\Delta) = 0$ is optimal, which completes the proof.
\end{proof}
Finally, to complete the proof of Theorem \ref{Theorem_det}, we make use of the main result of \cite{sennott1989average}. Assumptions 1 and 2 in \cite{sennott1989average} follow due to Propositions \ref{Prop_VI_Convergence} and \ref{Prop_Monotonicity}. By using \eqref{discounted_Bellman} we can show that
$V_\alpha(\Delta) \leq C(\Delta,1) + V_\alpha({0})$, which implies $V_\alpha(\Delta) - V_\alpha({0}) \leq C(\Delta,1) =: M_\Delta < \infty$. Moreover, for a given $\Delta$, $\sum_{\Delta'} P_{\Delta \Delta'}(0) M_{\Delta'} = M_{\Delta+1} < \infty$ and $\sum_{\Delta'} P_{\Delta \Delta'}(1) M_{\Delta'} = M_{0} < \infty$. Hence, Assumption 3* of \cite{sennott1989average} is also satisfied. Consider now a sequence of $\alpha_l$ such that $\lim_{l \rightarrow \infty}\alpha_l = 1$. Then, the optimal policy $a_{\alpha_l}$, which is $\alpha_l$--optimal from Proposition \ref{Proposition_Det_policy}, converges to the $g(\Delta)$--optimal policy \cite{sennott1989average}, which verifies the threshold structure of the latter. Further, the time-average cost $\sigma^* = \lim_{\alpha \rightarrow 1}(1-\alpha)V_{\alpha}(\Delta)$ is finite and independent of the initial state \cite{sennott1989average}. This, then completes the proof of the theorem.
\end{proof}

\subsection{Analytical Characterization of $\tau$}
Having established the threshold structure of the $g(\Delta)$--optimal policy for the single-agent scheduling Problem \ref{Problem5}, we can restrict our attention to the finite-state MDP with the state space $\mathcal{S}' = \{0, 1, \cdots, \tau\}$. Then, the one-stage cost for the time-average cost function satisfies the Bellman's equation given by $V(\Delta) + \sigma^* = \min_{a \in \{0,1\}} \left[ C(\Delta,a) + \mathbb{E}[V(\Delta')]\right],$ which can be equivalently written as:
\begin{align}\label{Test_HJB}
V(\Delta) +\sigma^* \hspace{-0.1cm} = \min
\left\lbrace C(\Delta,0) \hspace{-0.1cm} + \hspace{-0.1cm} V(\Delta +1), C(\Delta,1) +V(0)\right\rbrace.
\end{align}

Next, we calculate an analytical expression for finding $\tau$ as a function of the Lagrange multiplier and the system parameters. We know from Theorem \ref{Theorem_det} that $a = 1$ is optimal at $\Delta = \tau$. Then, from \eqref{Test_HJB}, we have $C(\tau,1) + V(0) -\sigma^* < C(\tau,0) + V(\Delta +1) -\sigma^*$,
which yields
\begin{align}\label{test1}
V({0}) +\lambda < V(\tau + 1).
\end{align}

Further, at $\Delta = \tau - 1$, $a=0$ is optimal. Thus, by the same argument, we have that $V(\tau) \leq \lambda + V({0})$,
which on combining with \eqref{test1} leads to $V(\tau) \leq \lambda + V({0}) < V(\tau + 1).$
Furthermore, by using $a=1$ at $\Delta = \tau$, we get 
\begin{align}\label{test3}
V(\Delta) = g(\Delta,A,K_W) + \lambda + V({0}) -\sigma^* .
\end{align}
Since $V(\Delta)$ is monotonically non-decreasing in $\Delta$, $\exists \eta \in [0,1)$ such that $V(\tau + \eta) = \lambda + V({0})$, and by using \eqref{test3},
\begin{align}
\sigma^* = g(\tau + \eta,A,K_W).
\end{align}
Further, for $\Delta < \tau$, we have $V(\Delta + 1)  - V(\Delta) = \sigma^* -g(\Delta,A,K_{W})$,
which on summing from $\Delta = {0}$ to $\Delta = \tau -1$ gives
\begin{align}\label{test4}
V(\tau)=V({0}) +\sigma^*\tau- \sum_{\Delta = {0}}^{\tau-1} {g(\Delta,A,K_{W})}.
\end{align}
Next, we rewrite the expression of $g(\cdot,\cdot,\cdot)$ as
\begin{align}\label{Frobenius_exp}
        & g(\Delta,A,K_{W})  = \sum_{r=1}^{\Delta}tr((A^{r-1})^\top A^{r-1}K_W)\Delta \nonumber \\
    & =\!\sum_{r=1}^{\Delta}tr\!\left(\!\left(A^{r-1}\!K_W^{0.5}\right)^\top\! \!\!A^{r-1}\!K_W^{0.5}\right)\!\Delta\! = \!\sum_{r=1}^{\Delta}\|A^{r-1}K_W^{0.5}\|_F^2 \Delta.
\end{align}
Substituting the above expression \eqref{Frobenius_exp} for $g(\cdot, \cdot, \cdot)$ in \eqref{test4}, we can calculate the value of $\tau$ by using \eqref{test3} and \eqref{test4}.

Next, we provide a simplified expression to calculate $\tau$ for scalar systems ($m=1$ in \eqref{system}).
\subsubsection{Scalar systems}
Equating the values of $V(\tau)$ from \eqref{test3} and \eqref{test4}, we arrive at the equation $(\tau+1) g(\tau + \eta,A,K_W) - \sum_{l=0}^{\tau}g(l,A,K_W)  = \lambda.$
By substituting the expression for $g(\cdot,\cdot,\cdot)$ in this equation yields
\begin{align}\label{calculation_tau}
\left\{ \begin{array}{ll}
f_1(\tau,A,K_W,\lambda) = 0, & A \neq 1,\\ 
f_2(\tau,K_W,\lambda) = 0, & A = 1, \\
\end{array} 
\right.
\end{align}
where
\begin{align*}
 f_1(\cdot, \cdot, \cdot, \cdot) =& K_W\left[(\tau+1) (\tau + \eta)\frac{A^{2\tau + 2\eta}-1}{A^2-1} \right.\\ & \left. + \frac{ \frac{\tau(\tau+1)}{2} - \frac{A^2}{(A^2-1)^2}+\frac{A^{2\tau+2}(-\tau A^2 + \tau + 1)}{(A^2-1)^2}}{A^2-1}\right] \!-\! \lambda, \\
 \!\!f_2(\cdot, \cdot, \cdot) = &{K_W}\left[ (\tau+1) (\tau + \eta)^2 - \frac{\tau(2\tau+1)( \tau + 1)}{6}\right] \!-\! \lambda.
\end{align*}
We note that \eqref{calculation_tau} is an implicit equation in $\tau$ and $\eta$ for given values of $\lambda$, $A$ and $K_W$. Thus, the value of $\tau$ can be calculated by plotting $\eta$ vs $\tau$, and choosing the integer value of $\tau$ for an admissible $\eta$.

\subsection{Multi-agent Randomized Scheduling Policy} \label{subsec:rand_sched_pol}
In the previous subsection, we showed the existence of a single-agent stationary deterministic policy for a fixed $\lambda$. In this subsection, we return to the multi-agent case and obtain the optimal value of $\lambda$. Consequently, we use the threshold characterization of the deterministic policy to obtain the optimal solution to Problem \ref{Problem3}. The latter policy will be a stationary randomized policy because in general, a stationary deterministic policy for a constrained optimization problem as in Problem \ref{Problem3} may not exist \cite{sennott1993constrained}. Henceforth, we also resume the use of superscript $i$ to denote the agent index.


Consider the threshold for the $i^{th}$ agent given by $\tau^i(\lambda) := \tau^i(\lambda,A_i, K_{W^i})$. Then the agent is connected to its respective controller at every $(\tau^i(\lambda)+1)$--th instant. Or equivalently, its update rate can be given by the quantity $\frac{1}{\tau^i(\lambda)+1 }$. Thus, under the constraint \eqref{AoI-constraint2}, we have that 
\begin{align}\label{soft_constraint}
W(\lambda) :=\sum_{i=1}^N \frac{1}{\tau^i(\lambda)+1} \leq R_{d}.
\end{align}

In order to find the optimal value of the Lagrange multiplier solving Problem \ref{Problem3}, we use the Bisection search procedure \cite{chen2021minimizing}, which we summarize next. 
Since $\lambda \geq 0$, we start by initializing two parameters $\lambda_l^{(0)} = 0$ and $\lambda_u^{(0)} = 1$. We then calculate the threshold parameters $\tau^i(\lambda_u^{(0)})$ for all $i$, by using the piece-wise definition in \eqref{calculation_tau}. Consequently, we iterate by setting $\lambda_l^{(j+1)} = \lambda_u^{(j)}$ and $\lambda_u^{(j+1)} = 2\lambda_u^{(j)}$ until the constraint \eqref{soft_constraint} is satisfied for $\lambda_u^r$, for some integer $r$. Then, we define the interval $[\lambda_l^r,\lambda_u^r]$. This interval contains the optimal value of the multiplier $\lambda^*$, which can be calculated using the \emph{Bisection method}. The iteration stops when $|\lambda_u^{(m)} -\lambda_l^{(m)}|\leq \epsilon$, for the iterating index $m$ and for a suitably chosen $\epsilon > 0.$ 

We next construct the stationary randomized policy solving Problem \ref{Problem3}. Define $\lambda_l^* = \lambda_l^{(m)}$ and $\lambda_u^*= \lambda_u^{(m)}$ as obtained above. 
Further, let the stationary deterministic policies $\gamma^{d,i}_{D1}$ and $\gamma^{d,i}_{D2}$ as obtained from Theorem \ref{Theorem_det} be those  corresponding to $\lambda_l^*$ and $\lambda_u^*$, respectively, where $\lambda_l^* \mapsto \tau_l(\lambda_l^*):= \{\tau_l^1(\lambda_l^*), \cdots, \tau_l^N(\lambda_l^*)\}^\top$ and $\lambda_u^* \mapsto \tau_u(\lambda_u^*):= \{\tau_u^1(\lambda_u^*), \cdots, \tau_u^N(\lambda_u^*)\}^\top$. Also, we define $ R_d^u$ and $R_d^l$ as the total bandwidth used corresponding to the multipliers $\lambda_u^*$ and $\lambda_l^*$, respectively. We note that $\tau_l^i$ differs from $\tau_u^i$ by at most one state. Then, we define the probability $p$ and the deterministic policies for all $i$ as:
\begin{align}
\label{Prob_of_randomization} p& := \frac{R_{d} - R_d^u}{R_d^l - R_d^u}, \\
\label{deterministic_policy1}\gamma^{d,i}_{D1}(\Delta^i) & :=  \left\{ \begin{array}{ll}
1, & \Delta^i \ge \tau^i_l(\lambda_l^*,A_i,K_{W^i}) \\ 
0, & \Delta^i < \tau^i_l(\lambda_l^*,A_i,K_{W^i}) \\
\end{array},
\right. \\
\label{deterministic_policy2}\gamma^{d,i}_{D2}(\Delta^i) & :=  \left\{ \begin{array}{ll}
1, & \Delta^i \ge \tau^i_u(\lambda_u^*,A_i,K_{W^i}) \\ 
0, & \Delta^i < \tau^i_u(\lambda_u^*,A_i,K_{W^i}) \\
\end{array}.
\right.
\end{align}
Let $\gamma^{d}_R := [\gamma^{d,1}_{R}, \cdots, \gamma^{d,N}_{R}]^\top$. The randomized policy $\gamma^{d}_R$ for the relaxed Problem \ref{Problem3} can then be obtained as: 
\begin{align}\label{Randomized_policy}
\gamma^{d,i}_R = p \gamma^{d,i}_{D1} + (1-p) \gamma^{d,i}_{D2}, ~\forall i.
\end{align}

Next, we present the following proposition to prove that the randomized policy obtained is indeed optimal for Problem \ref{Problem3}.

\begin{proposition}[Optimality of Randomized Policy]\label{Optimality_of_Randomized_policy}
Under Assumption \ref{As1}, the policy in \eqref{Randomized_policy}, which randomizes over the deterministic policies \eqref{deterministic_policy1} and \eqref{deterministic_policy2} with probability as in \eqref{Prob_of_randomization} is optimal for the relaxed minimization Problem \ref{Problem3}. 
\end{proposition}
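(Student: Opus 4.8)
The plan is to prove optimality through \emph{Lagrangian duality} for the constrained average-cost MDP, exploiting the fact that the Lagrangian $\mathcal{L}(\gamma^{d},\lambda)$ is separable across agents and that each decoupled summand is exactly Problem \ref{Problem4}, whose minimizer is the threshold policy furnished by Theorem \ref{Theorem_det}. First I would record \emph{weak duality}. For any $\gamma^{d}\in\Gamma^{d}$ that is feasible for Problem \ref{Problem3} and any $\lambda\ge 0$, the time-averaged penalty equals $\tfrac{\lambda}{N}\big(\text{total rate}-R_{d}\big)\le 0$, so $\mathcal{L}(\gamma^{d},\lambda)\le J^{S}(\gamma^{d})$; taking the infimum over $\gamma^{d}$ and writing $D(\lambda):=\inf_{\gamma^{d}\in\Gamma^{d}}\mathcal{L}(\gamma^{d},\lambda)$ gives $D(\lambda)\le J^{S}(\gamma^{d})$ for every feasible $\gamma^{d}$, hence $\sup_{\lambda\ge 0}D(\lambda)\le \inf_{\text{feasible}}J^{S}$. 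Because both the running cost and the rate penalty are separable in $i$, the infimum defining $D(\lambda)$ splits into a sum of per-agent unconstrained problems, each attaining its minimum at the threshold policy with threshold $\tau^{i}(\lambda)$.

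Next I would analyze the dual and locate the optimal multiplier. Since raising $\lambda$ increases the per-agent scheduling cost, $\tau^{i}(\lambda)$ is non-decreasing in $\lambda$, so the aggregate update rate $W(\lambda)=\sum_{i}\tfrac{1}{\tau^{i}(\lambda)+1}$ in \eqref{soft_constraint} is a non-increasing, piecewise-constant function of $\lambda$. The bisection procedure of Section \ref{subsec:rand_sched_pol} therefore brackets the critical value $\lambda^{*}$ at which $W$ crosses $R_{d}$, yielding $\lambda_{l}^{*}\uparrow\lambda^{*}$ and $\lambda_{u}^{*}\downarrow\lambda^{*}$ with $R_{d}^{l}=W(\lambda_{l}^{*})\ge R_{d}\ge W(\lambda_{u}^{*})=R_{d}^{u}$ and adjacent thresholds $\tau_{u}^{i}\in\{\tau_{l}^{i},\tau_{l}^{i}+1\}$. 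The key observation is that at the exact critical $\lambda^{*}$ the two actions are cost-indifferent at the boundary state of each affected agent, so both deterministic policies \eqref{deterministic_policy1} and \eqref{deterministic_policy2} realize the minimum in the per-agent Bellman equation; consequently both are minimizers of $\mathcal{L}(\cdot,\lambda^{*})$ and attain $D(\lambda^{*})$.

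With this in hand I would close the duality gap. The mixture $\gamma^{d}_{R}$ of \eqref{Randomized_policy} applies the common randomization probability $p$ agent-wise; for non-boundary agents the two policies coincide and are uniquely optimal, while for boundary agents any convex combination of the two $\mathcal{L}(\cdot,\lambda^{*})$-minimizers is again a minimizer, so $\gamma^{d}_{R}$ minimizes $\mathcal{L}(\cdot,\lambda^{*})$, i.e.\ $\mathcal{L}(\gamma^{d}_{R},\lambda^{*})=D(\lambda^{*})$. By the choice \eqref{Prob_of_randomization} of $p$ its expected aggregate rate is $p R_{d}^{l}+(1-p)R_{d}^{u}=R_{d}$, so $\gamma^{d}_{R}$ is feasible with the constraint \emph{tight}, and the penalty term vanishes, giving $\mathcal{L}(\gamma^{d}_{R},\lambda^{*})=J^{S}(\gamma^{d}_{R})$. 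Combining the inequalities yields $\inf_{\text{feasible}}J^{S}\ge \sup_{\lambda}D(\lambda)\ge D(\lambda^{*})=J^{S}(\gamma^{d}_{R})\ge \inf_{\text{feasible}}J^{S}$, forcing equality throughout, so $\gamma^{d}_{R}$ is optimal for Problem \ref{Problem3}.

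I expect the main obstacle to be the step establishing that the two deterministic threshold policies are \emph{simultaneously} Lagrangian-optimal at the exact $\lambda^{*}$ and that their agent-wise mixture attains the convex combination of the per-agent average costs and rates. This rests on the average-cost theory for the well-behaved (unichain, finite recurrent class on $\{0,\dots,\tau\}$) threshold chains, together with the existence of optimal randomized stationary policies for constrained average-cost chains \cite{sennott1993constrained}; care is also needed in reading the rate constraint in the expected sense and in verifying that the bisection limits $\lambda_{l}^{*},\lambda_{u}^{*}$ collapse to a single $\lambda^{*}$ so that the indifference/complementary-slackness argument applies cleanly.
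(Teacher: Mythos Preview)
Your proposal is correct and takes essentially the same approach as the paper: both arguments establish optimality of the randomized policy via Lagrangian duality for the constrained average-cost MDP, using that the two deterministic threshold policies are both Lagrangian-optimal at the critical multiplier $\lambda^{*}$ and that the mixture achieves the rate constraint with equality. The paper's proof differs only in presentation, explicitly verifying the technical hypotheses (finiteness of level sets, existence of a unichain stationary policy with finite costs, reachability between states, uniqueness of the positive recurrent class, and a Slater-type condition) required by the constrained average-cost framework of \cite{sennott1993constrained} and then invoking that reference's Proposition~3.2 and Lemmas~3.4, 3.7, 3.9, 3.10, whereas you lay out the weak-duality/complementary-slackness logic directly and correctly flag the delicate step of showing the mixture attains the convex combination of per-agent rates and costs.
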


\begin{proof}
Consider the MDP $\operatorname{M}$ in Section \ref{sec:DSP}, and define the costs $C_1(\Delta^i,a^i):= g(\Delta^i,A_i,K_{W^i})$ and $C_2(\Delta^i,a^i):=\lambda a^i$. Further, let the expected $C_1$--cost and $C_2$--cost under a policy $\gamma^{d,i}$ be $\bar{C_1}(\gamma^{d,i})$ and $\bar{C_2}(\gamma^{d,i})$, respectively for agent $i$.

Let $\mathbb{S}^i$ be a non-empty set and $\Gamma(\delta^i,\mathbb{S}^i)$ be the class of policies for agent $i$ with the following properties: 
\begin{itemize}
    \item Starting at $\Delta^i[0] = \delta^i$, the state of the MDP ($\Delta^i[k]$) at time $k$, reaches $\mathbb{S}^i$ with probability 1 under the policy $\gamma^{d,i}$, for some k. More precisely, $\mathcal{P}_{\gamma^{d,i}}(\Delta[k] \in \mathbb{S} \mid \Delta[0] = \delta) = 1$, for some $k \geq 1 $.
    \item The expected time of the first passage from $\delta^i$ to $\mathbb{S}^i$ under $\gamma^{d,i}$ is finite.
    \item The corresponding expected $C_1$--cost $\bar{C}_1^{\delta,G}(\gamma^{d,i})$ and the expected $C_2$--cost $\bar{C}_2^{\delta,G}(\gamma^{d,i})$ of the first passage time from $\delta^i$ to $\mathbb{S}^i$ under $\gamma^{d,i}$ are finite.
\end{itemize}
The above properties hold for all $\{\mathbb{S}^i\}_{1 \leq i \leq N}$ and $\{\Gamma(\delta^i,\mathbb{S}^i)\}_{1 \leq i \leq N}$.

Next, we start by listing and subsequently verifying the assumptions in \cite{sennott1993constrained} for our system.

\begin{enumerate}
    \item \textit{For all $\beta > 0$, the set $\mathbb{S}^i(\beta):= \{\Delta^i \mid \exists a^i ~s.t.~ C_1(\Delta^i,a^i) + C_2(\Delta^i,a^i) \leq \beta\}$ is finite $\forall i$:}
    
    Indeed for the underlying Problem \ref{Problem4}, we have that $C_1(\Delta^i,a^i) + C_2(\Delta^i,a^i) = g(\Delta^i,A_i,K_{W^i}) + \lambda a^i \geq g(\Delta^i,A_i,K_{W^i})$. This implies that for any $\Delta^i \in \mathbb{S}^i(\beta)$, the inequality $g(\Delta^i,A_i,K_{W^i}) \leq \beta$ must hold. Then, by Lemma \ref{L3}, we conclude that $\mathbb{S}^i(\beta)$ is always finite. Note here that $\beta^i$ in general can be different for each agent and consequently, we can let $\beta:= \max\limits_{1 \leq i \leq N}\{\beta^i\}$.
    
    \item \textit{For each agent $i$, there exists a stationary policy $\tilde{\gamma}^i$ such that the induced Markov chain has the property that the state space $\mathcal{S}^i$ of the MDP $\operatorname{M}$ consists of a single non-empty positive recurrent class $R^i_+$ and a set $\mathcal{T}^i$ of transient states such that $\tilde{\gamma}^i \in \Gamma(\delta^i,\mathbb{S}^i)$ for $\delta^i \in \mathcal{T}^i$. Moreover, both $\bar{C}_1(\tilde{\gamma}^i) < \infty$ and $\bar{C}_2(\tilde{\gamma}^i) < \infty$:}
    
    Note first that in our case, $\mathcal{S}^i = \mathbb{Z}^+$. Consider the policy $\tilde{\gamma}^i$ to be the same as the relaxed policy $\gamma^{d,*}_R.$ Then, all the states in the set $\{0, \cdots, \tau^i\}$ communicate with the state $\Delta^i = {0}, \forall i$, and for any starting state $\delta^i \in \{0, \cdots, \tau^i\}$. From this, we conclude that $\Delta^i = {0}$ is positive recurrent. Consequently, the set $R^i_+ = \{0, \cdots, \tau^i\}$ forms the required positive recurrent class. Next, the set $\mathcal{T}^i = \mathbb{Z}^+ \setminus R^i_+$. Then, if the initial condition lies in $\mathcal{T}^i$, the state returns to $R^i_+$ under $\tilde{\gamma}^i$, and hence $\tilde{\gamma}^i \in \Gamma(\delta^i,\mathbb{S}^i)$. Further, the cost $\bar{C}_2(\tilde{\gamma}^i)$ is the expected transmission rate, and the cost $\bar{C}_1(\tilde{\gamma}^i)$ is the expected WAoI for agent $i$, both of which are finite. Thus, the sum of expected costs over all agents is also finite.
    
    \item \textit{For each $i$, given two states $\delta^i_1 \neq \delta^i_2$, there exists a policy $\tilde{\gamma}^i$ such that $\tilde{\gamma}^i \in \Gamma(\delta_1^i, \delta_2^i)$:}
    
    Indeed, since the state $\Delta^i_k$ increases by $1$ each time the agent is not connected over the channel and decreases to unity as soon as it is, we can always find a policy that induces a Markov chain such that any state $\delta_2^i$ can be reached from $\delta_1^i$ with positive probability. Furthermore, the first passage times and the corresponding expected costs are also finite.
    
    \item \textit{For each agent $i$, if a stationary policy ${\gamma}^i$ has at least one positive recurrent state, then it has a single positive recurrent class $R^i_+$. 
    }
    
    Indeed, for any policy $\gamma^i$, by Lemma \ref{L3} the cost function is an increasing function of $\Delta^i$ and decreases only when an agent is connected over the downlink, which corresponds to $\Delta^i = {0},~\forall i$. Thus, any positive recurrent class contains $\Delta^i = {0}$ for all $i$, which is then the only possible positive recurrent class. 
    
    \item \textit{There exists a policy $\gamma^i$ such that $\bar{C}_1(\gamma^i) < \infty$ and $\bar{C}_2(\gamma^i) < \alpha$:}
    
    Indeed, since $\bar{C}_2(\gamma^i)$ is the expected transmission rate for each agent, we can choose a policy with $\tau^i(\lambda)$ large enough such that the total sum of costs across all agents is less than $R_d$. Furthermore, we can also verify that $\bar{C}_1(\gamma^i) < \infty, ~\forall i$, and so is then their sum over all agents.
\end{enumerate}

Since we have verified all the necessary assumptions, we use Proposition 3.2 and Lemmas 3.4, 3.7, 3.9 and 3.10 from \cite{sennott1993constrained} to prove the result. To this end, we define $R_d^{\lambda}$ as the expected transmission rate associated with policy $\tau(\lambda)$ and $\lambda^*:= \inf\{\lambda >0 \mid R_d^{\lambda} \leq R_d\}$. A policy is $\tau^*$--optimal if it is optimal for the MDP of each agent with $\tau(\lambda) = \tau(\lambda^*)$.

Since from the bisection search procedure, we have that $\lambda^*_u \searrow \lambda^*$ and $\lambda^*_l \nearrow \lambda^*$, the policies $\tau^i_u(\lambda^*_u) \searrow \tau^i(\lambda^*)$ and $\tau^i_l(\lambda^*_l) \nearrow \tau^i(\lambda^*)$ and are both $\tau(\lambda^*)$--optimal (Lemmas 3.4, 3.7 of \cite{sennott1993constrained}). Since the Markov chains induced by the policies $\gamma^{d,i}_{D1}$ and $\gamma^{d,i}_{D2}$ are both irreducible and $\Delta^i = {0}$ is positive recurrent for all $i$, we can choose either of these when $\Delta^i ={0}$ independently, without affecting the optimality (Proposition 3.2 and Lemma 3.7 of \cite{sennott1993constrained}). Thus, if we randomize over these policies, according to the statement of the current Proposition \ref{Optimality_of_Randomized_policy} with the probability $p$ chosen to satisfy the rate constraint \eqref{AoI-constraint2} with equality, we conclude that the randomized policy $\gamma^{d}_R$ is optimal for Problem \ref{Problem3}.
The proof is thus complete.
\end{proof}
As a result of above proof, we will henceforth denote the optimal randomized policy in \eqref{Randomized_policy} as $\gamma^{d,*}_R$.

\subsection{Multi-agent Scheduling with Hard Bandwidth Constraint}\label{subsec:hard_cons_pol}
In this subsection, we  return to our original Problem \ref{Problem2} with the hard bandwidth constraint and construct a new policy satisfying \eqref{AoI-constraint1}. Let $\gamma^{d,*,i}_R$ be the optimal policy as in \eqref{Randomized_policy}, and $a^i[k]$ be the scheduling decision for agent $i$ under the relaxed problem. Further, define the set $\Omega[k]:= \{i \in [N] \mid a^i[k] = 1\}$, which denotes the agents to be scheduled at time $k$ with the optimal policy and let $\Omega_k$ be the cardinality of $\Omega[k]$. Then, the scheduling decision ${\zeta}^{d,i}[k]$ under the new policy ${\gamma}^{d,i}$, with the hard bandwidth constraint is given as \cite{tang2020scheduling}:
\begin{itemize}
\item If $\Omega_k \leq R_d$, then ${\zeta}^{d,i}[k] = 1, ~\forall a^i[k] = 1$.
\item If $\Omega_k > R_d$, then ${\zeta}^{d,i}[k] = 1$ for a subset $\Omega^s[k] \subset \Omega[k]$ of the agents, which are selected at random (with uniform probability) by the BS such that $\Omega^s_k = R_d$, where $\Omega^s_k$ denotes the cardinality of $\Omega^s[k]$. The agents in the set $\Omega[k]\setminus \Omega^s[k]$ are not selected because of the bandwidth limit.
\end{itemize}

In the following, we show that the costs under $\gamma^{d,*}_R$ and ${\gamma}^{d}$ approach each other asymptotically (as $N \rightarrow \infty$), where ${\gamma}^{d}:=[\gamma^{d,1}, \cdots, \gamma^{d,N}]^\top$.
To this end, we start by defining a new policy $\hat{\gamma}^d$, which transmits exactly the same agents as the relaxed policy, except that, for each additional agent that was not supposed to be transmitted by the hard-bandwidth policy ${\gamma}^d$, it adds an additional penalty to the cost defined as 
\begin{align}\label{Age_Penalty}
    \omega(\Delta,A,K_W) =&  \sum_{l=0}^{\infty}(1-\frac{R_d}{\Omega_k})^{l+1}g(\tau + l,A,K_{W}) \nonumber \\
    & \times \mathbb{I}\{(1-\frac{R_d}{\Omega_k}) >0\}\mathbb{I}\{\Delta \geq \tau\},
\end{align}
where $\mathbb{I}{\{\cdot\}}$ denotes the indicator function of its argument. Further, we let $\{\tilde{\Delta}^i[1], \tilde{\Delta}^i[2], \cdots, \tilde{\Delta}^i[k], \cdots \},$ and $\{\Delta^i[1],\Delta^i[2], \cdots, \Delta^i[k], \cdots\}$ be the sequences of AoIs of the $i^{th}$ agent under ${\gamma}^{d,i}$ and $\gamma^{d,*,i}_R$ (or equivalently $\hat{\gamma}^{d,i}$), respectively.

Then, we state the following lemma without proof.

\begin{lemma}\label{L5}
The age penalty $\omega(\tilde{\Delta}^i[k],A_i,K_{W^i})$ 
as in \eqref{Age_Penalty} dominates the expected cost under WAoI $g(\tilde{\Delta}^i[k],A_i,K_{W^i})$, $\forall i,k$. \hfill \qedsymbol
\end{lemma}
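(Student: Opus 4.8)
The plan is to exploit the geometric nature of the blocking events created by the hard bandwidth constraint, together with the monotonicity of $g$ from Lemma~\ref{L3}. First I would isolate the mechanism that generates the penalty \eqref{Age_Penalty}. Under the hard policy $\gamma^{d}$, agent $i$ becomes a transmission candidate (i.e., enters $\Omega[k]$) precisely when its age $\tilde{\Delta}^i[k]$ reaches the threshold $\tau^i$. If $\Omega_k \le R_d$ it is served with certainty and no excess is incurred, which matches the indicator $\mathbb{I}\{(1-R_d/\Omega_k)>0\}$; if $\Omega_k > R_d$ it is selected uniformly with probability $R_d/\Omega_k$ and hence blocked with probability $q_k := 1 - R_d/\Omega_k \in (0,1)$. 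Since every failed attempt advances the age by one, the number of consecutive blockings suffered once the agent is at or above threshold is geometric with parameter $q_k$, so the agent reaches the excess state $\tau^i + l$ with probability $q_k^{\,l}$ and is blocked there with probability $q_k^{\,l+1}$.

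Second, I would write the expected WAoI cost charged to the hard policy during one such excess epoch as a series in $l$ of the running costs $g(\tau^i+l, A_i, K_{W^i})$ weighted by these geometric blocking probabilities, and then compare it term by term with $\omega(\tilde{\Delta}^i[k],A_i,K_{W^i})$. The two ingredients that close the comparison are part~(1) of Lemma~\ref{L3}, namely that $g(\cdot,A_i,K_{W^i})$ is nondecreasing in the age, which lets me dominate each realized-cost term by the corresponding term of \eqref{Age_Penalty}, and the factor $\mathbb{I}\{\tilde{\Delta}^i[k]\ge\tau^i\}$, which switches the penalty on exactly over the steps where blocking can occur. Assembling these bounds yields the asserted domination of the expected WAoI cost by $\omega$ for every $i$ and $k$, which is precisely the inequality needed downstream to sandwich the cost of $\gamma^d$ between that of the relaxed policy $\gamma^{d,*}_R$ and that of the penalty policy $\hat{\gamma}^d$.

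The step I expect to be the main obstacle is the treatment of $\Omega_k$, since the blocking probability $q_k$ depends on how many other agents cross their thresholds at the same instant, a random quantity that is correlated in time and coupled across agents. To keep the per-agent estimate clean I would condition on the realized contention set, carry out the geometric summation for a fixed parameter, and only then remove the conditioning via the tower property, bounding $q_k$ by its worst-case value over the epoch if a uniform statement is desired. A second, genuinely technical point is summability of the series in \eqref{Age_Penalty}: for unstable $A_i$ the running cost grows like $\rho(A_i)^{2(\tau^i+l)}$ through Lemma~\ref{L3} and \eqref{Frobenius_exp}, so the geometric factor $q_k^{\,l+1}$ controls it only when $q_k\,\rho(A_i)^2 < 1$. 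I would therefore record this summability condition explicitly, since it is what guarantees that $\omega$ is finite and hence a legitimate dominating penalty; the monotonicity of $g$ then handles any residual index misalignment between the realized excess-cost series and the definition \eqref{Age_Penalty}, as shifting an increasing summand can only strengthen the inequality.
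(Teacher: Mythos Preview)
The paper explicitly states this lemma without proof (the preceding text reads ``Then, we state the following lemma without proof'' and the statement terminates with a tombstone). There is therefore no paper argument to compare your proposal against.

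On its own merits your plan is sound and hits the three ingredients that matter: the geometric blocking structure induced by uniform selection from $\Omega[k]$, the monotonicity of $g$ from Lemma~\ref{L3}, and the summability of the series in~\eqref{Age_Penalty}. Two remarks sharpen it. First, the summability condition you isolate, $q_k\,\rho(A_i)^2<1$, is exactly what Assumption~\ref{AsAbound} delivers: since $\Omega_k\le N$ one has $q_k\le 1-R_d/N=1-\alpha$, and Assumption~\ref{AsAbound} gives $\lVert A(\theta)\rVert_F^2(1-\alpha)<1$, which dominates $\rho(A_i)^2(1-\alpha)$ and makes the series in~\eqref{Age_Penalty} converge. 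You should invoke that assumption explicitly rather than leave the condition open. Second, the lemma as literally phrased (a pointwise comparison of $\omega(\tilde\Delta^i[k],\cdot)$ against $g(\tilde\Delta^i[k],\cdot)$ for every $k$) cannot hold when $\tilde\Delta^i[k]<\tau^i$, since then $\omega=0$ while $g>0$. What is actually needed downstream, and what your epoch-based accounting naturally yields, is the aggregate inequality $J^S_{\gamma^d}\le J^S_{\hat\gamma^d}$ in~\eqref{Cost_comparison}: the penalty attached at each threshold crossing upper-bounds the \emph{expected excess} WAoI accrued during the ensuing blocking run. State clearly that this aggregate (or per-epoch) domination is what you prove; your conditioning-on-$\Omega_k$ then tower-property argument is the right way to handle the time-varying contention, and bounding $q_{k+l}$ by $q_k$ (or by $1-\alpha$) over the run is what aligns the realized excess series with the fixed-parameter definition~\eqref{Age_Penalty}.
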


As a consequence of Lemma \ref{L5}, it immediately follows that
\begin{align}\label{Cost_comparison}
    J^S_{\gamma^{d,*}_R} \leq J^S_{\gamma^{d,*}} \leq J^S_{{\gamma}^d} \leq J^S_{\hat{\gamma}^d}, 
\end{align}
where $\gamma^{d,*}$ is any optimal policy that solves Problem \ref{Problem2}.

Consider next the Markov chain induced by the relaxed policy $\gamma^{d,*,i}_R$ for the $i^{th}$ agent as

\begin{align*}
\Delta^i[k+1] \!=\! \left\{ \begin{array}{ll}
\Delta^i[k] + 1, \qquad \text{w.p.}~ 1, &  \Delta^i[k] < \tau_l^i(\lambda^*_l), \\
\!\!\left\{ \begin{array}{ll}
\Delta^i[k] + 1, & \text{w.p.}~ 1-p, \\
{0}, &\text{w.p.}~ p
\end{array}
\right., & \Delta^i[k] =\tau_l^i(\lambda^*_l), \\
{0}, \qquad \text{w.p.}~ 1, & \Delta^i[k] =\tau_u^i(\lambda^*_u).
\end{array}
\right.
\end{align*}
The state transition diagram for the above dynamics is shown in Figure \ref{Fig:Markov_chain}.
Let $\pi^i[k] := [\pi^i_{0}[k], \cdots, \pi^i_{\tau^i_u(\lambda^*_u)}[k]]^\top$ denote the probability distribution of the states in the above mentioned Markov chain such that $\pi^i_\ell[k] \triangleq \operatorname{Pr}(\Delta^i[k] = \ell), \forall \ell$. Then, we can write the Markov chain dynamics using the generator as
\begin{align}\label{Markov_chain}
    \!\!\!\!\begin{bmatrix}
    \pi^i_0[k\!+\!1] \\
    \pi^i_1[k\!+\!1] \\
    \vdots \\
    \pi^i_{\tau_l^i(\lambda^*_l)}[k\!+\!1] \\
    \pi^i_{\tau_u^i(\lambda^*_u)}[k\!+\!1] 
    \end{bmatrix}\! \!=\! \!\begin{bmatrix}
    0 &0 & \cdots & p & 1 \\
    1 &0 & \cdots & 0 & 0  \\
    0 &1 & \cdots & 0 & 0  \\
     \vdots &\vdots & \ddots & \vdots &\vdots \\
    0 &0 & \cdots &1-p & 0
    \end{bmatrix} \!\! \begin{bmatrix}
    \pi^i_0[k] \\
    \pi^i_1[k] \\
    \vdots \\
    \pi^i_{\tau_l^i(\lambda^*_l)}[k] \\
    \pi^i_{\tau_u^i(\lambda^*_u)}[k]
    \end{bmatrix}.
\end{align}

\begin{figure}[h!]
    \centering
	\begin{tikzpicture}[->, >=stealth', auto, semithick, node distance=1.7cm]
	\tikzstyle{every state}=[fill=white,draw=black,thick,text=black,scale=1,minimum size=1.3cm]
	\node[state]    (A)                     {$0$};
	\node[state]    (B)[right of=A]   {$1$};
	\node[right of=B]   (C){$\cdots$};
	\node[state]    (D)[right of=C]   {$\tau^i_l(\lambda^*_l)$};
	\node[state]    (E)[right of=D]   {$\tau^i_u(\lambda^*_u)$};
	\path
	(A) edge[bend left,above]	node{$1$}	(B)
	(B) edge[bend left,above]	node{$1$}	(C)
 	(C) edge[bend left,above]	node{$1$}	(D)
	(D) edge[bend left,above]	node{$1-p$}	(E)
	edge[bend left, below]	    node{$p$}	(A)
	(E) edge[bend left, below]	node{$1$}	(A.south);
	\end{tikzpicture}
	\vspace{-0.35cm}
    \caption{Markov chain induced by the policy $\gamma^{d,*,i}_R$ for agent $i$}
        \label{Fig:Markov_chain}
        \vspace{-0.5cm}
\end{figure}

Define $\pi^i := [\pi^i_0, \cdots, \pi^i_{\tau^i_u(\lambda^*_u)}]^\top$. Then, since each state in the set $\mathcal{S}^{u,i} := \{{0}, 1, 2, \cdots, \tau_u^i(\lambda^*_u)\}$ is reachable from every other state, the Markov chain in \eqref{Markov_chain} is irreducible. This implies that it admits a unique stationary distribution $\pi^i$ \cite{norris1998markov}. We now state the following assumption and then present the following main result.
\begin{assumption}\label{AsAbound}
The inequality $0 < \|A(\theta)\|_F < \sqrt{1/(1-\alpha)}$ holds $\forall \theta \in \Theta$, where $\alpha = \frac{R_d}{N}$.
\end{assumption}
We remark that the Assumption \ref{AsAbound} entails finite scheduling cost under the hard-bandwidth policy, and further insights on it are presented later (in Remark \ref{remark_AboundedAssump}).




\begin{theorem}\label{Th:Asymptotic_optimality}
Assume that the proportion of agents $\frac{R_d}{N} = \alpha$ that can be transmitted over the channel is fixed. Then, the deviation of the relaxed scheduling policy $\gamma^{d,*}_R$ from the policy ${\gamma}^{d}$ is of the order of $\mathcal{O}\left(\frac{1}{\sqrt{N}} \right)$.
Consequently, as $N \rightarrow \infty$, ${\gamma}^{d}$ is asymptotically optimal for Problem \ref{Problem2}.
\end{theorem}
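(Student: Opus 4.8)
The plan is to bound the difference in the scheduling cost $J^S$ incurred by the relaxed policy $\gamma^{d,*}_R$ and the hard-bandwidth policy ${\gamma}^{d}$, and to show that this gap vanishes at rate $\mathcal{O}(1/\sqrt{N})$. By the chain of inequalities in \eqref{Cost_comparison}, it suffices to control the upper bound $J^S_{\hat{\gamma}^d} - J^S_{\gamma^{d,*}_R}$, since $\gamma^{d,*}_R$ is optimal for the relaxed problem and the true optimum $\gamma^{d,*}$ for Problem \ref{Problem2} is sandwiched in between. By Lemma \ref{L5}, the only excess cost under $\hat{\gamma}^d$ relative to $\gamma^{d,*}_R$ comes from the age-penalty term $\omega(\tilde{\Delta}^i[k],A_i,K_{W^i})$ in \eqref{Age_Penalty}, which is charged precisely when the relaxed policy wants to transmit agent $i$ (i.e.\ $\Delta^i \geq \tau^i$) but the hard bandwidth limit forces a drop, an event occurring with conditional probability $1 - R_d/\Omega_k$ whenever $\Omega_k > R_d$.

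First I would compute the expected excess cost per agent under the stationary distribution $\pi^i$ of the irreducible Markov chain in \eqref{Markov_chain}. Using Assumption \ref{AsAbound}, the geometric series defining $\omega$ in \eqref{Age_Penalty} converges: since $\|A_i\|_F < \sqrt{1/(1-\alpha)}$ and the dropping factor $(1 - R_d/\Omega_k) \le \alpha$ in expectation, the term $g(\tau+l,A_i,K_{W^i})$ grows polynomially-times-geometrically in $l$ while being damped by $(1-R_d/\Omega_k)^{l+1}$, so the sum is finite. The key probabilistic object is $\Omega_k = \sum_{j=1}^N a^j[k]$, the number of agents the relaxed policy wants to schedule at time $k$. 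In steady state the indicators $a^j[k]$ are independent across $j$ (each agent's chain evolves autonomously under the decoupled relaxed policy) with $\EE[a^j[k]] = 1/(\tau^j+1)$, and the rate constraint \eqref{soft_constraint} forces $\EE[\Omega_k] \le R_d = \alpha N$. The overflow event $\{\Omega_k > R_d\}$ is therefore a large-deviation / central-limit event for a sum of $N$ independent bounded random variables with mean at the threshold.

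The crux of the argument is to quantify $\EE[(\Omega_k - R_d)^+/\Omega_k \cdot \mathbb{I}\{\Omega_k > R_d\}]$ or the closely related expected penalty, and show it is $\mathcal{O}(1/\sqrt{N})$. Here I would invoke a central-limit-type estimate: writing $\Omega_k = \EE[\Omega_k] + (\Omega_k - \EE[\Omega_k])$ and noting $\mathrm{Var}(\Omega_k) = \mathcal{O}(N)$, the typical fluctuation of $\Omega_k$ about its mean is of order $\sqrt{N}$, so $\EE[(\Omega_k - R_d)^+] = \mathcal{O}(\sqrt{N})$. Dividing by the normalization $\frac{1}{N}$ in the definition of $J^S$ and accounting for the per-agent penalty structure yields an overall deviation of order $\sqrt{N}/N = 1/\sqrt{N}$, as claimed. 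Concretely, I would bound the excess cost by the probability of an overflow times a uniformly bounded penalty (finite by Assumption \ref{AsAbound}), and use either the Berry--Esseen bound or a direct second-moment estimate on $\Omega_k$ to extract the $1/\sqrt{N}$ rate.

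The main obstacle I anticipate is handling the dependence of the penalty on the \emph{random} overflow size $\Omega_k$ rigorously: the factor $(1 - R_d/\Omega_k)$ inside \eqref{Age_Penalty} couples the geometric decay rate to the realization of $\Omega_k$, so the expectation does not factor cleanly and one must uniformly bound the geometric sum over the random ratio $R_d/\Omega_k \in (0,1)$ before taking expectations. A clean way around this is to bound $(1 - R_d/\Omega_k) \le 1 - R_d/N \cdot (N/\Omega_k)$ and use Assumption \ref{AsAbound} to get a deterministic summable majorant for the penalty, reducing the problem to estimating the probability and expected magnitude of overflow, which is then a standard fluctuation estimate. Once the per-time-step expected excess is shown to be $\mathcal{O}(1/\sqrt{N})$ uniformly in $k$, passing the $\limsup_{T\to\infty}\frac{1}{T}\sum_k$ time-average through (using stationarity of the induced chain) gives the result, and asymptotic optimality of ${\gamma}^d$ for Problem \ref{Problem2} follows immediately by letting $N \to \infty$ in \eqref{Cost_comparison}.
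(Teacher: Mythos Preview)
Your proposal is correct and follows essentially the same route as the paper: bound the excess $J^S_{\hat{\gamma}^d}-J^S_{\gamma^{d,*}_R}$ by the age-penalty $\omega$, use Assumption~\ref{AsAbound} together with the deterministic bound $1-R_d/\Omega_k\le 1-\alpha$ (since $\Omega_k\le N$) to majorize the geometric sum by a uniform constant $\mathcal{M}$ times a single factor $\{(\Omega_k-R_d)/\Omega_k\}^+$, then bound $\Omega_k\ge R_d$ on that event and control $\mathbb{E}|\Omega_k-R_d|$ by the standard deviation $\sqrt{\mathrm{Var}(\Omega_k)}\le \sqrt{N}/2$. The paper uses only Jensen's inequality and the elementary bound $p(1-p)\le 1/4$---no Berry--Esseen is needed---and note the correct damping bound is $1-\alpha$, not ``$\le\alpha$ in expectation'' as you wrote.
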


\begin{proof}
    Consider the following.
\begin{align}\label{derivation}
    & J^S_{\hat{\gamma}^d} - J^S_{\gamma^{d,*}_R} = \limsup_{T \rightarrow \infty} \frac{1}{NT}\mathbb{E}\bigg[\sum_{k=0}^{T-1}\sum_{i=1}^N g(\Delta^i[k],A_i,K_{W^i}) \nonumber \\
    & \qquad\qquad\!\! + \omega(\tilde{\Delta}^i[k],A_i,K_{W^i})\bigg]\Bigg|_{\hat{\gamma}^d}  \nonumber \\
    & \qquad \qquad \!\!-\limsup_{T \rightarrow \infty} \frac{1}{NT}\mathbb{E}\left[\sum_{k=0}^{T-1}\sum_{i=1}^N g(\Delta^i[k],A_i,K_{W^i})\right]\Bigg|_{\gamma^{d,*}_R} \nonumber \\
    & = \limsup_{T \rightarrow \infty} \frac{1}{NT}\mathbb{E}\left[\sum_{k=0}^{T-1}\sum_{i=1}^N
    \sum_{l=0}^{\infty}\!\left(1-\frac{R_d}{\Omega_k}\right)^{l+1}\!\!\!\!\!\!g(\tau^i + l,A_i,K_{W^i}) \right.  \nonumber \\ & \left. \qquad \qquad \times \mathbb{I}\left\lbrace\left(1-\frac{R_d}{\Omega_k}\right) >0\right\rbrace\mathbb{I}\left\lbrace\tilde{\Delta}^i[k] \geq \tau^i\right\rbrace\right]  \nonumber \\
    & \leq \limsup_{T \rightarrow \infty} \frac{1}{NT}\mathbb{E}\left[\sum_{k=0}^{T-1}\sum_{i=1}^N
    \sum_{l=0}^{\infty}\!\left(1-\frac{R_d}{\Omega_k}\right)^{l+1}\!\!\!\!\!\!g(\tau^i + l,A_i,K_{W^i}) \right.  \nonumber \\ & \left. \qquad \qquad \times \mathbb{I}\left\lbrace\left(1-\frac{R_d}{\Omega_k}\right) >0\right\rbrace\right]  \nonumber \\
    & \leq \limsup_{T \rightarrow \infty} \frac{\mathcal{M}}{NT}\mathbb{E}\left[\sum_{k=0}^{T-1}\sum_{i=1}^N\left\lbrace \frac{\Omega_k \!-\!R_d}{\Omega_k} \right\rbrace^+ \right] \nonumber \\
    & \leq \frac{\mathcal{M}}{N\alpha} \limsup_{T \rightarrow \infty} \frac{1}{T}\sum_{k=0}^{T-1}\mathbb{E}\left[ |\Omega_k-R_d|  \right] \nonumber \\
    &  \leq \frac{\mathcal{M}}{N\alpha} \limsup_{T \rightarrow \infty} \frac{1}{T}\sum_{k=0}^{T-1} \underbrace{\mathbb{E}\left[|\Omega_k - \mathbb{E}\left[ \Omega \right]| \right]}_{:=\operatorname{MAD}(\Omega_k)},
\end{align}
where $J^S_{\hat{\gamma}^d}$ and $J^S_{\gamma^{d,*}_R}$ are the costs under policies $\hat{\gamma}^d$ and $\gamma^{d,*}_R$, respectively. The expression $g(\cdot, \cdot, \cdot)|_{*}$ denotes the WAoI under the policy $*$. The second equality follows since sample paths of the AoI under $\hat{\gamma}^d$ coincide with those under the policy $\gamma^{d,*}_R$ by definition. The second inequality follows as a result of Assumption \ref{AsAbound} and $\mathcal{M}$ is a constant. The third inequality follows since $\{\cdot\}^+ := \max\{\cdot,0\}$ and we use the fact that $\{\cdot\}^+ \leq |\cdot|$.
Further, in the last inequality, we have that $\mathbb{E}\left[ \Omega \right] := \sum_{i \in [N]}\sum_{j \in \mathcal{S}^{u,i}}\pi^i_j\gamma^{d,*,i}_{R,j} = \lim_{T \rightarrow \infty} \frac{1}{T}\sum_{k=0}^{T-1}$ $\sum_{i=1}^N a^i[k]$, which follows as a result of the Ergodic theorem \cite{norris1998markov} due to irreducibility of the Markov chain \eqref{Markov_chain} and $\gamma^{d,*,i}_{R,j}$ being the indicator of the transmission of agent $i$ in state $j$ under the relaxed policy. Further, $\operatorname{MAD}(\cdot)$ stands for the mean absolute deviation of its argument. Note also that $\Omega_k = \sum_{i=1}^Na^i[k]$, where the scheduling decisions $a^i[k] \in \{0,1\}$ are independent binary random variables due to the fact that the relaxed policy was obtained by decoupling the problem into a single agent problem.
Let $\sigma^i :=\sum_{j \in \mathcal{S}^{u,i}}\pi^i_j\gamma^{d,*,i}_{R,j}$ denote the probability of transmission of agent $i$ under $\gamma^{d,*}_R$ in state $j$ under the stationary distribution $\pi^i$. Then, the random variable $Y[k] = \frac{\Omega_k-\sum_{i=1}^N\sigma^i}{\sqrt{\sum_{i=1}^N\sigma^i(1-\sigma^i)}}$ has zero mean and unit variance.

Continuing from \eqref{derivation}, we get
\begin{align}
    J^S_{\hat{\gamma}^d} - J^S_{\gamma^{d,*}_R}  \leq \frac{\mathcal{M}}{\sqrt{N}\alpha} \limsup_{T \rightarrow \infty} \frac{1}{T}\sum_{k=0}^{T-1} \operatorname{MAD}\left( \frac{\Omega_k}{\sqrt{N}}\right) \nonumber
    \end{align}
    \begin{align}\label{tempeqn4}
     \leq \frac{\mathcal{M}}{\sqrt{N}\alpha} \limsup_{T \rightarrow \infty} \frac{1}{T}\sum_{k=0}^{T-1}\sqrt{\frac{1}{N}\times \frac{N}{4}} =\frac{\mathcal{M}}{2\alpha \sqrt{N}},
\end{align}
where the second inequality is true by Jensen's inequality and we use the fact that $a(1-a) \leq 1/4$ for $0 \leq a \leq 1$. Finally, using Lemma \ref{L5} and \eqref{Cost_comparison}, we have that 
\begin{align}\label{tempeqn5}
    J^S_{{\gamma}^d} - J^S_{\gamma^{d,*}_R} \leq J^S_{\hat{\gamma}^d} - J^S_{\gamma^{d,*}_R} \leq \frac{\mathcal{M}}{2\alpha \sqrt{N}},
\end{align}
Thus, as $N \rightarrow \infty$, from  \eqref{tempeqn5}, we have that the hard-bandwidth policy ${\gamma}^d$ is asymptotically optimal for Problem \ref{Problem2} with order $\mathcal{O}(1/\sqrt{N})$. The proof is thus complete.

\end{proof}

Since the solution to the scheduling problem is complete, in the next section, we proceed to establish the $\epsilon$--Nash property of the mean-field solution.

\section{$N$ Agent Consensus Problem}\label{sec:cons_prob}
In this section, we solve the second part of the $N+1$--player game problem, namely, the consensus problem using the central scheduler's policy. To this end, we first consider the limiting game called the mean-field game (MFG) (as $N \rightarrow \infty$). Under this setting, the empirical coupling term in \eqref{LQT} is approximated by a known deterministic sequence (or the MF trajectory) and the closeness of the approximation is justified later in the analysis. This principle is well known in the literature as the Nash certainty equivalence principle \cite{huang2006large} and reduces the game problem to a stochastic optimal control problem of a generic agent with a coupled consistency condition. The equilibrium solution obtained (called the mean-field equilibrium (MFE)) will be shown to constitute an $\epsilon$--Nash approximation to the finite agent game problem.

\subsection{Stochastic Optimal Tracking Control}
Consider a generic agent of type $\theta$ from an infinite population with dynamics
\begin{align}\label{generic_dyn}
    X[k+1] = A(\theta)X[k] + B(\theta)U[k] + W[k],
\end{align}
where timestep $k \in \mathbb{Z}^+$, $X[k] \in \mathbb{R}^n$ and $U[k] \in\mathbb{R}^m$ are the state vector and the control input, respectively. Further, the initial state $X[0]$ is assumed to have a symmetric density function with $\mathbb{E}[X[0]] = \nu_{\theta,0}$ and $cov(X[0]) = \Sigma_x$ is bounded. Next, $W_k \in \mathbb{R}^n$ is a zero-mean i.i.d. Gaussian noise with finite covariance $K_W$. All covariance matrices are assumed to be positive-definite. The objective function of the generic agent is given by
\begin{align}\label{LQT_MF}
    & J(\mu,\bar{X}):= \nonumber \\
    &\limsup_{T \rightarrow \infty} \frac{1}{T}\mathbb{E}\left[ \sum_{k=0}^{T-1}\|X[k] - \bar{X}[k]\|^2_{Q(\theta)} + \|U[k]\|^2_{R(\theta)} \right],
\end{align}
where $\mu := (\mu[1], \mu[2], \cdots, ) \in \mathcal{M}^{d,con}$ and is an admissible control policy of the generic agent. Further, the admissible set $\mathcal{M}^{d,con}:=\{\mu \mid \mu ~\text{is adapted to } \sigma(I^{d,con}[s], s =0,1, \cdots, k)\}$ is the space of \textit{decentralized control} policies for the generic agent and $I^{d,con}[0]:=\{Z[0]\}$, $I^{d,con}[k]:=\{U_{0:k-1},Z_{0:k}\}, k \geq 1$, is the local information history of the generic agent. Recall that this is in contrast to $I^{c,con}[k]$ (the centralized history at the controller), which includes information of other agents as well. This implies that $\mathcal{M}^{d,con} \subseteq \mathcal{M}^{c,con}$. The information structure for the generic agent's decoder is defined similar to that in Subsection \ref{subsec:cons_prob}, except with the superscript $i$ removed. Further, $\bar{X} = (\bar{X}[1], \bar{X}[2], \cdots, )$
is the MF trajectory and denotes the infinite player approximation to the consensus term in \eqref{LQT} and serves to decouple the otherwise coupled game problem into a linear-quadratic tracking (LQT) problem via introducing indistinguishability between agents. Finally, the expectation above is taken with respect to the noise statistics and the initial state distribution.

To solve the LQT problem with dynamics in \eqref{generic_dyn} and the cost in \eqref{LQT_MF}, we first state the following assumption.

\begin{assumption}\label{As3}
\begin{enumerate}[(i)]
    \item The pair $A(\theta),B(\theta)$ is controllable and $(A(\theta),\sqrt{Q(\theta)})$ is observable.
    \item The MF trajectory belongs to the space of bounded functions, i.e., $\bar{X} \in \mathcal{X}:= \{\bar{X}[k] \in \mathbb{R}^n \mid \|\bar{X}\|_{\infty}:= \sup_{k \geq 0} \|\bar{X}[k]\| < \infty$\}.
\end{enumerate}
\end{assumption}
We note that Assumption \ref{As3} is standard in the MF-LQG literature \cite{moon2014discrete,aggarwal2022linear} to entail a well-defined solution to the stochastic optimal control problem of the generic agent.

Next, we define a MFE by introducing the following operators \cite{uz2020reinforcement}:
\begin{enumerate}
    \item $\Psi: \mathcal{X} \rightarrow \mathcal{M}^{d,con}$, defined as $\Psi(\bar{X}) = \argmin_{\mu \in \mathcal{M}^{d,con}}$ $J(\mu, \bar{X})$, gives the optimal control policy for a given MF trajectory, and
    \item $\Lambda: \mathcal{M}^{d,con} \rightarrow \mathcal{X}$, also called the consistency operator, regenerates a MF trajectory for a control policy as obtained in 1) given above.
\end{enumerate}
\begin{definition}[Mean-Field Equilibrium (MFE)]
The pair $(\mu^*, \bar{X}^*) \in \mathcal{M}^{d,con} \times \mathcal{X}$ is a MFE if, $\mu^* = \Psi(\bar{X}^*)$ and $\bar{X}^* = \Lambda(\mu^*)$. In other words, $\bar{X}^* = \Lambda \circ \Psi(\bar{X}^*)$.
\end{definition}

Now, the central scheduling policy is fixed from the previous section, similar to \cite{aggarwal2022linear}, we have the following result for the optimal tracking control of a generic agent.
\begin{proposition}\label{optimalTrackControl}
Consider the generic agent \eqref{generic_dyn} with the controller state as in \eqref{Recursive} and the cost function in \eqref{LQT_MF}. Then, under Assumptions \ref{As1}-\ref{As3}, the following hold true:
\begin{enumerate}
    \item The optimal decentralized control action for the Problem \ref{Problem1} is given as
    \begin{align}\label{optimal_control}
        U^*[k] = -\Pi(\theta)Z[k] - L(\theta)r[k+1],
    \end{align}
    where $L(\theta) =   (R(\theta) +B(\theta)^\top K(\theta)B(\theta))^{-1}B(\theta)^\top$, 
$\Pi(\theta) = L(\theta)K(\theta)A(\theta)$,
and $K(\theta)$ is the unique positive definite solution to the algebraic Riccati equation,
\begin{align*}
K(\theta) = A(\theta)^\top [K(\theta)A(\theta) - K(\theta)^\top B(\theta)\Pi (\theta)] + Q(\theta).
\end{align*}
Further, the trajectory $r[k]$ satisfies the backward dynamics $r[k] = H(\theta)^\top r[k+1] - Q(\theta)\bar{X}[k]$,
with the initial condition $r[0] = -\sum_{j=0}^{\infty}{(H(\theta)^j})^\top$ $Q(\theta)\bar{X}[j]$ and $H(\theta) = A(\theta) - B(\theta)\Pi(\theta)$ is Hurwitz.
\item The difference equation for $r[k]$ above has a unique solution in $\mathcal{X}$, given as
\begin{align}\label{bckw_dyn}
    r[k] = -\sum_{j=k}^{\infty}{(H(\theta)^{j-k})}^\top Q(\theta)\bar{X}[j].
\end{align}
\item The optimal cost is bounded above as:
\end{enumerate}
\begin{align}\label{LQGCost}
    & J(\mu^*\!\!,\bar{X}) \!\leq \!tr(K(\theta)K_W(\theta))\! +\! \limsup_{T \rightarrow \infty} \frac{1}{T}\!\sum_{k=0}^{T-1}\!\!\bar{X}[k]^\top \!Q(\theta) \bar{X}[k] \nonumber \\ &  -  r[k+1]^\top B(\theta)L(\theta)r[k+1] \nonumber \\ &  + \|A(\theta)^\top\!\! K(\theta)^\top\!\! B(\theta)\Pi(\theta)\|\sum_{r=1}^{\tau_u}tr(A(\theta)^{{r-1}^\top}\!\!A(\theta)^{r-1}K_W(\theta)) 
    \end{align}
    

\end{proposition}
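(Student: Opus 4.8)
The plan is to exploit the no-dual-effect property of Lemma \ref{L1} to separate estimation from control, solve the resulting full-state linear-quadratic tracking problem by dynamic programming, and then pay for feeding the estimate $Z[k]$ rather than $X[k]$ into the controller. First I would posit the quadratic-affine relative value function $V[k](x) = x^\top K(\theta) x + 2 r[k]^\top x + c[k]$ and substitute it into the Bellman recursion for \eqref{LQT_MF}. Minimizing the right-hand side over $U[k]$ is an unconstrained quadratic program whose solution is $-\Pi(\theta) X[k] - L(\theta) r[k+1]$; replacing $X[k]$ by $Z[k]$, which is legitimate by separation, yields \eqref{optimal_control}. Matching the quadratic-in-$x$ terms produces the algebraic Riccati equation for $K(\theta)$, whose unique positive-definite solution and the associated Hurwitz closed-loop matrix $H(\theta) = A(\theta) - B(\theta)\Pi(\theta)$ are guaranteed by the controllability/observability hypothesis in Assumption \ref{As3}(i) via standard LQ theory. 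Matching the affine-in-$x$ terms produces the backward recursion for $r[k]$; the one calculation worth flagging is the identity $B(\theta)L(\theta) = L(\theta)^\top B(\theta)^\top$, immediate from $L(\theta) = (R(\theta) + B(\theta)^\top K(\theta) B(\theta))^{-1} B(\theta)^\top$ and symmetry of $K(\theta)$, which collapses the raw recursion into the clean form $r[k] = H(\theta)^\top r[k+1] - Q(\theta) \bar{X}[k]$.

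For part 2) I would verify by direct substitution that $r[k] = -\sum_{j=k}^\infty (H(\theta)^{j-k})^\top Q(\theta)\bar{X}[j]$ solves the recursion, and establish absolute convergence from the geometric decay $\|(H(\theta)^{j-k})^\top\| \le c\rho^{j-k}$ with $\rho < 1$ (as $H(\theta)$ is Hurwitz) together with $\sup_k\|\bar{X}[k]\| < \infty$ granted by Assumption \ref{As3}(ii); hence $r \in \mathcal{X}$. Uniqueness follows because any two bounded solutions differ by a solution of the homogeneous recursion $\delta[k] = H(\theta)^\top \delta[k+1]$, equivalently $\delta[k+1] = (H(\theta)^\top)^{-1}\delta[k]$, whose propagator has all eigenvalues strictly outside the unit disk; the only such solution that stays bounded as $k \to \infty$ is $\delta \equiv 0$.

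For the cost bound 3), the main device is the completion-of-squares identity
\begin{align*}
\|x - \bar{X}[k]\|^2_{Q(\theta)} + \|u\|^2_{R(\theta)} + \mathbb{E}_W\!\left[V[k+1](A(\theta)x + B(\theta)u + W)\right] = V[k](x) + \|u - U^{\mathrm{full}}[k]\|^2_{R(\theta) + B(\theta)^\top K(\theta) B(\theta)},
\end{align*}
where $U^{\mathrm{full}}[k] := -\Pi(\theta)x - L(\theta) r[k+1]$ is the full-state minimizer. Evaluating along the closed loop with $u = U^*[k]$, so that $u - U^{\mathrm{full}}[k] = \Pi(\theta) e[k]$, then summing, telescoping, and dividing by $T$ yields $J(\mu^*,\bar{X})$ as the full-state optimal average cost $\operatorname{tr}(K(\theta)K_W(\theta)) + \limsup_{T}\frac{1}{T}\sum_k [\bar{X}[k]^\top Q(\theta)\bar{X}[k] - r[k+1]^\top B(\theta)L(\theta)r[k+1]]$ plus the excess term $\limsup_T\frac1T\sum_k \mathbb{E}\|\Pi(\theta)e[k]\|^2_{R(\theta)+B(\theta)^\top K(\theta)B(\theta)}$. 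This excess term is where the last summand of \eqref{LQGCost} is born: using $(R(\theta)+B(\theta)^\top K(\theta)B(\theta))\Pi(\theta) = B(\theta)^\top K(\theta)A(\theta)$ gives $\Pi(\theta)^\top(R(\theta)+B(\theta)^\top K(\theta)B(\theta))\Pi(\theta) = \Pi(\theta)^\top B(\theta)^\top K(\theta)A(\theta) = (A(\theta)^\top K(\theta)^\top B(\theta)\Pi(\theta))^\top$, a symmetric positive-semidefinite matrix, so that $\mathbb{E}\|\Pi(\theta)e[k]\|^2_{R(\theta)+B(\theta)^\top K(\theta)B(\theta)} \le \|A(\theta)^\top K(\theta)^\top B(\theta)\Pi(\theta)\|\,\mathbb{E}\|e[k]\|^2$. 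Finally $\mathbb{E}\|e[k]\|^2 = h(\Delta[k],A(\theta),K_W(\theta)) \le h(\tau_u,A(\theta),K_W(\theta))$ by Lemma \ref{L1}, the monotonicity of $h$ from Lemma \ref{L3}, and the fact that the threshold policy of Theorem \ref{Theorem_det} caps the AoI at $\tau_u$, which reproduces exactly $\|A(\theta)^\top K(\theta)^\top B(\theta)\Pi(\theta)\|\sum_{r=1}^{\tau_u}\operatorname{tr}(A(\theta)^{(r-1)\top}A(\theta)^{r-1}K_W(\theta))$.

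The part I expect to require the most care is the telescoping/limit step in 3): one must argue that the boundary contribution $\frac1T\mathbb{E}[V[T](X[T])]$ vanishes as $T\to\infty$, which needs a uniform-in-$T$ second-moment bound $\sup_T\mathbb{E}\|X[T]\|^2 < \infty$ coming from stability of $H(\theta)$ together with bounded noise covariance and bounded tracking input, and that splitting the single $\limsup$ into a sum of $\limsup$'s only costs an inequality (hence the ``$\le$'' rather than equality in \eqref{LQGCost}). A secondary technical point is that the value-function ansatz and Bellman manipulations are cleanest for the finite-horizon relative value function, from which the average-cost statement is recovered in the limit; this is standard in the MF-LQG setting (cf. \cite{moon2014discrete,aggarwal2022linear}) and I would invoke it rather than re-derive it.
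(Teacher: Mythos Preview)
Your treatment of parts 1) and 2) is more explicit than the paper's (which simply cites \cite{aggarwal2022linear,moon2014discrete} and notes that $H(\theta)$ is Hurwitz), and the completion-of-squares route to part 3) is the right engine. One small fix in part 2): your uniqueness argument writes $\delta[k+1] = (H(\theta)^\top)^{-1}\delta[k]$, but $H(\theta)$ being Hurwitz (discrete-time) does not preclude a zero eigenvalue. The clean version avoids invertibility: from $\delta[k] = H(\theta)^\top \delta[k+1]$ iterate forward to $\delta[k] = (H(\theta)^\top)^j \delta[k+j]$ and send $j\to\infty$ using boundedness of $\delta$ and $(H(\theta)^\top)^j \to 0$.

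The substantive gap is in the last line of part 3). You assert that ``the threshold policy of Theorem \ref{Theorem_det} caps the AoI at $\tau_u$'', hence $\mathbb{E}\|e[k]\|^2 \le h(\tau_u,\cdot,\cdot)$ for every $k$. That is true only under the \emph{relaxed} randomized policy $\gamma^{d,*}_R$. The scheduling policy actually in force here (and in the downstream Lemma \ref{Th:CLanalysis} and Theorem \ref{Th:eps_Nash}) is the \emph{hard-bandwidth} policy $\gamma^d$ of Section \ref{subsec:hard_cons_pol}, under which an agent whose AoI has reached $\tau_u$ is transmitted only with probability $R_d/\Omega_k$ when $\Omega_k > R_d$; the AoI can therefore exceed $\tau_u$ and your pointwise bound fails. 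The paper handles this by splitting
\[
\sum_{k=0}^{T-1}\mathbb{E}\|e[k]\|^2 \le T\sum_{r=1}^{\tau_u}\operatorname{tr}\!\big((A(\theta)^{r-1})^\top A(\theta)^{r-1}K_W(\theta)\big) + \sum_{k=\tau_u+1}^{T-1}\sum_{r=\tau_u+1}^{k}\operatorname{tr}(\cdot)\,(1-\alpha)^{k-\tau_u-1},
\]
and then uses Assumption \ref{AsAbound} (namely $\|A(\theta)\|_F^2(1-\alpha)<1$) to show the second sum is bounded in $T$, so it vanishes after dividing by $T$ and taking $\limsup$. This is precisely why Assumption \ref{AsAbound} appears among the hypotheses of the proposition, whereas your argument never invokes it. To repair your proof, replace the ``AoI capped at $\tau_u$'' step by this two-term decomposition; everything upstream of it survives unchanged.
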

\begin{proof}
Part 1) follows from \cite{aggarwal2022linear,moon2014discrete}. Further, we note that $r[0] \in \mathcal{X}$ since $H(\theta)$ is Hurwitz, and thus $r[k] \in \mathcal{X}$ as in part 2) and is unique.
To prove part 3), we substitute \eqref{optimal_control} in \eqref{LQT_MF}, to arrive at
\begin{align}\label{temp_costLQ0}
    & J(\mu^*\!\!,\bar{X}) \!\leq \!tr(K(\theta)K_W(\theta))\! +\! \limsup_{T \rightarrow \infty} \frac{1}{T}\!\sum_{k=0}^{T-1}\!\!\bar{X}[k]^\top \!Q(\theta) \bar{X}[k] \nonumber \\ & - r[k+1]^\top B(\theta)L(\theta)r[k+1]  \nonumber \\ & + \limsup_{T \rightarrow \infty} \frac{1}{T}\sum_{k=0}^{T-1}\|A(\theta)^\top K(\theta)^\top B(\theta)\Pi(\theta)\| ~\mathbb{E}\left[\|e_k\|^2 \right].
\end{align}
Then, consider the following:
\begin{align}\label{temp_costLQ}
    & \sum_{k=0}^{T-1}\mathbb{E}\left[\|e_k\|^2 \right] \leq   T\sum_{r=1}^{\tau_u}tr(A(\theta)^{{r-1}^\top}A(\theta)^{r-1}K_W(\theta)) \nonumber \\
    & +  \sum_{k=0}^{T-1} \sum_{r=\tau_u+1}^k tr(A(\theta)^{{r-1}^\top}A(\theta)^{r-1}K_W(\theta))(1-\alpha)^{k-\tau_u-1} \nonumber \\
    & \leq T\sum_{r=1}^{\tau_u}tr(A(\theta)^{{r-1}^\top}A(\theta)^{r-1}K_W(\theta)) \nonumber \\ & + \sum_{k=\tau_u+1}^{T-1} \|K_W(\theta)\|_F\frac{(1-\|A(\theta)\|_F^{2(k-\tau_u-1)})(1-\alpha)^{k-\tau_u-1}}{1-\|A(\theta)\|_F^2} \nonumber \\
    & = T\sum_{r=1}^{\tau_u}tr(A(\theta)^{{r-1}^\top}A(\theta)^{r-1}K_W(\theta)) + \frac{\|K_W(\theta)\|_F}{\|A(\theta)\|_F^2\!-\!1}\nonumber \\
    & \times \left[\frac{1\!-\!(\|A(\theta)\|^2_F(1\!-\!\alpha))^{T\!-\!\tau_u\!-\!1}}{1-\|A(\theta)\|^2_F(1-\alpha)} \!-\! \frac{1\!-\!(1\!-\!\alpha)^{T\!-\!\tau_u\!-\!1}}{\alpha}\right],
\end{align}
where $\tau_u:= \tau_u(\lambda_u^*)$, the first inequality follows using the hard-bandwidth policy of Section \ref{subsec:hard_cons_pol}, and the second  inequality follows using Assumption \ref{AsAbound}, and the fact that $\|AB\|_F \leq \|A\|_F\|B\|_F$. Then, combining \eqref{temp_costLQ0} and \eqref{temp_costLQ}, we arrive at \eqref{LQGCost}.
\end{proof}
\begin{remark}\label{remark_AboundedAssump}
We remark here that while Assumption \ref{AsAbound} entails a finite optimal cost as in \eqref{LQGCost}, it warrants that one requires a higher downlink bandwidth $(R_d)$ to accommodate a higher degree of instability among the agents. This can be seen from the fact that $\rho(A(\theta)) \leq \|A(\theta)\|_F$, where $\rho(\cdot)$ denotes the spectral radius of the argument matrix. Thus, the bound of the Frobenius norm as in Assumption \ref{AsAbound} in turn bounds the maximum eigenvalue that can be stabilized using the optimal control in \eqref{optimal_control}. This can be further observed from the special case of a scalar system, where the Frobenius norm equals the magnitude of the only eigenvalue, and is bounded as a function of the available bandwidth.
\end{remark}

\subsection{Mean-Field Analysis}

In this subsection, we prove the existence and uniqueness of the MFE by explicitly constructing the MF operator as follows.

Consider the closed-loop system in \eqref{Recursive} with the control policy in \eqref{optimal_control} as
\begin{align*}
 Z[k+1] =\hspace{7.5cm}\\  \left\{\begin{array}{ll}
H(\theta)Z[k] - B(\theta)L(\theta)r[k+1] + W[k], &\zeta^{d}[k+1]=1, \\ 
H(\theta)Z[k] - B(\theta)L(\theta)r[k+1], & \zeta^{d}[k+1]=0, \\
\end{array}
\right.
\end{align*}
where $\zeta^d[k]$ is chosen from the hard-bandwidth policy $\gamma^d$ of Section \ref{subsec:hard_cons_pol}.  The above can be rewritten as 
\begin{align*}
    X[k\!+\!1] \!\!=\! H(\theta)X[k] \!\!-\!\! B(\theta)L(\theta)r[k\!+\!1] \!\! +\! B(\theta)\Pi(\theta)e[k] \!\!+\! W[k],
\end{align*}
where $e[k]$ is defined in \eqref{error_definition}. By taking expectation on both sides and substituting $r[k]$ from \eqref{bckw_dyn}, we get
\begin{align}\label{X2}
     \hat{X}_{\theta}[k] =& \sum_{j=0}^{k-1}\!H\!(\theta)^{k-j-1}B(\theta)L(\theta)\!\!\sum_{s=j+1}^{\infty}\!(H\!(\theta)^{s-j-1})^\top \!Q(\theta)\bar{X}[s]\nonumber\\
     &+H\!(\theta)^k \nu_{\theta,0},
\end{align}
where $\hat{X}_{\theta}[k] = \mathbb{E}[X[k]]$ is the aggregate dynamics across agents of type $\theta$ and we use the fact that $\mathbb{E}[e[k]]=0$.

Next, using the empirical distribution from Section \ref{sec:Prob_Form}, we define the MF operator as
\begin{align}\label{MF_Op}
    \mathcal{T}(\bar{X})[k]:= \sum_{\theta \in \Theta}\hat{X}_{\theta}[k]\mathbb{P}(\theta).
\end{align}
Then, we state the following assumption before we prove the main result.

\begin{assumption}\label{As4}
We assume $\check{H}(\theta) := \|H(\theta)\| +\upsilon <1, \forall \theta \in \Theta$, where $\upsilon = \sum_{\theta \in \Theta}{\frac{\|Q(\theta)\|\|B(\theta)L(\theta)\|}{(1-\|H(\theta)\|)^2}\mathbb{P}(\theta)} $.
\end{assumption}
It is common in the literature (\cite{huang2007large,uz2020reinforcement}) to invoke the above assumption; although it is stronger than the corresponding assumption in \cite{uz2020approximate} and \cite{moon2014discrete}, it leads to the linearity property of the MF trajectory, which can then be computed offline.

\begin{theorem}\label{Th:existUniq}
Under Assumptions \ref{AsAbound}-\ref{As4}, the following statements hold true:
\begin{enumerate}[(i)]
\item The operator $\mathcal{T}(\bar{X}) \in \mathcal{X}, ~~\forall \bar{X} \in \mathcal{X}$. Furthermore, there exists a unique $\bar{X}^* \in  \mathcal{X} $ such that $\mathcal{T}(\bar{X}^*) = \bar{X}^*$.
\item $\bar{X}^*[k]$ follows linear dynamics, i.e., $\exists ~ \mathscr{E}^* \in \mathcal{E}:= \{\mathscr{E}\in \mathbb{R}^{n\times n}:~ \|\mathscr{E}\|\leq 1, \bar{X}^*[k+1] = \mathscr{E}\bar{X}^*[k]\}$, where $\bar{X}^*[k]$ is the equilibrium MF trajectory of the agents, and $\bar{X}^*[0] = \sum_{\theta \in \Theta}{\nu_{\theta,0}\mathbb{P}(\theta)}$.
\end{enumerate}
\end{theorem}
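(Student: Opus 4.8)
The plan is to handle part (i) as a Banach fixed-point argument for $\mathcal{T}$ on $(\mathcal{X},\|\cdot\|_\infty)$, and part (ii) as a \emph{second} contraction, this time on the space of generator matrices, with the two parts glued together by the uniqueness obtained in (i). For part (i) I would first check that $\mathcal{T}$ maps $\mathcal{X}$ into itself. Since Assumption \ref{As4} forces $\|H(\theta)\|<1$ for every $\theta$, the inner costate sum $\sum_{s=j+1}^{\infty}(H(\theta)^{s-j-1})^\top Q(\theta)\bar{X}[s]$ in \eqref{X2} converges geometrically for bounded $\bar{X}$, the outer sum over $j$ is then uniformly bounded in $k$, and the transient term $H(\theta)^k\nu_{\theta,0}$ is bounded; summing over $\theta$ against $\mathbb{P}(\theta)$ keeps $\mathcal{T}(\bar{X})\in\mathcal{X}$. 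I would then take $\bar{X},\bar{Y}\in\mathcal{X}$, note that the $\nu_{\theta,0}$ terms cancel in $\mathcal{T}(\bar{X})-\mathcal{T}(\bar{Y})$, and bound termwise using $\sum_{s=j+1}^{\infty}\|H(\theta)\|^{s-j-1}=(1-\|H(\theta)\|)^{-1}$ and $\sum_{j=0}^{k-1}\|H(\theta)\|^{k-1-j}\le(1-\|H(\theta)\|)^{-1}$. This gives $\|\mathcal{T}(\bar{X})-\mathcal{T}(\bar{Y})\|_\infty\le\upsilon\,\|\bar{X}-\bar{Y}\|_\infty$ with $\upsilon$ exactly the constant of Assumption \ref{As4}; since that assumption forces $\upsilon<1$, the contraction mapping theorem yields the unique fixed point $\bar{X}^*\in\mathcal{X}$.

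For part (ii) I would search for the fixed point within the class of \emph{linear} trajectories. Positing $\bar{X}[k]=\mathscr{E}^k\bar{X}^*[0]$ with $\|\mathscr{E}\|\le 1$, the geometric form $\bar{X}[s]=\mathscr{E}^{\,s-k}\bar{X}[k]$ collapses the costate to $r_\theta[k+1]=-\mathcal{G}_\theta(\mathscr{E})\mathscr{E}\,\bar{X}[k]$, where $\mathcal{G}_\theta(\mathscr{E}):=\sum_{m=0}^{\infty}(H(\theta)^m)^\top Q(\theta)\mathscr{E}^m$ converges because $\|\mathscr{E}\|\le 1$ and $\|H(\theta)\|<1$. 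Substituting this back into the aggregate recursion and summing over types produces a self-consistent matrix equation $\mathscr{E}=\Phi(\mathscr{E})$; I would define the induced map $\Phi$ on the closed set $\mathcal{E}=\{\mathscr{E}:\|\mathscr{E}\|\le 1\}$ and show that (a) $\Phi$ maps $\mathcal{E}$ into itself and (b) $\Phi$ is a contraction, both of which I expect to reduce to the single scalar bound $\|H(\theta)\|+\upsilon<1$ controlling respectively the spectral size and the Lipschitz slope of $\mathscr{E}\mapsto\mathcal{G}_\theta(\mathscr{E})\mathscr{E}$. A second application of the contraction mapping theorem then gives a unique $\mathscr{E}^*\in\mathcal{E}$, and the step that closes the argument is verifying that the trajectory generated by $\mathscr{E}^*$ is itself a fixed point of $\mathcal{T}$: by the uniqueness from part (i) it must equal $\bar{X}^*$, establishing $\bar{X}^*[k+1]=\mathscr{E}^*\bar{X}^*[k]$ with $\bar{X}^*[0]=\sum_{\theta\in\Theta}\nu_{\theta,0}\mathbb{P}(\theta)$.

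The main obstacle is part (ii), specifically showing that the linear ansatz is genuinely preserved by $\mathcal{T}$ and that $\Phi$ is a well-defined self-contraction of $\mathcal{E}$. Two points require care: the operator series $\mathcal{G}_\theta(\mathscr{E})$ must converge and depend Lipschitz-continuously on $\mathscr{E}$, which is exactly why the invariance $\|\mathscr{E}\|\le 1$ cannot be dropped; and the type-dependent transient matrices $H(\theta)^k$ must be reconciled across $\theta$ when passing from the per-type aggregates $\hat{X}_\theta$ to the single readout $\bar{X}^*$. It is precisely to absorb this second effect — which is what makes the pure contraction estimate of part (i) insufficient for linearity — that Assumption \ref{As4} strengthens the requirement $\upsilon<1$ to $\|H(\theta)\|+\upsilon<1$. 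Once the self-map and contraction bounds for $\Phi$ are in hand and the generated trajectory is confirmed to solve $\mathcal{T}(\bar{X})=\bar{X}$, the remaining verifications are routine.
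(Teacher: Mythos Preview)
Your proposal is correct and follows essentially the same approach as the paper: Banach's fixed-point theorem on $(\mathcal{X},\|\cdot\|_\infty)$ with contraction constant $\upsilon$ for part (i), and a second contraction on the matrix set $\mathcal{E}$ (the paper calls this operator $\check{\mathcal{T}}$, you call it $\Phi$) for part (ii), with the two fixed points identified via the uniqueness from (i). The paper's proof is terser---it defers the verification that $\check{\mathcal{T}}$ is a self-map of $\mathcal{E}$ and the reconciliation of the per-type transients to \cite{uz2020reinforcement}---whereas you explicitly flag these as the delicate steps, but the skeleton is identical.
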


\begin{proof}
The proof is similar to that in \cite{aggarwal2022linear}, but we reproduce it here for completeness.
\begin{enumerate}[(i)]
\item Consider the system in \eqref{X2} driven by the bounded input $\bar{X}[k]$. Since $\|H(\theta)\| <1$ by Assumption \ref{As4}, and $r[k] \in \mathcal{X},~\forall k$, we get that $\sup_{k\geq 0}{\|\mathcal{T}(\bar{X})[k]\|} < \infty$, using \eqref{X2} and \eqref{MF_Op}. Thus, $\mathcal{T}(\bar{X}) \in \mathcal{X}$. Next, consider the following:
\begin{align*}
&\|\mathcal{T}(\bar{X}_1)[k]- \mathcal{T}(\bar{X}_2)[k]\| \! = \!\|\!\sum_{\theta \in \Theta}({\hat{X}_{\theta,1}[k] - \hat{X}_{\theta,2}[k])\mathbb{P}(\theta)}\| \\
& \leq \sum_{\theta \in \Theta}\!{(\|Q(\theta)\|\|B(\theta)L(\theta)\|\left(\sum_{s=0}^{\infty}\|H(\theta)\|^s\right)^2\mathbb{P}(\theta)} \\ & \qquad \times \|\bar{X}_1 - \bar{X}_2\|_{\infty}  = \upsilon \|\bar{X}_1 - \bar{X}_2\|_{\infty},
\end{align*} 
where the last equality follows as a result of Assumption \ref{As4}. Taking supremum over $k$ in the LHS of the above equation, we get that $\mathcal{T}(\cdot)$ is a contraction. Further, since the metric space $(\mathcal{X},||\cdot||_{\infty})$ is complete \cite{folland1999real}, using Banach's fixed point theorem, we deduce that $\mathcal{T}(\bar{X})$ has a unique fixed point in $\mathcal{X}$.

\item We define a new operator $\check{\mathcal{T}}: \mathbb{R}^{n\times n} \rightarrow \mathbb{R}^{n\times n}$ as:
\begin{align*}
\check{\mathcal{T}}_{\theta}(\mathscr{E}) & := H(\theta) + B(\theta)\mathscr{E}(\theta)\sum_{\alpha=0}^{\infty}{(H(\theta)^{\alpha})^\top Q(\theta)\mathscr{E}^{\alpha+1}},\\
\check{\mathcal{T}}(\mathscr{E}) & := \sum_{\theta \in \Theta} \check{\mathcal{T}}_{\theta}(\mathscr{E}) \mathbb{P}(\theta),
\end{align*} 
The proof of the existence of fixed point $\mathscr{E}^*$ of $\check{\mathcal{T}}$,  follows in a similar manner as in \cite{uz2020reinforcement}, which gives
\begin{align*}
	& \small{\|\check{\mathcal{T}}(\mathscr{E}_2)\!\! -\!\! \check{\mathcal{T}}(\mathscr{E}_1)\|} \small{<\! \!\sum_{\theta \in \Theta}\!\!\frac{\|B(\theta)
	L(\theta)\|\|Q(\theta)\|}{(1-\|H(\theta)\|)^2}\|\mathscr{E}_2-\mathscr{E}_1\| \mathbb{P}(\theta)}.
\end{align*}
Consequently, under Assumption \ref{As4}, we have that $\check{\mathcal{T}}$ is a contraction. Using completeness of $\mathcal{E}$ and Banach's fixed point theorem, we indeed have the existence of such an $\mathscr{E}^*$. Finally, from part (i) above, we can recursively construct the unique MF trajectory $\bar{X}^*$ as $\bar{X}^*[k] = (\mathscr{E}^*)^k\bar{X}^*[0]$ with $\hat{X}[0] = \nu_{\theta,0}$.

\end{enumerate}
\end{proof}
\begin{remark}
While Theorem \ref{Th:existUniq}(i) gives us a unique MFE, the linearity of the MF trajectory in (ii) yields a feedback control law, linear in the agent's state and the equilibrium trajectory. This makes the computation of this trajectory tractable, which would otherwise have involved a non-causal infinite sum. Further, as a result of the linear MF trajectory, we will (later) be able to compute an explicit convergence rate between the equilibrium trajectory and the coupling term in \eqref{LQT} in Proposition \ref{Approx_behav}.
\end{remark}

\subsection{Closed-loop System Analysis}
In this subsection, we show that the closed-loop system is stable under the MFE solution.
\begin{lemma}\label{Th:CLanalysis}
Suppose that Assumptions \ref{As1}-\ref{As4} hold. Then, the closed-loop system \eqref{system} is uniformly mean-square stable under the decentralized equilibrium control policy \eqref{optimal_control}, i.e., $\sup_{N \geq 1}\max_{1 \leq i \leq N} \limsup_{T \rightarrow \infty}\frac{1}{T}\mathbb{E}\left[ \sum\limits_{k=0}^{T-1}\left\lVert X^{*,i}[k]\right\rVert^2 \right]\! < \infty$.
\end{lemma}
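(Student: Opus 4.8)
The plan is to treat the closed-loop recursion for $X[k]$ (the rewritten form appearing just before \eqref{X2}) as a \emph{stable} linear system driven by three exogenous inputs---the bounded deterministic tracking signal $-B(\theta)L(\theta)r[k+1]$, the estimation-error feedback $B(\theta)\Pi(\theta)e[k]$, and the process noise $W[k]$---and to propagate their $L^2$ magnitudes through the Hurwitz map $H(\theta)$. First I would unroll the recursion to obtain
\begin{align*}
X[k] = H(\theta)^k X[0] + \sum_{j=0}^{k-1} H(\theta)^{k-1-j}\big(-B(\theta)L(\theta)r[j+1] + B(\theta)\Pi(\theta)e[j] + W[j]\big),
\end{align*}
and invoke the fact from Proposition \ref{optimalTrackControl} that $H(\theta)$ is Hurwitz, so that $\|H(\theta)^m\| \leq c_\theta \rho_\theta^m$ for some $c_\theta > 0$ and $\rho_\theta \in (0,1)$, with $c_\theta,\rho_\theta$ depending only on the finitely many types $\theta \in \Theta$.

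Next I would bound $\|X[k]\|_{L^2} := \mathbb{E}[\|X[k]\|^2]^{1/2}$ term by term using Minkowski's inequality, which conveniently sidesteps the temporal correlations between $e[j]$ and the accumulated noise. Three uniform estimates feed into this: (a) $\|r\|_\infty < \infty$, since $r \in \mathcal{X}$ by Proposition \ref{optimalTrackControl}(2); (b) $\|W[j]\|_{L^2} = \sqrt{\operatorname{tr}(K_W(\theta))} \leq \sqrt{\operatorname{tr}(K_0)}$ by the covariance bound in Section \ref{sec:Prob_Form}; and (c) the estimation-error identity of Lemma \ref{L1}, $\mathbb{E}[\|e[j]\|^2] = h(\Delta[j],A,K_W)$, whose time-average is controlled exactly as in \eqref{temp_costLQ}: under Assumption \ref{AsAbound} the geometric growth $\|A(\theta)\|_F^{2\Delta}$ of the error is dominated by the geometric decay $(1-\alpha)^{\Delta-\tau_u}$ of the tail of the AoI distribution generated by the hard-bandwidth policy, so that $\frac{1}{T}\sum_{j=0}^{T-1}\mathbb{E}[\|e[j]\|^2]$ is bounded uniformly in $T$.

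Finally I would assemble these into the time-averaged second moment. Writing $\|X[k]\|_{L^2} \leq \sum_{j=0}^{k-1} c_\theta \rho_\theta^{k-1-j} b_j + c_\theta \rho_\theta^k \|X[0]\|_{L^2}$, with $b_j$ collecting the three input magnitudes, I would apply Cauchy--Schwarz with the geometric weights $\rho_\theta^{k-1-j}$ to get $\|X[k]\|_{L^2}^2 \lesssim \frac{1}{1-\rho_\theta}\sum_{j} \rho_\theta^{k-1-j} b_j^2$, then exchange the order of summation (Fubini) to reduce $\frac{1}{T}\sum_{k=0}^{T-1}\mathbb{E}[\|X[k]\|^2]$ to a constant multiple of $\frac{1}{T}\sum_j b_j^2$, which is bounded by (a)--(c). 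Since $\Theta$ is finite and every constant above ($c_\theta$, $\rho_\theta$, $\operatorname{tr}(K_0)$, $\tau_u$, the Frobenius-norm bound of Assumption \ref{AsAbound}) is type-uniform, the resulting bound is independent of the agent index $i$ and of $N$, yielding the claimed $\sup_{N}\max_{i}$ estimate. The main obstacle is step (c): unlike the relaxed policy, the hard-bandwidth policy does not keep the AoI deterministically bounded---a repeatedly preempted agent may have arbitrarily large AoI---so the error second moment must be controlled \emph{in the mean} through the interplay of Assumption \ref{AsAbound} with the dropping mechanism, and the temporal correlation between $e[j]$ and the noise driving $X$ must be handled by the $L^2$-triangle inequality rather than an exact covariance recursion.
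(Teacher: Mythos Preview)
Your proposal is correct and follows essentially the same route as the paper's proof: both unroll the closed-loop recursion through the Hurwitz map $H(\theta)$, exploit the geometric decay $\|H(\theta)^m\|\le c_\theta\rho_\theta^m$, and bound the three input channels (tracking signal via $r\in\mathcal{X}$, noise via $K_W\le K_0$, and estimation error via the argument of \eqref{temp_costLQ} under Assumption~\ref{AsAbound}), concluding by finiteness of $\Theta$. The only cosmetic difference is packaging---you use Minkowski's inequality plus a weighted Cauchy--Schwarz/Fubini step on the convolution, whereas the paper applies the crude bound $\|\sum_{j=1}^4 x_j\|^2\le 4\sum_j\|x_j\|^2$ and then handles the cross-terms inside each piece by Cauchy--Schwarz; your version is arguably cleaner but the substance is identical.
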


\begin{proof}
We will follow the proof technique of \cite{moon2014discrete}. Before proceeding, we drop the $*$ from the superscripts for ease of notation. Then, by using \eqref{generic_dyn} and \eqref{optimal_control}, we can write the closed-loop system as
\begin{align}\label{clsys}
    X^i[k+1] =& H(\theta_i) X^i[k] + B(\theta_i)\Pi(\theta_i) e^i[k] \nonumber \\ &- B(\theta_i)L(\theta_i)r^i[k+1] + W^i[k].
\end{align}
Equivalently, \eqref{clsys} can be written as 
\begin{align}\label{CL_system}
    X^i[k] =& H(\theta_i)^k X^i[0] + \sum_{p=0}^{k-1}H(\theta_i)^{k-p-1}B(\theta_i)\Pi(\theta_i) e^i[p] \nonumber \\ &- \sum_{p=0}^{k-1}H(\theta_i)^{k-p-1}B(\theta_i)L(\theta_i) r^i[p+1] \nonumber \\ & + \sum_{p=0}^{k-1}H(\theta_i)^{k-p-1}W^i[p].
\end{align}
Then, from \eqref{CL_system} we have that
\begin{align}
     &\mathbb{E}\left[\|X^i[k]\|^2 \right] \leq 4\underbrace{\mathbb{E} \left[ \left\lVert H(\theta_i)^k X^i[0] \right\rVert^2 \right]}_{(I)}  \nonumber \\
    & +4\underbrace{\mathbb{E} \left[ \left\lVert \sum_{p=0}^{k-1}H(\theta_i)^{k-p-1}B(\theta_i)\Pi(\theta_i) e^i[p] \right\rVert^2 \right]}_{(II)}  \nonumber \\
    & +4\underbrace{\mathbb{E} \left[ \left\lVert \sum_{p=0}^{k-1}H(\theta_i)^{k-p-1}B(\theta_i)L(\theta_i) r^i[p+1] \right\rVert^2 \right]}_{(III)}  \nonumber 
        \end{align}
    \begin{align}\label{Expected_CL_system}
     &+4\underbrace{\mathbb{E} \left[ \left\lVert \sum_{p=0}^{k-1}H(\theta_i)^{k-p-1}W^i[p] \right\rVert^2 \right]}_{(IV)},
\end{align}
where we used the fact that $\|\sum_{i=1}^n x_i\|^2 \leq n \sum_{i=1}^n\|x_i\|^2$. Next, we consider the first term in \eqref{Expected_CL_system}. Since the matrices $H(\theta_i)$ are mean-square stable from Theorem \ref{Th:existUniq}, from \cite[Theorem 3.9]{costa2006discrete}, there exist constants $\varrho(\theta_i) \geq 1$ and $\varepsilon(\theta_i) \in (0,1)$ such that
\begin{align}\label{Term1}
(I) \leq \varrho(\theta_i)\varepsilon(\theta_i)^k tr(\Sigma_x + \nu_{\theta_i,0}\nu_{\theta_i,0}^\top).
\end{align}
Analogously for the fourth term in \eqref{Expected_CL_system}, we have
\begin{align}\label{Term4}
    &(IV)\leq \!\mathbb{E} \! \left[\sum_{p=0}^{k-1} \left\lVert H(\theta_i)^{k-p-1}W^i[p] \right\rVert^2 \right] \!\leq  \frac{\varrho(\theta_i)tr(K_0)}{1-\varepsilon(\theta_i)},
\end{align}
where  the first inequality follows by the independence of $W^i[k]$ and the second one follows from the same reasoning as for the first term above.

Next, we consider the third term in \eqref{Expected_CL_system}:

\begin{align}\label{Term3}
    &(III)\leq \sum_{p=0}^{k-1} \mathbb{E} \left[ \left\lVert H(\theta_i)^{k-p-1}B(\theta_i)L(\theta_i) r^i[p+1] \right\rVert^2 \right] \nonumber \\ & + \sum_{p,q=0,p \ne q}^{k-1} \sqrt{\mathbb{E} \left[ \left\lVert H(\theta_i)^{k-p-1}B(\theta_i)L(\theta_i) r^i[p+1] \right\rVert^2 \right]} \nonumber \\ & \qquad \times \sqrt{\mathbb{E} \left[ \left\lVert H(\theta_i)^{k-q-1}B(\theta_i)L(\theta_i) r^i[q+1] \right\rVert^2 \right]} \nonumber\\ &
    \leq \sum_{p=0}^{k-1} \varrho(\theta_i)\varepsilon(\theta_i)^{k-p-1}\|B(\theta_i)L(\theta_i)\|^2M_{r,i}^2
   \nonumber\\
    &   
   +  \sum_{p,q=0,p \ne q}^{k-1}\varrho(\theta_i)\varepsilon(\theta_i)^{k-p/2-1-q/2}\|B(\theta_i)L(\theta_i)\|^2M_{r,i}^2 \nonumber\\& \leq \frac{\varrho(\theta_i)\|B(\theta_i)L(\theta_i)\|^2M_{r,i}^2}{1-\sqrt{\varepsilon(\theta_i)}} \left[\!\frac{1}{1\!+\!\!\sqrt{\varepsilon(\theta_i)}}\!+\!\frac{1}{1\!-\!\!\sqrt{\varepsilon(\theta_i)}}\!\right]
    \end{align}
where the second equality follows from the fact that $\|\sum_{i=1}^n x_i\|^2 = \sum_{i=1}^n\|x_i\|^2 + \sum_{i,j=1,i \ne j}^n x_ix_j$ and the Cauchy-Schwarz inequality. Further, since $r^i \in \mathcal{X}$, we define $M_{r,i} := \|r^i\|_{\infty}$.

Next, consider the second term in \eqref{Expected_CL_system}. By following similar steps as for the third term, we arrive at the following inequality

\begin{align}\label{Term2}
     & (II) \leq \sum_{p=0}^{k-1} \varrho(\theta_i)\varepsilon(\theta_i)^{k-p-1}\|B(\theta_i)\Pi(\theta_i)\|^2 \mathbb{E}\left[\|e_p\|^2 \right] \nonumber \\ 
    & + \!\!\! \sum_{\substack{p,q=0,\\ p \ne q}}^{k-1} \!\!\!\varrho(\theta_i)\varepsilon(\theta_i)^{k\!-\frac{p}{2}\!-\!1\!-\frac{q}{2}}\|B(\theta_i)\Pi(\theta_i)\|^2\!\! \sqrt{\!\mathbb{E}\left[\|e_p\|^2 \right]\!\mathbb{E}\left[\|e_q\|^2 \right]} \nonumber \\
    & \leq \beta(\theta_i)\varrho(\theta_i)\|B(\theta_i)\Pi(\theta_i)\|^2\left[1+ \left(\frac{1-\epsilon(\theta_i)^{k/2}}{1-\epsilon(\theta_i)^{1/2}}\right)^2\right],
\end{align}
where the last inequality follows by a similar reasoning as in  \eqref{temp_costLQ} and $\beta(\theta_i) = \sum_{r=1}^{\tau_u}tr(A(\theta_i)^{{r-1}^\top}A(\theta_i)^{r-1}K_W(\theta_i)) + \frac{1}{1-\|A(\theta_i)\|_F^2(1-\alpha)}- \frac{1}{\alpha}$.
By summing up the terms in \eqref{Term1}, \eqref{Term4}, \eqref{Term3} and \eqref{Term2}, we have 
\begin{align*}
     \limsup_{T \rightarrow \infty} &\frac{1}{T}\mathbb{E}\left[\sum_{k=0}^{T-1}\|X^{*,i}[k]\|^2  \right] \leq 4\varrho(\theta_i) tr(\Sigma_x + \nu_{\theta_i,0}\nu_{\theta_i,0}^\top) \\ & +   4\beta(\theta_i)\varrho(\theta_i)\|B(\theta_i)\Pi(\theta_i)\|^2\left[1+ \frac{ 1}{(1-\sqrt{\epsilon(\theta_i)})^2}\right]\\
     &+ \frac{8\varrho(\theta_i)\|B(\theta_i)L(\theta_i)\|^2M_{r,i}^2}{(1-\sqrt{\varepsilon(\theta_i)})(1-\varepsilon(\theta_i))} + \frac{4\varrho(\theta_i)tr(K_0)}{1-\varepsilon(\theta_i)}.
\end{align*}
Finally, since $\Theta$ is a finite set, we have the desired result.
\end{proof}

Next, we prove that the equilibrium policy obtained above constitutes an $\epsilon$--Nash equilibrium for the $N$--agent system \eqref{system} with the cost \eqref{LQT}.

\subsection{$\epsilon$--Nash Analysis}
In this subsection, we show that the decentralized equilibrium control policy obtained from the MF analysis is approximately Nash for the finite-agent system.
We start by stating the following Proposition, which shows that the equilibrium MF trajectory approximates the finite-agent state average in the mean-square sense.

\begin{proposition}\label{Approx_behav}
Suppose that Assumptions \ref{As1}-\ref{As4} hold. Then, the equilibrium MF trajectory converges (in the mean-square sense) to the coupling term in \eqref{LQT} with a rate of $\Os \big( \frac{1}{ \min_{\theta \in \Theta} |N_\theta|} \big)$, where $N_{\theta} \subset [N]$ is the subset of agents of type $\theta$, given its cardinality $|N_\theta| > 0, \forall \theta \in \Theta$. More precisely, we have that $\epsilon_T(N) = \Os \big( \frac{1}{ \min_{\theta \in \Theta} | N_\theta |} \big)$, where 
$\epsilon_T(N) = \frac{1}{T} \sum_{k=0}^{T-1}\EE \big[ \big\lVert\frac{1}{N}\sum_{j=1}^N X^{*,j}[k] - \bar{X}^*[k] \big\rVert^2\big].$
\end{proposition}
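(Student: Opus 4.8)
The plan is to group agents by type and compare the empirical per-type average against the mean-field aggregate $\hat{X}_\theta$, controlling two distinct error sources separately. Writing $\PP_N(\theta) = |N_\theta|/N$ and using $\bar{X}^*[k] = \sum_{\theta}\hat{X}_\theta[k]\PP(\theta)$ from \eqref{MF_Op}, I would decompose
\begin{align*}
\frac{1}{N}\sum_{j=1}^N X^{*,j}[k] - \bar{X}^*[k] = \sum_{\theta \in \Theta}\PP_N(\theta)\Big(\frac{1}{|N_\theta|}\sum_{j \in N_\theta}X^{*,j}[k] - \hat{X}_\theta[k]\Big) + \sum_{\theta \in \Theta}\big(\PP_N(\theta) - \PP(\theta)\big)\hat{X}_\theta[k].
\end{align*}
The first sum is a within-type empirical fluctuation; the second is the mismatch between the empirical and the limiting type distribution. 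First I would verify that the finite-agent closed-loop mean matches the mean-field aggregate, i.e.\ $\EE[X^{*,j}[k]] = \hat{X}_\theta[k]$ for every type-$\theta$ agent $j$: this follows by taking expectations in \eqref{CL_system}, using $\EE[e^j[k]] = 0$ and $\EE[W^j[k]] = 0$ from Lemma \ref{L1}, and matching against \eqref{X2} through the backward solution \eqref{bckw_dyn} for $r[\cdot]$. Hence the within-type term is the centered average $\frac{1}{|N_\theta|}\sum_{j \in N_\theta}(X^{*,j}[k] - \EE[X^{*,j}[k]])$.

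The key step is to show that the centered states $X^{*,j}[k] - \EE[X^{*,j}[k]]$ are pairwise uncorrelated across agents of the same type, so that the variance of their average scales like $1/|N_\theta|$. From \eqref{CL_system} the fluctuation of agent $j$ is driven only by $X^j[0] - \nu_{\theta,0}$, by $\{W^j[p]\}$, and by the estimation errors $\{e^j[p]\}$ (the tracking term $r[\cdot]$ is common and deterministic, hence cancels). The crucial observation is that the scheduling policy $\gamma^d$ of Section \ref{subsec:hard_cons_pol} is a threshold rule on the AoI augmented with a uniform random tie-break when $\Omega_k > R_d$; consequently the AoI process $\Delta^j[\cdot]$ is a function only of the tie-break randomness and is \emph{independent of the process noise and the initial conditions} (the weights $\eta^i = \EE[\|e^i\|^2]$ driving the scheduler are deterministic functions of the AoI by Lemma \ref{L1}). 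Since $e^j[p] = \sum_{l=1}^{\Delta^j[p]} A_j^{l-1} W^j[p-l]$ on non-transmission instants and $0$ otherwise, conditioning on the scheduling randomness renders $e^j[p]$ a zero-mean function of $W^j$ alone. Because $X^j[0]$ and $\{W^j[p]\}$ are independent across $j$, every cross term $\EE[(X^{*,j}[k]-\EE[X^{*,j}[k]])^\top(X^{*,j'}[k]-\EE[X^{*,j'}[k]])]$ with $j \ne j'$ vanishes after conditioning on the scheduling randomness and then taking the noise expectation.

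With uncorrelatedness established, I would bound $\EE\big[\big\|\frac{1}{|N_\theta|}\sum_{j\in N_\theta}(X^{*,j}[k]-\EE[X^{*,j}[k]])\big\|^2\big] \le \frac{1}{|N_\theta|^2}\sum_{j\in N_\theta}\EE[\|X^{*,j}[k]\|^2]$, and invoke the uniform mean-square stability bound of Lemma \ref{Th:CLanalysis} (which controls $\limsup_T \frac{1}{T}\sum_k \EE[\|X^{*,j}[k]\|^2]$ uniformly in $j$ and $N$) to conclude that the time-averaged within-type term is $\Os(1/|N_\theta|)$. For the distribution-mismatch term, $\|\hat{X}_\theta\|_\infty < \infty$ since $\hat{X}_\theta \in \mathcal{X}$, while $|\PP_N(\theta)-\PP(\theta)| = \Os(1/N)$ by assumption, so that term is $\Os(1/N^2)$. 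Finally, applying $\|\sum_{\theta}a_\theta\|^2 \le |\Theta|\sum_\theta\|a_\theta\|^2$ and $\|a+b\|^2 \le 2\|a\|^2+2\|b\|^2$ over the finite set $\Theta$, summing the two contributions yields $\epsilon_T(N) = \Os\big(\sum_\theta 1/|N_\theta|\big) + \Os(1/N^2) = \Os(1/\min_{\theta\in\Theta}|N_\theta|)$. The main obstacle is the uncorrelatedness argument: one must ensure that the centralized, bandwidth-limited scheduler does not covertly couple the state fluctuations, which is resolved precisely by the fact that the AoI dynamics (and thus the error statistics governing the scheduler) evolve independently of the realized noise.
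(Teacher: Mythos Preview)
Your proof is correct and rests on the same type-wise decomposition as the paper, but the technique for bounding the within-type fluctuation differs. The paper writes the deviation $\tilde{Z}^*_\theta[k]=\hat{X}^*_\theta[k]-Y^*_\theta[k]$ as a Hurwitz linear recursion driven by the averaged noise $\tilde W_\theta = K^*_2 e_\theta + W_\theta$, shows $tr(\EE[\tilde W_\theta\tilde W_\theta^\top])=\Os(1/|N_\theta|)$, and then invokes ergodicity and the associated Lyapunov equation for the steady-state covariance. You instead exploit directly that the centered states are pairwise uncorrelated across agents and then call on the pre-established uniform second-moment bound of Lemma~\ref{Th:CLanalysis}. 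Your route is more elementary: it avoids the ergodic/Lyapunov step entirely, and by splitting with $\hat X_\theta$ (deterministic) rather than $Y^*_\theta$ in the distribution-mismatch term you even get a tighter $\Os(1/N^2)$ there, though this is dominated anyway.

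The one place where your argument is in fact \emph{more} careful than the paper's is the uncorrelatedness step. The paper simply asserts that $e_{\theta,i}$ are ``independent among $i$'s,'' which is not literally true under the hard-bandwidth policy since the tie-break couples the AoI processes across agents. Your observation---that the entire AoI trajectory $\{\Delta^j[\cdot]\}_{j}$ is measurable with respect to the exogenous scheduling randomness (relaxed-policy coin flips plus uniform tie-breaks) and hence independent of all $\{X^j[0],W^j[\cdot]\}$---is exactly what is needed: conditioning on that sigma-algebra makes each $e^j[p]$ a zero-mean linear functional of $W^j$ alone, so cross-agent covariances vanish upon averaging. That is a genuine clarification of a point the paper glosses over.
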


\begin{proof}
Consider the dynamics process of an agent $i$ of type $\theta$, $X^*_{\theta,i}$ and the mean-field of type $\theta$, $\hat{X}^*_\theta = \EE[X^*_{\theta,i}]$ under the MFE as:
\begin{align*}
    X^*_{\theta,i}[k+1] = & (A(\theta) - B(\theta) \Pi(\theta)) X^*_{\theta,i}[k] \\
    & - B(\theta) K^*_1(\theta) \bar{X}^*[k] \!+ K^*_2(\theta) e_{\theta,i}[k] + W_{\theta,i}[k], \\
    \hat{X}^*_\theta[k+1] \!= & (A(\theta)\! -\! B(\theta) \Pi(\theta))\hat{X}^*_\theta[k] \!- \!B(\theta) K^*_1(\theta) \bar{X}^*[k].
\end{align*}
Here, the gain $K_1^*(\theta)$ can be computed from \eqref{bckw_dyn}, \eqref{clsys} and the linearity of mean-field (Theorem \ref{Th:existUniq}) using Proposition 1 in \cite{zaman2022reinforcement} and $K_2^*(\theta) =  - B(\theta) \Pi(\theta)$. 
We know that under boundedness of $A(\theta)$ (Assumption \ref{AsAbound}), $\EE[e_{\theta,i}[k]e_{\theta,i}[k]^\top]$ is also bounded. Using Theorem \ref{Th:existUniq} we also recall $\bar{X}^*[k] = \sum_{\theta \in \Theta} \mathbb{P}(\theta) \hat{X}^*_\theta[k]$. Now let us define the empirical mean-field of type $\theta$, $Y^*_\theta[k]$ as;
\begin{align*}
    Y^*_\theta[k] := \frac{1}{|N_\theta|} \sum_{i \in N_\theta} X^*_{\theta,i}[k].
\end{align*}
Note that $N = \sum_{\theta \in \Theta} |N_\theta|$. Then, the dynamics of the above process is given as
\begin{align*}
    Y^*_\theta[k+1] = &(A(\theta)-B(\theta) \Pi(\theta)) Y^*_\theta[k]\\
    & - B(\theta) K^*_1(\theta) \bar{X}^*[k] + K^*_2 e_\theta[k] + W_\theta[k],
\end{align*}
where $e_\theta[k] = \frac{1}{|N_\theta|} \sum_{i \in N_\theta} e_{\theta,i}[k]$ and $W_\theta[k] = \frac{1}{|N_\theta|} \sum_{i \in N_\theta} W_{\theta,i}[k]$. Then, it is easy to see that 
\begin{align}
    tr(\EE[e_\theta[k] e^\top_\theta[k]]) & = \Os\bigg(\frac{1}{|N_\theta|}\bigg), \label{eq:cov_e_theta} \\
    tr(\EE[ W_\theta[k] W^\top_\theta[k] ]) & = \Os\bigg(\frac{1}{|N_\theta|}\bigg). \label{eq:cov_W_theta}
\end{align}
due to $e_{\theta,i}$ and $W_{\theta,i}$ being independent among $i$'s and the $\EE[e_{\theta,i}[k]e_{\theta,i}[k]^\top]$ being bounded. Next, let us define the deviation between the true mean-field and the empirical mean-field as $\tilde{Z}^*_\theta[k] := \hat{X}^*_\theta[k] - Y^*_\theta[k].$
The dynamics of this deviation is
\begin{align*}
    \tilde{Z}^*_\theta[k+1] = (A(\theta)-B(\theta) \Pi(\theta)) \tilde{Z}^*_\theta[k] + \tilde{W}_\theta[k],
\end{align*}
where $(A(\theta)-B(\theta) \Pi(\theta))$ is Hurwitz from Proposition \ref{optimalTrackControl} and $\tilde W_\theta[k] = K^*_2(\theta) e_\theta[k] + W_\theta[k]$, such that
\begin{align}\label{CovTemp}
    tr(\EE[\tilde W_\theta[k] \tilde W^\top_\theta[k]]) = \Os\bigg(\frac{1}{|N_\theta|}\bigg),
\end{align}
using \eqref{eq:cov_e_theta}-\eqref{eq:cov_W_theta}. Define $\epsilon^\theta_T(N_\theta)$ as:
\begin{align*}
    \epsilon^\theta_T(N_\theta) := \frac{1}{T} \EE\sum_{k=0}^{T-1} \lVert \tilde{Z}^*_\theta[k] \rVert^2 = \frac{1}{T} \EE\sum_{k=0}^{T-1} \lVert \hat{X}^*_\theta[k] - Y^*_\theta[k] \rVert^2.
\end{align*}
As the Markov chain induced by the stable gain matrix $(A(\theta)-B(\theta) \Pi(\theta))$ is ergodic,
we can deduce
\begin{align*}
    \limsup_{T \rightarrow \infty} \epsilon^\theta_T(N_\theta) = \EE_{\tilde Z_\theta \sim \Ns(0,\Sigma^\infty_\theta)} [\lVert \tilde Z^*_\theta \rVert^2] = \Os\bigg(\frac{1}{|N_\theta|}\bigg).
\end{align*}
This is due to $\EE[\tilde Z^*_\theta[k]] \xrightarrow{k \rightarrow \infty} 0$, \eqref{CovTemp}, and the steady-state covariance matrix of the Markov chain (induced by gain matrix $(A(\theta)-B(\theta) \Pi(\theta))$), denoted by $\Sigma^\infty_\theta$, being the solution to the Lyapunov equation,
\begin{align*}
    \Sigma^\infty_\theta = & \EE[\tilde W_\theta[k] \tilde W^\top_\theta[k]] \\
    & + (A(\theta)-B(\theta) \Pi(\theta)) \Sigma^\infty_\theta (A(\theta)-B(\theta) \Pi(\theta))^\top,
\end{align*}
where $tr(\EE[\tilde W_\theta[k] \tilde W^\top_\theta[k]]) = \Os\big(1/|N_\theta|\big)$ and thus $ \lVert \Sigma^\infty_\theta \rVert = \Os\big(1/|N_\theta|\big)$.
Finally, since
\begin{align*}
    & \epsilon_T(N) = \frac{1}{T} \sum_{k=0}^{T-1}\EE \bigg[ \bigg\lVert\frac{1}{N}\sum_{j=1}^N X^{*}_{\theta,j}[k] - \bar{X}^*[k] \bigg\rVert^2\bigg]\\
    & = \frac{1}{T} \sum_{k=0}^{T-1}\EE \bigg[ \bigg\lVert \sum_{\theta \in \Theta} 
    \sum_{j \in N_\theta}  \frac{X^{*}_{\theta,j}[k]}{N} - \sum_{\theta \in \Theta} \mathbb{P}(\theta) \hat{X}_\theta^*[k] \bigg\rVert^2\bigg] \\
    & = \frac{1}{T} \sum_{k=0}^{T-1}\EE \bigg[ \bigg\lVert \sum_{\theta \in \Theta} \frac{|N_\theta|}{N} Y^*_\theta[k]  - \sum_{\theta \in \Theta} \mathbb{P}(\theta) \hat{X}_\theta^*[k] \bigg\rVert^2\bigg] \\
    & = \frac{1}{T} \sum_{k=0}^{T-1}\EE \bigg[ \bigg\lVert \sum_{\theta \in \Theta} \big( \mathbb{P}_N(\theta) Y^*_\theta[k] -  \mathbb{P}(\theta) \hat{X}_\theta^*[k] \big) \bigg\rVert^2\bigg] \\
    & \leq c^1_\theta \sum_{\theta \in \Theta} \frac{1}{T} \sum_{k=0}^{T-1}\EE \big[ \big\lVert  \mathbb{P}_N(\theta) Y^*_\theta[k] -  \mathbb{P}(\theta) \hat{X}_\theta^*[k]  \big\rVert^2\big] \\
    & \leq c^2_\theta \sum_{\theta \in \Theta} \mathbb{P}(\theta) \frac{1}{T} \sum_{k=0}^{T-1}\EE \big[ \big\lVert Y^*_\theta[k] -  \hat{X}_\theta^*[k]  \big\rVert^2\big] \\
    & \hspace{2.5cm} + c^3_\theta |\mathbb{P}_N(\theta) - \mathbb{P}(\theta)| \frac{1}{T} \sum_{k=0}^{T-1} \EE\big[ \big\lVert Y^*_\theta[k] \rVert^2 \big]
            \end{align*}
    \begin{align*}
    & = \sum_{\theta \in \Theta} \mathbb{P}(\theta) \epsilon^\theta_T(N_\theta) + \Os(\lvert \mathbb{P}_N(\theta) - \mathbb{P}(\theta)\rvert),
\end{align*}
where $c^1_\theta,c^2_\theta,c^3_\theta \geq 0$ are finite constants. The last step is obtained using the definition of $\epsilon^\theta_T(N_\theta)$ and the fact that $\frac{1}{T} \sum_{k=0}^{T-1} \EE\big[ \big\lVert Y^*_\theta[k] \rVert^2 \big]$ is bounded. Since $\epsilon^\theta_T(N_\theta) = \Os \big( 1/ |N_\theta| \big)$ and $|\mathbb{P}_N(\theta) - \mathbb{P}(\theta)| = \Os \big( 1/N \big)$, we can deduce $ \epsilon_T(N) = \Os \big( \frac{1}{ \min_{\theta \in \Theta} |N_\theta|} \big)$.
\end{proof}
Next, we prove the $\epsilon$--Nash property of the equilibrium control policy. To this end, we introduce the following metrics. Also, we supress the arguments of $Q$ and $R$ for brevity.
Let 
\begin{align}\label{CostepsNash1}
&J_i^N(\mu^{*,i}, \mu^{*,-i})= \\& \limsup_{T\rightarrow \infty}\frac{1}{T}\mathbb{E}\left[ \sum_{k=0}^{T-1}\Bigg\|X^{*,i}[k]-\frac{1}{N}\sum_{j=1}^{N}{X^{*,j}[k]}\Bigg\|^2_{Q}\!\!\! +\! \|U^{*,i}[k]\|^2_{R}\right] \nonumber \end{align}
be the cost of agent $i$ when all agents follow the policy under the MFE;
\begin{align}\label{CostepsNash2}
&J(\mu^{*,i}, \bar{X}^*)= \\& \limsup_{T\rightarrow \infty}\frac{1}{T}\mathbb{E}\left[ \sum_{k=0}^{T-1}\|X^{*,i}[k] -\bar{X}^*[k]\|^2_{Q} + \|U^{*,i}[k]\|^2_{R}\right]\nonumber
\end{align}
be the cost of agent $i$ with the consensus term replaced by the equilibrium MF trajectory; and
\begin{align}
&J_i^N(\pi_c^i, \mu^{*,-i})= \\& \limsup_{T\rightarrow \infty}\frac{1}{T}\mathbb{E}\!\left[ \sum_{k=0}^{T-1}\Bigg\|X^{i,\pi_c^i}[k]-\frac{1}{N}\sum_{j=1}^{N}{X^{*,j}[k]}\Bigg\|^2_{Q}\!\!\! + \! \|V^{i}[k]\|^2_{R}\right]\! \nonumber
\end{align}
be the cost of agent $i$ when it deviates from the MF policy while all other agents follow the MF policy.
Here, $X^{i,\pi_c^i}[k]$ is the state of agent $i$ at time $k$ when it chooses a control policy $\pi_c^i \in {\mathcal{M}}_i^{c,con}$. Furthermore, the control action $V^i[k]$ is derived from $\pi^i_c$. 
To make the argument and the result more concrete, we now have the following definition:
\begin{definition}\cite{aggarwal2022linear}\label{epsNashDefn}
The set of control policies $\{\mu^{*,i}, 1 \leq i \leq N\}$ constitute an $\epsilon$--Nash equilibrium with respect to the cost functions $\{J^N_i, 1 \leq i \leq N\}$, if there exists $\epsilon > 0$, such that
\begin{align*}
J^N_i(\mu^{*,i}, \mu^{*,-i}) \leq \inf_{\pi^i \in \mathcal{M}^{c,con}_i} J^N_i(\pi_c^{i}, \mu^{*,-i}) + \epsilon,~~\forall i \in [N].
\end{align*}
\end{definition}
Next, we give the following theorem stating the $\epsilon$-Nash result, i.e., the control laws prescribed by the MFE constitute an $\epsilon$-Nash equilibrium in the finite population case.

\begin{theorem}\label{Th:eps_Nash}
Suppose Assumptions \ref{As1}-\ref{As4} hold. Then the set of decentralized control policies $\{\mu^{*,i}, 1 \leq i \leq N\}$, constitute an $\epsilon$--Nash equilibrium for the $N$--agent LQ-mean field game with bandwidth limits. In particular, we have that
\begin{align}\label{epsNash}
J^N_i\!(\mu^{*,i}, \mu^{*,-i}) \leq \!\!\!& \inf_{\pi_c^i \in \mathcal{M}^{c,con}_i} J^N_i(\pi_c^{i}, \mu^{*,-i})\! + \!\Os\!\! \left(\! \frac{1}{ \sqrt{\min\limits_{\theta \in \Theta} |N_\theta|}}\!\right), 
\end{align}
i.e., $\epsilon = \Os (\! \frac{1}{ \sqrt{\min_{\theta \in \Theta} |N_\theta|}})$ in Definition \ref{epsNashDefn}.
\end{theorem}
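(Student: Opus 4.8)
The plan is to prove \eqref{epsNash} through the usual three-step sandwich between the finite-population costs and the limiting single-agent cost, with Proposition \ref{Approx_behav} supplying the quantitative closeness of the empirical average to the equilibrium trajectory. Writing $\delta_N := 1/\sqrt{\min_{\theta \in \Theta}|N_\theta|}$, I would establish the chain
\begin{align*}
J^N_i(\mu^{*,i},\mu^{*,-i}) &\leq J(\mu^{*,i},\bar{X}^*) + \Os(\delta_N) \\
&\leq J(\pi_c^i,\bar{X}^*) + \Os(\delta_N) \leq J^N_i(\pi_c^i,\mu^{*,-i}) + \Os(\delta_N),
\end{align*}
valid for every admissible deviation $\pi_c^i \in \mathcal{M}^{c,con}_i$, which immediately yields the claim with $\epsilon = \Os(\delta_N)$. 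The middle inequality is the mean-field optimality of $\mu^{*,i}$: since in $J(\cdot,\bar{X}^*)$ the reference $\bar{X}^*$ is a fixed deterministic sequence (Theorem \ref{Th:existUniq}), the objective depends only on agent $i$'s own state and control, so the centralized information in $\mathcal{M}^{c,con}_i$ confers no advantage over the decentralized set $\mathcal{M}^{d,con}$, and Proposition \ref{optimalTrackControl} gives $J(\mu^{*,i},\bar{X}^*) \leq J(\pi_c^i,\bar{X}^*)$.

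For the first inequality, both the tagged agent and its opponents follow the MFE, so the only discrepancy between $J^N_i(\mu^{*,i},\mu^{*,-i})$ and $J(\mu^{*,i},\bar{X}^*)$ is the replacement of the consensus term $m_N[k] := \frac{1}{N}\sum_{j=1}^N X^{*,j}[k]$ by $\bar{X}^*[k]$. Setting $d[k] := X^{*,i}[k]-\bar{X}^*[k]$ and $\delta[k] := \bar{X}^*[k]-m_N[k]$, I would expand
\begin{align*}
\|X^{*,i}[k]-m_N[k]\|^2_{Q} - \|d[k]\|^2_{Q} = 2\, d[k]^\top Q\, \delta[k] + \|\delta[k]\|^2_{Q},
\end{align*}
time-average, take expectations, and apply the Cauchy--Schwarz inequality to the cross term to obtain a bound of the form $2\|Q\|\big(\tfrac{1}{T}\sum_k \EE\|d[k]\|^2\big)^{1/2}\big(\tfrac{1}{T}\sum_k \EE\|\delta[k]\|^2\big)^{1/2}$. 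Here $\tfrac{1}{T}\sum_k \EE\|\delta[k]\|^2 = \epsilon_T(N) = \Os(\delta_N^2)$ by Proposition \ref{Approx_behav}, while $\tfrac{1}{T}\sum_k \EE\|d[k]\|^2 = \Os(1)$ because the closed-loop state is uniformly mean-square stable (Lemma \ref{Th:CLanalysis}) and $\bar{X}^*$ is bounded. Hence the cross term is $\Os(\delta_N)$ and the residual $\|\delta\|^2_{Q}$ term is $\Os(\delta_N^2)$, giving the first inequality; this cross-term estimate is precisely where the square root in the rate originates.

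The third inequality is analogous but requires two extra observations. First, one may restrict attention to deviations with $J^N_i(\pi_c^i,\mu^{*,-i}) < \infty$ (otherwise the inequality is vacuous); finiteness of this cost, together with the mean-square boundedness of the opponents from Lemma \ref{Th:CLanalysis}, bounds $\tfrac{1}{T}\sum_k\EE\|X^{i,\pi_c^i}[k]-\bar{X}^*[k]\|^2 = \Os(1)$, which is the $\Os(1)$ factor needed for the Cauchy--Schwarz step. Second, when agent $i$ deviates the empirical average now contains its own perturbed state, so I would decompose
\begin{align*}
m_N[k]-\bar{X}^*[k] = \Big(\tfrac{1}{N}\sum_{j=1}^N X^{*,j}[k]-\bar{X}^*[k]\Big) + \tfrac{1}{N}\big(X^{i,\pi_c^i}[k]-X^{*,i}[k]\big),
\end{align*}
where the first bracket is controlled by Proposition \ref{Approx_behav} (mean-square $\Os(\delta_N^2)$) and the second is a single-agent contribution of order $\Os(1/N)$, hence negligible in the mean-square sense. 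Repeating the expand-and-Cauchy--Schwarz argument then yields the third inequality with gap $\Os(\delta_N)$. I expect the main obstacle to be exactly this deviation step: establishing the uniform $\Os(1)$ mean-square bound on the deviating trajectory without assuming it a priori, which is what forces the reduction to finite-cost deviations and the careful use of Lemma \ref{Th:CLanalysis}; the remainder is routine Cauchy--Schwarz bookkeeping combined with the already-proven Proposition \ref{Approx_behav}.
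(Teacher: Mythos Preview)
Your proposal is correct and follows essentially the same route as the paper: the same three-step sandwich through the limiting cost $J(\cdot,\bar X^*)$, the same expand-and-Cauchy--Schwarz treatment of the quadratic cross terms, and the same appeal to Proposition~\ref{Approx_behav} and Lemma~\ref{Th:CLanalysis} for the $\Os(\delta_N)$ and $\Os(1)$ factors. Your explicit decomposition $m_N[k]-\bar X^*[k] = \big(\tfrac{1}{N}\sum_j X^{*,j}[k]-\bar X^*[k]\big) + \tfrac{1}{N}\big(X^{i,\pi_c^i}[k]-X^{*,i}[k]\big)$ in the deviation step is in fact a bit more careful than the paper, which leaves that $1/N$ correction implicit in the notation.
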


\begin{proof}
We prove the theorem in two steps. In the first step, we derive an upper bound on $J_i^N(\mu^{*,i}, \mu^{*,-i})- J(\mu^i, \bar{X}^*)$, and in step 2, on $J(\mu^i, \bar{X}^*) - J_i^N(\pi_c^i, \mu^{*,-i})$. Finally, we combine these bounds to arrive at \eqref{epsNash}.

\noindent \textbf{Step 1:} Consider the following:
\begin{align}
&J_i^N(\mu^{*,i}, \mu^{*,-i})- J(\mu^i, \bar{X}^*) \leq J_i^N(\mu^{*,i}, \mu^{*,-i})- J(\mu^{*,i}, \bar{X}^*) \nonumber \\
& = \limsup_{T\rightarrow \infty}\frac{1}{T}\mathbb{E}\left[ \sum_{k=0}^{T-1}\Bigg\|\bar{X}^*[k]-\frac{1}{N}\sum_{j=1}^{N}{X^{*,j}[k]}\Bigg\|^2_{Q}\right. \nonumber \\ &\left. ~~~~+ 2 \left( X^{*,i}[k]-\bar{X}^*[k]\right)^\top Q\left( \bar{X}^*[k]-\frac{1}{N}\sum_{j=1}^{N}{X^{*,j}[k]}\right) \right] \nonumber\\
& \leq \limsup_{T\rightarrow \infty}\|Q\|\epsilon_T(N) \nonumber \\& + \limsup_{T\rightarrow \infty}2\|Q\|\sqrt{\epsilon_T(N)\mathbb{E}\left[ \frac{1}{T}\sum_{k=0}^{T-1}\|X^{*,i}[k]-\bar{X}^*[k]\|^2 \right]} \label{X3}
\end{align}
where \eqref{X3} follows from the Cauchy-Schwarz inequality.

\noindent \textbf{Step 2: }Consider the following:
\begin{align*}
&J_i^N(\pi_c^i, \mu^{*,-i})= \\ & \limsup_{T\rightarrow \infty}\frac{1}{T}\mathbb{E}\left[ \sum_{k=0}^{T-1}\Bigg\|X^{i,\pi_c^i}[k]-\frac{1}{N}\sum_{j=1}^{N}{X^{*,j}[k]}\Bigg\|^2_{Q} \!\! + \|V^{i}[k]\|^2_{R}\right] \\
& = J(\pi_c^i, \bar{X}^*)\! - \! \limsup_{T\rightarrow \infty}\frac{1}{T}\mathbb{E}\! \left[ \sum_{k=0}^{T-1} \Bigg\|\frac{1}{N}\sum_{j=1}^{N}{X^{*,j}[k]}\! -\! \bar{X}^{*,i}[k]\Bigg\|^2_Q \right. \\ & \left. + 2 \left(\bar{X}^*[k] - X^{i,\pi_c^i}[k] \right)^\top Q \left( \bar{X}^*[k] -\frac{1}{N}\sum_{j=1}^{N}{X^{*,j}[k]}\right) \right] \\
& \geq J(\mu^i, \bar{X}^*)\! -\! \limsup_{T\rightarrow \infty}\frac{1}{T}\mathbb{E}\! \left[ \sum_{k=0}^{T-1} \Bigg\|\frac{1}{N}\sum_{j=1}^{N}{X^{*,j}[k]}\! - \! \bar{X}^{*,i}[k]\Bigg\|^2_Q \right. \\ & \left. + 2 \left(\bar{X}^*[k] - X^{i,\pi_c^i}[k] \right)^\top Q \left( \bar{X}^*[k] -\frac{1}{N}\sum_{j=1}^{N}{X^{*,j,\pi_c^j}[k]}\right) \right].
\end{align*} 
Finally, using Cauchy-Schwarz inequality in the same manner as for \eqref{X3}, we get
\begin{align*}
&J(\mu^i, \bar{X}^*) - J_i^N(\pi_c^i, \mu^{*,-i}) \leq \limsup_{T\rightarrow \infty}\|Q\|\epsilon_T(N) \nonumber \\& + \limsup_{T\rightarrow \infty}2\|Q\|\sqrt{\epsilon_T(N)\mathbb{E}\left[ \frac{1}{T}\sum_{k=0}^{T-1}\|X^{i,\pi_c^i}[k]-\bar{X}^*[k]\|^2 \right]}. 
\end{align*}

By using Lemma \ref{Th:CLanalysis}, we have that there exist $M_2,~T_2 >0$, such that $\frac{1}{T}\mathbb{E}\left[\sum_{k=0}^{T-1}\|X^{*,i}[k]\|^2 \right] < M_2,~ \forall T > T_2$. Further, since $\bar{X}^* \in \mathcal{X}$, there exist $M_3, ~T_3>0$, such that $\frac{1}{T}\mathbb{E} \left[ \sum_{k=0}^{T-1}\|\bar{X}^{*,i}[k]\|^2 \right] < M_3, ~\forall T>T_3$. Finally, since $\mathcal{M}^{d,con}_i \subseteq \mathcal{M}_i^{c,con}$,  $\inf_{\pi_c^i \in \mathcal{M}_i^{c,con}}J_i^N(\pi_c^i, \mu^{*,-i}) \leq J_i^N(\mu^{*,i}, \mu^{*,-i})$, we may consider $\pi_c^i \in \mathcal{M}_i^{c,con}$ such that there exist $M_4,~T_4 >0$, with the property that $\frac{1}{T}\mathbb{E}\left[ \sum_{k=0}^{T-1}\|X^{i, \pi_c^i}[k]\|^2 \right] < M_4,~\forall T>T_4$. Choose $T_5 = \max\{T_1,T_2,T_3,T_4\}$, and let $T > T_5$. Then, we have that

\begin{align*}
    & J^N_i(\mu^{*,i}, \mu^{*,-i}) \leq \!\!\! \inf_{\pi_c^i \in \mathcal{M}^c_i}\! J^N_i(\pi_c^{i}, \mu^{*,-i})\! + \! 2\limsup_{T\rightarrow \infty}\|Q\|\epsilon_T(N) \nonumber \\ & + \limsup_{T\rightarrow \infty}2\sqrt{2}\|Q\| \times \nonumber \\ & \qquad \sqrt{\frac{\epsilon_T(N)}{T}\left(\mathbb{E}\left[ \sum_{k=0}^{T-1}\|X^{*,i}[k]\|^2 \right] + \mathbb{E}\left[ \sum_{k=0}^{T-1}\|\bar{X}^*[k]\|^2 \right]\right)}\nonumber \\ & + \limsup_{T\rightarrow \infty}2\sqrt{2}\|Q\| \times \nonumber \\ & ~~ \sqrt{\frac{\epsilon_T(N)}{T}\left(\mathbb{E}\left[ \sum_{k=0}^{T-1}\|X^{i,\pi_c^i}[k]\|^2 \right] + \mathbb{E}\left[ \sum_{k=0}^{T-1}\|\bar{X}^*[k]\|^2 \right]\right)}.
\end{align*}
By defining $\epsilon:= \mathcal{O}(\limsup_{T\rightarrow \infty} \sqrt{\epsilon_T(N)})$, and using Proposition \ref{Approx_behav}, we get \eqref{epsNash}. The proof is thus complete.
\end{proof}

\section{Simulations} \label{sec:sims}
In this section, we provide an empirical analysis of the theoretical results. We simulate an $N$-agent system with scalar dynamics. We first point out a key observation which may seem not obvious. Consider the set of $A$ and $K_W$ values for $7$ agents as $A \!\!=\! \!\{0.1,0.3,0.5,0.7,1.0,1.3,1.4,1.5\}$ and $ K_W = \{3.0$, $5.0, 1.0, 2.0, 4.0, 0.1, 2.0\}$. It may be tempting to think that the values for the $\tau^i$'s would be lower for the unstable agents than for the stable ones.  
But the values of $\tau^i$ obtained were $\tau = \{1,0,1,1,1,2,1\}$ for $R_d = 4$. This shows, for instance, that agent 6 is not connected over the network for a longer period than a more stable agent 5. But this should be expected because $\mathbb{E}[\|e^i[k]\|^2]$ for all $i$ depends not only on $A_i$ but also on $K_{W^i}$. Thus, a higher $\tau^6$ is explained by a low value of $K_{W^6}$.

Next, Fig. \ref{TauVsRd} shows a variation of $\tau^i$ and $R_d$ for $N=6$.
The agent parameters used are $A = \{0.1,0
299,0.498,0.697, $ $0.896,1.095\}$ and $K_W = \{0.3, 0.9,$ $ 1.5,2.5, 4, 4.5\}$.
It can be seen that with more restricted bandwidth constraints, the agents are not connected to their respective controllers for longer intervals. Further, it can also be seen from the figure that for small values of $R_d$, stability of a system becomes more important, while for higher $R_d$ values, the noise covariance becomes a dominating factor. This observation is aligned with intuition and can also be inferred from the equations for the error covariance and the cost function.

\begin{figure}[h!]
	\centering
	\includegraphics[width=\linewidth]{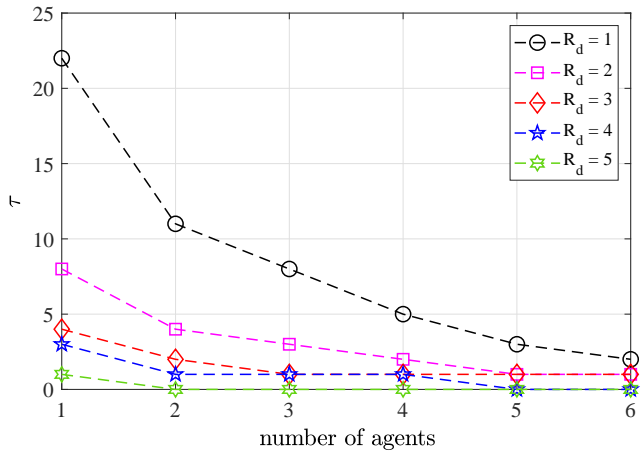}
	\vspace{-0.1cm}
	\caption{\small{The plot shows the variation of $\tau$ with $R_d$ for $N=6$. 
	}}
	\label{TauVsRd}
\end{figure}

Next, we empirically demonstrate the asymptotic optimality of the hard-bandwidth policy as proved in Theorem \ref{Th:Asymptotic_optimality}. We simulate the behavior of WAoI under: 1) the randomized scheduling policy (Section \ref{subsec:rand_sched_pol}), and 2) the bandwidth constrained scheduling policy (Section \ref{subsec:hard_cons_pol}) with a bandwidth of $R_d = 0.6 N$. We plot the average WAoI of both these policies from $N =5$ till $N =1500$, 
and a simulation time of $T= 100000$ seconds. Fig. \ref{Fig:Asympt_opt} shows that the WAoI for both asymptotically approach each other as $N$ increases. This means that as the number of agents increases, the approximation error between the bandwidth constrained and the soft-constrained problems decreases to zero.

\begin{figure}[h!]
     \centering
 	\includegraphics[width=\linewidth]{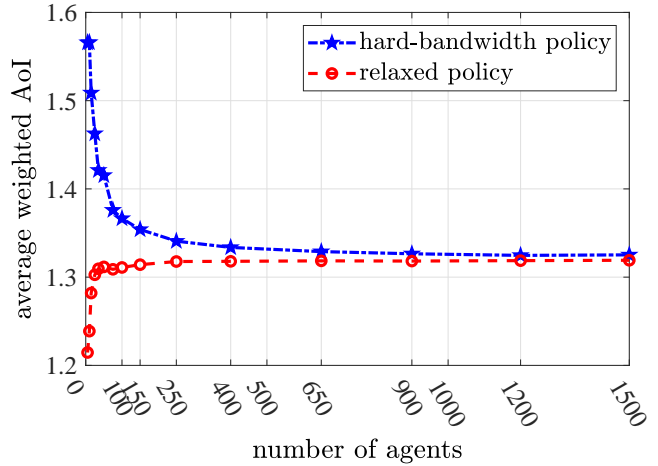}
 	\caption{The plot shows the performance comparison of the optimal policy for the Soft-constrained Problem \ref{Problem4} and Bandwidth-constrained Problem \ref{Problem2} for $R_d = 0.6 N$.}
 	\label{Fig:Asympt_opt}
  	\vspace{-0.5cm}
 \end{figure}
 Finally, we empirically evaluate the behavior of $N=800$ agents under the MFE policy \eqref{optimal_control}. To this end, we plot the average cost \eqref{CostepsNash1} of an agent as a function of the proportion of available downlink bandwidth in Fig. \ref{Fig:Avg_cost_Vs_Bandwidth_ratio} for $A = 1.15$, $K_W = 2$, $T= 500$ seconds, and a single type. We note that the plotted cost is also equivalent to the average cost over all agents since the multi-agent system is homogeneous. The
figure shows a box plot depicting the median (red line) and spread (box) of the average cost per agent over 100 runs for each value of $\alpha$. From the figure, we see that the average cost decreases as the available bandwidth increases, aligned with intuition.
 \begin{figure}[h!]
     \centering
 	\includegraphics[width=\linewidth]{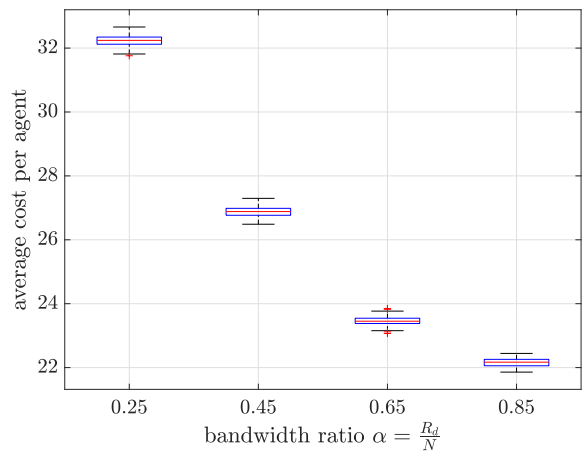}
 	\vspace{-0.35cm}
 	\caption{Average cost per agent when bandwidth ratio is equal to $\alpha = \frac{R_d}{N} = 0.25, 0.45, 0.65$ and $0.85$.}
 	\label{Fig:Avg_cost_Vs_Bandwidth_ratio}
 	\vspace{-0.5cm}
 \end{figure}

 

\section{Conclusion}\label{sec:conclusion}
In this paper, we have studied an $N+1$ player game problem in which $N$ agents aim to achieve consensus while the BS schedules information over a bandwidth-constrained network to these agents using a WAoI metric. To solve the scheduling problem, we have first formulated a relaxed version of the problem and constructed a stationary randomized optimal scheduling policy for it. We have then provided a scheduling policy for the hard bandwidth constrained problem inspired by the optimal policy of the relaxed problem. Finally, we have shown that the former is asymptotically optimal to the latter as $N \rightarrow \infty$. Next, we have solved the game problem between the $N$--agents using the mean-field game approach and employing the obtained scheduling policy. By considering a limiting system as $N \rightarrow \infty$, we have first proved the existence and uniqueness of the mean-field equilibrium and then have shown the $\epsilon$--Nash property of the equilibrium solution for the finite-agent system. Finally, we have validated the theoretical results using empirical simulations for both the proposed scheduling and the control policies. Future research directions involve extending the formulation to include erasure channels (with probabilistic transmission from scheduler to the decoders) and channels with random transmission delays.

\bibliographystyle{IEEEtran}
\bibliography{references}

\end{document}